\renewcommand\@formatdoi[1]{\ignorespaces}
\newcommand{\Adj}{\text{Adj}}
\newcommand{\D}{\mathsf D}
\newcommand{\R}{\mathsf R}
\newcommand{\Prob}{\mathbb P}
\newcommand{\RN}[1]{%
  \textup{\uppercase\expandafter{\romannumeral#1}}%
}
\newcommand{\subhead}[1]{\vspace {0.04in}\noindent{\textbf{#1.}}}
\newtheorem{exmp}{Example}[section]
\newcommand{\DP}{differential privacy }
\newcommand\norm[1]{\left\lVert#1\right\rVert}
\newtheorem{thm}{Theorem}[section]
\newtheorem{lem}[thm]{Lemma}
\newtheorem{coro}[thm]{Corollary}
\newtheorem{defn}{Definition}[section]
\author{Meisam Mohammady{$^\dagger$}, Shangyu Xie{$^\ddagger$}, Yuan Hong$^\ddagger$, Mengyuan Zhang$^\S$}
\author{Lingyu Wang$^\dagger$, Makan Pourzandi$^\S$, and Mourad Debbabi$^\dagger$}
\affiliation{%
  \institution{$^\dagger$Concordia University, $^\ddagger$Illinois Institute of Technology, $^\S$Ericsson Research Security}
}
\email{m_ohamma@encs.concordia.ca, sxie14@hawk.iit.edu, yuan.hong@iit.edu,  mengyuan.zhang@ericsson.com}
\email{wang@ciise.concordia.ca, makan.pourzandi@ericsson.com, debbabi@encs.concordia.ca}
\begin{document}

\fancyhead{}

 \title{
     R\texorpdfstring{$^2$}{2}DP: A Universal and Automated Approach to Optimizing the Randomization Mechanisms of Differential Privacy for Utility Metrics with No Known Optimal Distributions}
\thanks{\small Appears in Proceedings of the 27th ACM Conference on Computer and Communications Security, 2020}

\begin{abstract}
Differential privacy (DP) has emerged as a de facto standard privacy
notion for a wide range of applications. Since the meaning of data
utility in different applications may vastly differ, a key challenge
is to find the optimal randomization mechanism, i.e., the distribution
and its parameters, for a given utility metric. Existing works have
identified the optimal distributions in some special cases, while
leaving all other utility metrics (e.g., usefulness and graph
distance) as open problems.  Since existing works mostly rely on
manual analysis to examine the search space of all distributions, it
would be an expensive process to repeat such efforts for each utility metric. To address such deficiency, we propose a novel
approach that can automatically optimize different utility metrics
found in diverse applications under a common framework. Our key idea that, by regarding the
variance of the injected noise itself as a random variable, a two-fold
distribution may approximately cover the search space of all
distributions. Therefore, we can automatically find distributions in this search
space to optimize different utility metrics in a similar manner,
simply by optimizing the parameters of the two-fold distribution.
Specifically, we define a universal framework, namely,
\textit{r}andomizing the \textit{r}andomization mechanism of
\textit{d}ifferential \textit{p}rivacy (R$^2$DP), and we formally
analyze its privacy and utility. Our experiments show that R$^2$DP can
provide better results than the baseline distribution (Laplace) for several utility metrics with no known optimal distributions, whereas our results asymptotically approach to the optimality for utility metrics
having known optimal distributions. As a side benefit, the added degree
of freedom introduced by the two-fold distribution allows R$^2$DP to
accommodate the preferences of both data owners and recipients.

\end{abstract}



\keywords{Differential Privacy; R$^2$DP Mechanism; Utility Metrics}

\maketitle
\thispagestyle{empty}
\section{Introduction}
Significant amounts of individual information are being collected and analyzed today through a wide variety of applications across different industries~\cite{aarons2012dynamic}. Differential privacy has been widely recognized as the de facto standard notion~\cite{Dwork08differentialprivacy:,10.1007/11681878_14} in protecting individuals' privacy during such data collection and analysis. On the other hand, since the privacy constraints (e.g., the
degree of randomization) imposed by differential privacy may render
the released data less useful for analysis, the fundamental trade-off
between privacy and utility (i.e., analysis accuracy) has
attracted significant attention in various settings~\cite{10.1007/11681878_14,erlingsson2014rappor,lee2013pibox,NissimRS07,RastogiN10,DworkNRRV09}.

In this context, a key issue is to identify the optimal randomization
mechanisms (i.e., distributions and their
parameters)~\cite{Ghosh1,Gupte1,6875258,DBLP:journals/jstsp/GengKOV15,DBLP:conf/icml/BalleW18,DBLP:journals/corr/abs-1809-10224, Hardt:2010:GDP:1806689.1806786,5670945}).  While optimizing the
parameters of a given distribution can be easily automated,
identifying the optimal distribution for different utility metrics is
more challenging, and typically requires manual analysis to examine
the search space of all distributions. In fact, recent
studies~\cite{Ghosh1,Gupte1,6875258,DBLP:journals/jstsp/GengKOV15,DBLP:conf/icml/BalleW18,DBLP:journals/corr/abs-1809-10224,
  Hardt:2010:GDP:1806689.1806786,5670945} have only identified the
optimal randomization mechanisms for a limited number of cases with
specific utility metrics and queries. For instance, Ghosh et
al.~\cite{Ghosh1,Gupte1}
showed that an optimal randomization mechanism (adding a specific
class of \textit{geometric noise}) can be used to preserve
differential privacy under the class of negative expected loss utility
metrics for a single counting query. Subsequently, Geng et
al.~\cite{6875258} showed that, under the $\ell_1$ and $\ell_2$ norms,
the widely used standard Laplace mechanism is asymptotically optimal as $\epsilon \rightarrow 0$, whereas the Staircase
mechanism (which can be viewed as a geometric mixture of uniform
probability distributions) performs exponentially better than the
Laplace mechanism in case of weaker privacy guarantees (a
comprehensive literature review will be given in
Section~\ref{rel}).

However, this has left the optimal distributions of many other utility
metrics as open problems, e.g., usefulness (for machine learning
applications~\cite{Blum:2008:LTA:1374376.1374464}), entropy-based
measures (for signal processing
applications~\cite{cohen1993majorization,cdcc}, and semi-supervised learning~\cite{grandvalet2005semi}), and graph distance
metrics (for social network
applications~\cite{10.1007/978-3-642-36594-2_26}).  As shown in the
works of Ghosh et
al.~\cite{Ghosh1,Gupte1}
and Geng et al.~\cite{6875258}, different utility metrics will likely
lead to different optimal distributions. Moreover, since those
existing works mostly rely on manual analysis to examine the search
space of all distributions, it would be an expensive process to repeat
such efforts for each utility metric.  Consequently, many existing
works simply employ a well-known distribution (e.g., Laplace noise
with constant scale parameter or Gaussian noise with constant
variance) without worrying about its optimality. Unfortunately, as our
experimental results will show (Section~\ref{exp:sec}), choosing a non-optimal distribution (even with its
parameters optimized) may lead to rather poor utility.

\vspace{-0.1in}
\begin{figure}[ht]
\centering
\includegraphics[width=1\linewidth,viewport=82 2 920 563,clip]{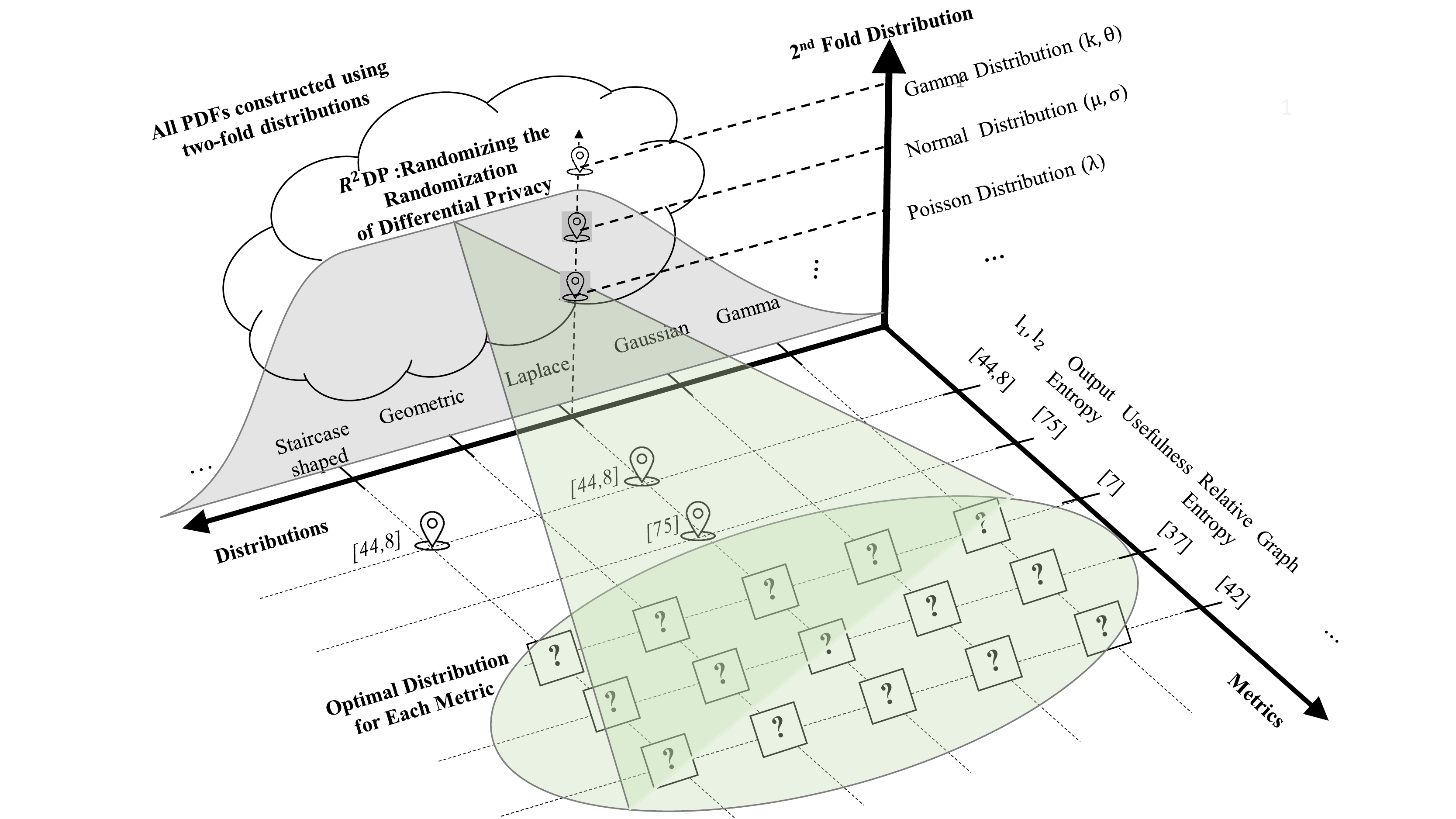}
\vspace{-0.05in}
\caption{R$^2$DP can automatically optimize different utility metrics which have no known optimal distributions.}\vspace{-0.1in}
\label{Fig:2}
\end{figure}

\subsection{R$^2$DP: A Universal Framework}
Our key observation is the following. To build a universal framework
that can automatically find the optimal distribution in the search
space of all distributions, we would need a formulation to link the differential privacy guarantee to the parameters of different distributions (e.g., in Laplace mechanism, $\epsilon$ is proportionally related to the inverse of variance). However, it is a known fact that such a formulation varies for each distribution, which explains why existing works have to rely on manual efforts to cover the search space of all distributions, and it also becomes the main obstacle to finding a universal solution that works for all utility metrics employed in different applications.

As depicted in Figure~\ref{Fig:2}, our key idea is that, although it is not possible to directly cover the search space of all distributions in an
automated fashion, we can indirectly do so based on the following known fact in probability
theory, i.e., a two-fold
randomization over the exponential class of distributions may yield
many other distributions to approximately cover the search
space~\cite{Charalambides:2005:CMD:1196346}. Since this class of distributions are all originated from one
of the exponential family distributions, their \DP guarantee will
become a unique function of the parameters of the second fold distribution. Therefore, these parameters can be used to automatically
optimize utility w.r.t. different utility metrics through a universal
framework, namely, \emph{r}andomizing the \emph{r}andomization
mechanism in \emph{d}ifferential \emph{p}rivacy
(R$^2$DP). Furthermore, the two-fold distribution introduces an added
degree of freedom, which allows R$^2$DP to incorporate the
requirements of both data owners and data recipients.

\subsection{Contributions}

Specifically, we make the following contributions:
\begin{enumerate}
    \item We define the R$^2$DP framework with several unique
      benefits. First, it provides the first universal solution that
      is applicable to different utility metrics, which makes it an
      appealing solution for applications whose utility metrics have
      no known optimal distributions (e.g.,
      ~\cite{Blum:2008:LTA:1374376.1374464,cohen1993majorization,cdcc,10.1007/978-3-642-36594-2_26}). Second,
      unlike most existing works which rely on manual
      analysis~\cite{Ghosh1,Gupte1},
      R$^2$DP can automatically identify a distribution that yields
      near-optimal utility, and hence is more practical for emerging
      applications. Third, R$^2$DP can incorporate the requirements of both data owners and data recipients, which addresses a
      practical limitation of most existing approaches, i.e., only the privacy budget $\epsilon$ is considered in designing the differentially private mechanisms.
      
    \item We formally benchmark R$^2$DP under the well-studied Laplace
      mechanism. We tackle several key challenges related to
      the two-fold distribution in R$^2$DP.
      We then show that this mechanism yields a class of log-convex
      distributions for which the differential privacy guarantee can
      globally be given in terms of the PDFs' parameters. We also show
      that it can generate near-optimal results w.r.t. a
      variety of utility metrics whose optimality is known, e.g., Staircase-shape distribution for large $\epsilon$ and Laplace itself for small $\epsilon$ \cite{6875258}.
\item We evaluate R$^2$DP using six different utility metrics, both
  numerically and experimentally on real data, using both statistical
  queries (e.g., count and average), and data analytics
  applications (e.g., machine learning and social network). The
  experimental results demonstrate that R$^2$DP can significantly
  increase the utility for those utility metrics with no known optimal
  distributions (compared to the baseline Laplace distribution). We
  also evaluate the optimality of R$^2$DP using utility metrics whose
  optimal distributions are known (e.g., Staircase-shape for $\ell_1$ and $\ell_2$ norms \cite{6875258}) and our results confirm that
  R$^2$DP can generate near-optimal results.

\item We discuss the potential of adapting R$^2$DP to improve a variety of other applications related to differential privacy.  
  
\end{enumerate}

The rest of the paper is organized as follows. Section~\ref{sec:model} provides some related background. Section~\ref{sec4} defines the R$^2$DP framework. Section~\ref{sec5} formally studies the \DP guarantee and the utility of R$^2$DP. Section~\ref{exp:sec} presents the experiments. Section~\ref{relat} reviews the related work, and
Section~\ref{sec:conclusion} concludes the paper.

\section{Preliminaries} \label{sec:model}
We review some background on \DP for the theoretical foundations of the R$^2$DP framework. 

\subsection{Differential Privacy}\label{section: DP def}
\label{def: differential privacy original}

We follow the standard definitions of
$\epsilon$-\DP~\cite{DworkNRRV09,NissimRS07}. Let $\D$ be a dataset of
interest and $d$, $d'$ be two adjacent subsets of $\D$ meaning that we
can obtain $d'$ from $d$ simply by adding or subtracting the data of
one individual. A randomization mechanism $\mathcal M: \D \times \Omega
\to \R$ which is $\epsilon$-differentially private, necessarily
randomizes its output in such a way that for all $S \subset \R$, 
\begin{align}\label{eq: standard def approximate DP original}
\Prob(\mathcal M(d) \in S) \leq e^{\epsilon} \Prob(\mathcal M(d') \in S) \;\; 
\end{align}

If the inequality fails, then a leakage ($\epsilon$ breach) takes place, which means the difference between the prior
distribution and posterior one is tangible. We recall below a basic
mechanism that can be used to answer queries in an $\epsilon$-differentially private way. We will only be concerned with queries that return numerical answers, i.e., a
query is a mapping $q: \D \to \mathbb R$, where $\mathbb R$ is a set of real numbers. The following sensitivity concept plays an important role in the design of differentially private
mechanisms~\cite{10.1007/11681878_14}.
\begin{defn}	\label{defn: sensitivity}
The sensitivity of a query $q: \D \to \R$ is defined as
$\Delta q= \max_{d,d':\Adj(d,d')} |q(d) - q(d')|$~\cite{DworkNRRV09,NissimRS07}.
\end{defn}
\label{section: basic mech}

\subsection{Laplace Mechanism} The Laplace mechanism~\cite{10.1007/11681878_14}
    modifies a numerical query result by adding zero-mean noise (denoted as $Lap(b)$) distributed according to a Laplace distribution with mean zero and scale parameter $b$. It has density
    $p(x;b)=\frac{1}{2b}exp(-\frac{|x|}{b})$ and variance $2b^2$.

\begin{thm}	\label{thm: Lap mech}
Let $q: \D \to \mathbb R$ be a query , $\epsilon>0$.
Then the  mechanism $\mathcal M_q: \D \times \Omega \to \mathbb R$ 
defined by $\mathcal M_q(d) = q(d) + w$, with $w \sim Lap(b)$, 
                                                                                                                                                                                                                                                where $b \geq \frac{\Delta q}{\epsilon}$,
is $\epsilon$-differentially private~\cite{10.1007/11681878_14}.
\end{thm}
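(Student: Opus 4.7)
The plan is to prove the privacy guarantee by controlling the pointwise ratio of the output densities of $\mathcal M_q(d)$ and $\mathcal M_q(d')$, and then integrating over $S$. Since the noise $w \sim Lap(b)$ is additive, the output $\mathcal M_q(d)$ admits the density $p_d(y) = \frac{1}{2b}\exp\bigl(-|y - q(d)|/b\bigr)$, and analogously $p_{d'}$ for $d'$. If I can establish a uniform pointwise bound $p_d(y)/p_{d'}(y) \leq e^\epsilon$ for every $y \in \mathbb R$, then integrating $p_d(y) \leq e^\epsilon\, p_{d'}(y)$ over any measurable $S \subset \mathbb R$ yields the $\epsilon$-DP inequality~(\ref{eq: standard def approximate DP original}) directly.

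The key computation is to write the density ratio as
\[
\frac{p_d(y)}{p_{d'}(y)} \;=\; \exp\!\left(\frac{|y - q(d')| - |y - q(d)|}{b}\right),
\]
and then bound the exponent using the reverse triangle inequality $\bigl||y - q(d')| - |y - q(d)|\bigr| \leq |q(d) - q(d')|$. Invoking Definition~\ref{defn: sensitivity}, the right-hand side is at most $\Delta q$; combined with the hypothesis $b \geq \Delta q/\epsilon$, this gives $p_d(y)/p_{d'}(y) \leq \exp(\Delta q / b) \leq e^\epsilon$ at every $y$, which is exactly the pointwise bound required.

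Integration over $S$ then closes the argument. There is no substantive obstacle here: this is a textbook calculation whose core idea is that the Laplace density depends on $y$ only through the distance $|y - q(d)|$, so the triangle inequality converts the worst-case difference of two such distances into precisely the sensitivity $\Delta q$. The only mild subtlety worth noting in the full write-up is the transfer from a densitywise inequality to a probability inequality for arbitrary measurable $S$, which follows from the monotonicity of Lebesgue integration; and the observation that the bound is tight in the limit $b = \Delta q/\epsilon$, showing that this choice of scale is the smallest admissible under the argument.
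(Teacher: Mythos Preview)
Your argument is correct and is the standard proof of this classical result. The paper does not give its own proof of Theorem~\ref{thm: Lap mech}; it is stated as a preliminary with a citation to~\cite{10.1007/11681878_14}. That said, the identical density-ratio-plus-triangle-inequality computation appears verbatim in the paper's proof of Theorem~\ref{degenerateDP} (the degenerate R$^2$DP case, which reduces to the ordinary Laplace mechanism), so your approach matches what the authors themselves use when they need this argument.
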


\subsection{Utility Metrics}
\label{sec:metric}

\subhead{$\ell_p$ Metrics}
In penalized regression, ``$\ell_p$ penalty'' refer to penalizing the $\ell_p$ norm of a solution's vector of parameter values (i.e., the sum of its absolute values, or its Euclidean length)~\cite{10.5555/21713}. In our privacy-utility setting, the $\ell_p$ utility metric is defined as follows.
\begin{defn}($\ell_p$).
\label{defn:lp}
For a database mechanism $\mathcal M_q(D)$ the $\ell_p$ utility metric is defined as $\mathbb E(|\mathcal M_q(D)-q(D)|^p)^{1/p}$.
\end{defn}

\subhead{Usefulness}
Following Blum et al.~\cite{Blum:2008:LTA:1374376.1374464}, the following utility metric is commonly used for machine learning. 
\begin{defn}(Usefulness).
\label{defn:useful}
A mechanism $\mathcal M_q$ is ($\gamma, \zeta$)-useful if, with probability
$1 - \zeta$, for any dataset $d \subseteq \D$, $|\mathcal M_q(d)- q(d)| \leq \gamma$.
\end{defn}
\begin{thm}
\label{thmuselap}
The Laplace Mechanism is $(\frac{\Delta q}{\epsilon} \ln \frac{1}{\zeta},\zeta)$-useful, or equivalently, the Laplace Mechanism is $(\gamma,e^{\frac{-\gamma}{b(\epsilon)}})$-useful~\cite{Chan:2011:PCR:2043621.2043626}.
\end{thm}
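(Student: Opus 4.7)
The plan is to verify the two equivalent parameterizations of usefulness by directly computing the tail probability of a Laplace random variable, since by Theorem~\ref{thm: Lap mech} the mechanism has the form $\mathcal M_q(d) = q(d) + w$ with $w \sim Lap(b)$ for $b = \Delta q/\epsilon$, and therefore $|\mathcal M_q(d) - q(d)| = |w|$. Thus the usefulness claim reduces entirely to bounding $\Prob(|w| > \gamma)$ for a zero-mean Laplace variable, with no dependence on the query $q$ or the dataset $d$ beyond the sensitivity $\Delta q$.

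First I would compute the tail in closed form. Using the density $p(x;b)=\frac{1}{2b}\exp(-|x|/b)$ and the symmetry of the Laplace distribution, I would write
\begin{align*}
\Prob(|w| > \gamma) = 2\int_{\gamma}^{\infty} \frac{1}{2b} e^{-x/b}\, dx = e^{-\gamma/b}.
\end{align*}
This is the central computation, and it is routine; no inequality or approximation is needed.

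Next, I would specialize this identity in both directions to obtain the two forms of the theorem. Substituting $b = b(\epsilon) = \Delta q/\epsilon$ gives immediately that $\Prob(|\mathcal M_q(d) - q(d)| > \gamma) = e^{-\gamma/b(\epsilon)}$, yielding the $(\gamma, e^{-\gamma/b(\epsilon)})$-useful form. To obtain the first form, I would solve $e^{-\gamma/b(\epsilon)} = \zeta$ for $\gamma$, getting $\gamma = b(\epsilon)\ln(1/\zeta) = \frac{\Delta q}{\epsilon}\ln(1/\zeta)$, which by Definition~\ref{defn:useful} is precisely the $\bigl(\tfrac{\Delta q}{\epsilon}\ln\tfrac{1}{\zeta},\,\zeta\bigr)$-useful statement. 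The equivalence of the two parameterizations is then transparent.

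There is no real obstacle here; the only thing to be careful about is that Definition~\ref{defn:useful} requires the bound to hold for every adjacent pair (in fact every $d \subseteq \D$), but because the noise $w$ is drawn independently of $d$ and the error equals $|w|$ exactly, the tail bound holds uniformly in $d$, so the universal quantifier in the definition is satisfied automatically.
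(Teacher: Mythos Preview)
Your proposal is correct. The paper itself does not prove this theorem; it is stated as a known result from the literature with a citation to Chan et al., and is used later only as a black box (for instance in the proof of Theorem~\ref{thm: RPLap mechut}). Your direct computation of the Laplace tail $\Prob(|w|>\gamma)=e^{-\gamma/b}$ followed by the substitution $b=\Delta q/\epsilon$ and the inversion to express $\gamma$ in terms of $\zeta$ is exactly the standard argument, and there is nothing further to compare.
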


\subhead{Mallows Metric} The Mallows metric has been applied for evaluating the private estimation of the degree distribution of
a social network~\cite{Hay:2009:AED:1674659.1677046}. It is defined to test if two samples are drawn from the same distribution. Given two random variables $X$ and $Y$, we have $Mallows(X,Y)=\frac{1}{n}\sum_{i=1}^{n} (|X_i-Y_i|^p)^{1/p}$ (similar to $p$-norm).

\subhead{Relative Entropy (R\'enyi Entropy)} The relative entropy, also known as the \textit{Kullback-Leibler (KL)} divergence, measures the distance between two probability distributions ~\cite{cohen1993majorization}. Formally, given two probability distributions $p(x)$ and $q(x)$ over a discrete random variable $x$, the relative entropy given by $D(p||q)$ is defined as follows: $D(p||q) = \sum_{x\in\mathcal{X}} p(x) \log \frac{p(x)}{q(x)}
$. Further generalization came from R\'enyi~\cite{renyi1961measures,gil2011renyi}, who introduced an indexed family of generalized information and divergence measures akin to the Shannon entropy and KL divergence. R\'enyi introduced the entropy of order $\alpha$ as $I_{\alpha}(p||q) = \frac{1}{\alpha-1} \log (\sum_{x\in\mathcal{X}}   p(x)^\alpha q(x)^{1-\alpha})$ , $\alpha>0$ and $\alpha \neq 1$.
\section{The R$^2$DP framework}
\label{sec4}

\begin{figure*}[ht]
\includegraphics[width=0.95\linewidth]{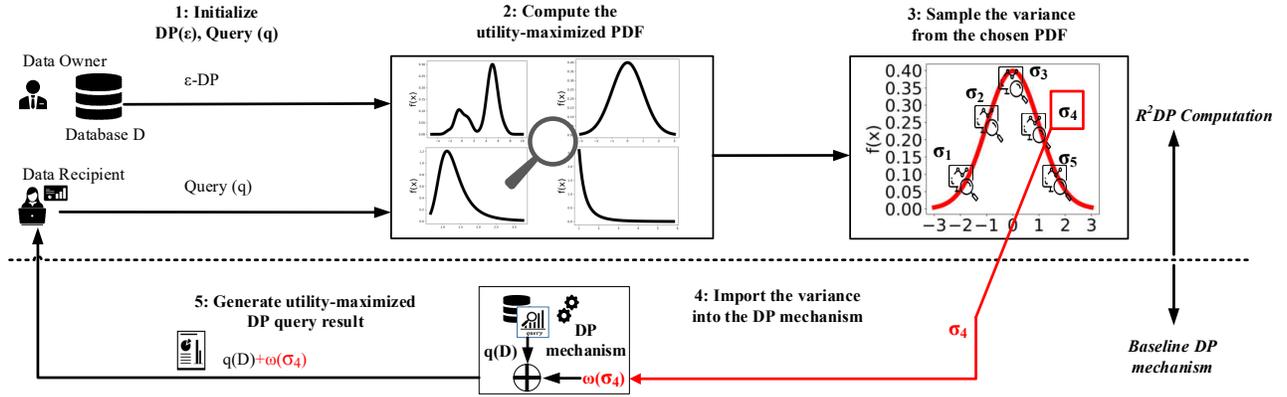}
\centering
\vspace{-0.15in}
\caption{The high level overview of
the R$^2$DP framework.}
\label{figoverview}
\end{figure*}

In this section, we define the R$^2$DP framework and its main building block which is the \textit{Utility-maximized PDF} finder. 
\subsection{Notions and Notations}
\label{subsec4}
In probability and statistics, a random variable (RV) that is distributed according to some parameterized PDFs, with (some of) the
parameters of that PDFs themselves being random variables, is known as a \emph{mixture} distribution~\cite{Charalambides:2005:CMD:1196346}
when the underlying RV is discrete (or a \emph{compound} distribution when the RV is continuous). Compound (or mixture) distributions have been applied in many contexts in the literature~\cite{panjer_1981} and arise naturally where a statistical population contains two or more sub-populations. 
\begin{defn}
 Let ($\Omega,\mathcal F, \Prob$) be a probability space and let $X$ be a RV that is distributed according to some parameterized distribution $f(\theta) \in\mathcal F$ with an unknown parameter $\theta$ that is again distributed according to some other distribution $g$. The resulting distribution $h$ is said to be the distribution that results from compounding $f$ with $g$, 
  \begin{equation}
 h(X)= \int_{\mathbb R} f(X|\theta) g(\theta)\operatorname d\theta 
 \end{equation}
Then for any Borel subset $B$ of $\mathbb R$,
 \begin{equation}
 \label{probcomp}
   \Prob(X \in B)= \int_{B} \int_{\mathbb{R}} f(X|\theta) g(\theta)\operatorname d\theta dX
 \end{equation}
\end{defn}

In general, we call any differentially private query answering mechanisms that leverage two-fold probability distribution functions in their randomization, an \emph{R$^2$DP mechanism}.

\begin{defn}(\textit{R$^2$DP Mechanism}).
\label{Def:RPm}
Let $\mathcal M_q(d,u) = q(d) \bigoplus \omega(u)$ be a mechanism
randomizing the answer of a query $q$ using a random oracle
$\omega(u)$, where $u$ is the set of parameters (mean, variance, etc.) of the PDF of
$\omega$ and $\bigoplus$ stands for the corresponding operator. Denote by $\mathcal F$ the space of PDFs, we call $\mathcal M_q(d,u)$ an \textit{R$^2$DP mechanism} if at least one of the parameters
$u_i\in u$, $( i\leq |u|)$ is/are chosen randomly w.r.t. a
specified probability distribution $f_{u_i} \in \mathcal F$.

\end{defn}
    
\vspace{0.05in}
In particular, the \emph{R{$^2$}DP Laplace mechanism} will modify the answer to a numerical query by adding zero-mean noise distributed according to a
compound Laplace distribution with the scale parameter $b$ itself
distributed according to some distribution $f_b$.
\begin{exmp}
\label{exmprpdp}
 Suppose that the scale parameter $b$ in a Laplace mechanism is randomized as follows:
 \[ b=\begin{cases}b_1&{\text{w.p.}} \ \ \ p     ,\\b_2&{\text{w.p.}} \ \ \ 1-p   .\end{cases}\]

 Then, the perturbed result $q(D)+Lap(b)$ is an example R$^2$DP Laplace mechanism using a Bernoulli distribution.
 \end{exmp}

\begin{defn}
Let $q: \D \to \mathbb R$ be a query and suppose $f_{b} \in \mathcal F$ is a probability density function of the scale parameter $b$. Then, the  mechanism $\mathcal M_q: \D \times \Omega \to \mathbb R$, defined by $\mathcal M_q(d,b) = q(d) + Lap(b)$ is an R$^2$DP Laplace mechanism that utilizes PDF $f_{b}$.
\end{defn}

\subsection{The Framework}
 \label{subsec:Overview}
 As shown in Figure~\ref{figoverview}, R$^2$DP framework include the following steps. 

\vspace{0.05in}

\noindent\textbf{R$^2$DP Computation:}
	\begin{itemize}
	    \item \textbf{Step 1}: The data owner specifies the \DP budget $\epsilon$ and the data recipient specifies his/her query of
          interest together with its required utility metric.
        
        \item \textbf{Step 2}: Given the input triplets ($\epsilon,\text{query},\text{metric}$), the
          \emph{utility-maximized PDF computing module} computes the provably optimal probability density function and its parameters for the variance of the
          additive noise. For example, in Figure~\ref{figoverview},
          the PDF computing module returns a lower tail truncated Gaussian
          distribution for the specified inputs.  
          
          \item \textbf{Step 3}: The \emph{variance sampler} module randomly samples
     (w.r.t. the PDF found in Step 2) one standard
     deviation $\sigma_i$ of the noise to be eventually added.
	\end{itemize}

\noindent\textbf{Baseline DP Randomization:}     
    
    \begin{itemize}
        \item \textbf{Step 4}: Next, the computed standard deviation $\sigma_i$ is used to generate a noise $\omega(\sigma_4)$ for the baseline DP mechanism, which is a DP mechanism of exponential order, e.g ., Laplace, Gaussian and exponential mechanisms. 
        \item \textbf{Step 5}: The computed noise $\omega(\sigma_i)$ is added to the query result $q(D)$ to provide a utility-maximized DP result to the data recipient. 
    \end{itemize}
    The most important module of the R$^2$DP framework is the \emph{utility-maximized PDF computing module} (Step 2) which will be described in more details in the following. Furthermore, to make our discussions more concrete, we instantiate the R$^2$DP framework based on the well studied Laplace mechanism, namely, \textit{the R$^2$DP Laplace mechanism}, where other baseline DP mechanisms will be discussed in Appendix~\ref{sec:Gauss} due to space limitation (from now on, we will simply refer to the R$^2$DP Laplace mechanism as R$^2$DP).
    Particularly, we show that, with a two-fold Laplace distribution, an infinite-size class of log-convex distributions can be identified. This class of distributions pertains a differential privacy guarantee which can globally be given in terms of the PDFs' parameters, and hence is automatically optimizable under the \DP constraint.
    
\vspace{0.05in}

\subsection{Computing Utility-Maximized PDF}
\label{secsearc}
In Figure~\ref{figoverview}, to compute the utility-maximized PDF (Step 2), a key challenge is to establish the search space of automatically optimizable PDFs, from which the utility-maximized PDF is computed. Ideally, the search space of an R$^2$DP mechanism can be defined as the collection of all two-fold distributions, e.g., with Laplace and exponential as the first and second fold distributions, respectively. However, the key challenge here is that a mixture of distributions is itself a distribution which does not necessarily provide a global \DP guarantee in terms of the resulting PDFs' parameters (automatically optimizable under the differential privacy constraint). To address this issue, the \emph{Moment Generating Function (MGF)}~\cite{fisz2018probability} of the second fold distribution could be utilized, e.g., given the first fold as Laplace distribution.  
Specifically, MGF of a random variable is an alternative specification of its
probability distribution, and hence provides the basis of an alternative route to analytical results compared with directly using probability density functions or cumulative distribution
functions~\cite{fisz2018probability}. In particular, the MGF of a random variable is a log-convex function of its probability distribution which can provide a global \DP guarantee~\cite{fisz2018probability} (see Theorem~\ref{simple DP}). 
\begin{defn}(\textit{Moment Generating Function}~\cite{fisz2018probability}). The moment-generating function of a random variable $x$ is $M_{X}(t):=\mathbb E \left[e^{tX}\right], t\in \mathbb {R}$ wherever this expectation exists. The moment-generating function is the expectation of the random variable $e^{tX}$.
\end{defn}
\begin{thm} 	\label{thm: RPLap mech}
We can write the CDF of the output of an R$^2$DP mechanism in terms of the \emph{Moment Generating Function (MGF)}~\cite{fisz2018probability} of the probability distribution $f_{\frac{1}{b}}$, where $b$ is the randomized scale parameter (see Appendix~\ref{demonst} and~\ref{thm3.1} for the details and the proof).

\end{thm}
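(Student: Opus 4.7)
The plan is to compute the CDF of the R$^2$DP Laplace mechanism $\mathcal{M}_q(d,b) = q(d) + Lap(b)$ directly by conditioning on the realized value of the random scale parameter $b$. After shifting by $q(d)$, it suffices to analyze the noise $W := Lap(b)$ with $b \sim f_b$. By the law of total probability (equivalently, Tonelli's theorem, which applies here since the joint density is non-negative),
\begin{equation}
\Prob(W \leq x) = \int_0^\infty \Prob\bigl(Lap(b) \leq x \mid b\bigr) f_b(b)\, db.
\end{equation}
Substituting the standard Laplace CDF, namely $\Prob(Lap(b)\leq x \mid b) = \frac{1}{2}e^{x/b}$ for $x \leq 0$ and $1 - \frac{1}{2}e^{-x/b}$ for $x > 0$, reduces the computation to evaluating the single expectation $\int_0^\infty e^{-|x|/b} f_b(b)\, db$.

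Next I would rewrite this expectation in terms of the reciprocal random variable $Y := 1/b$ whose density is $f_{1/b}$ (formally via the change of variable $y = 1/b$). This immediately gives
\begin{equation}
\int_0^\infty e^{-|x|/b} f_b(b)\, db = \mathbb{E}\bigl[e^{-|x|\cdot Y}\bigr] = M_{1/b}\bigl(-|x|\bigr),
\end{equation}
which is precisely the MGF of $Y=1/b$ evaluated at the non-positive point $-|x|$. Reassembling the two sub-cases of the Laplace CDF yields the closed form
\begin{equation}
\Prob(W \leq x) = \begin{cases} \tfrac{1}{2}\, M_{1/b}(-|x|), & x \leq 0, \\ 1 - \tfrac{1}{2}\, M_{1/b}(-|x|), & x > 0, \end{cases}
\end{equation}
expressing the CDF of the output entirely in terms of $M_{1/b}$.

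The main point that I would spend some care on is the existence and finiteness of $M_{1/b}(-|x|)$, since moment generating functions are only guaranteed to converge on some subset of $\mathbb{R}$. The observation that rescues us is that on the non-positive half-line $t \leq 0$ we have $e^{ty} \leq 1$ uniformly for $y > 0$, so $M_{1/b}(t) \leq 1 < \infty$ regardless of the tail behaviour of $f_{1/b}$. The interchange of integration and expectation used above is then justified by Tonelli's theorem, all integrands being non-negative. This finiteness together with log-convexity of the MGF on $(-\infty, 0]$ is precisely what will later enable the global DP guarantee (cf.\ Theorem~\ref{simple DP}) to be written as a clean analytic constraint on the parameters of $f_{1/b}$, which in turn is what makes the utility-maximization step of Section~\ref{secsearc} automatable.
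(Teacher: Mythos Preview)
Your proof is correct and follows the same conditioning-then-change-of-variable idea as the paper, but you take a slightly more direct path. The paper works at the level of the \emph{density}: it writes $\Prob(\mathcal M_q(d,b)\in S)$ as a double integral over $(b,w)$, substitutes $u=1/b$, swaps the order of integration, and identifies the resulting noise density as $\tfrac{1}{2}\,\diff{M_{1/b}(t)}{t}\big|_{t=-|w|}$ before integrating over $S$ to recover evaluations of $M_{1/b}$ on the boundary. You instead plug in the closed-form Laplace CDF from the start, so the MGF $M_{1/b}(-|x|)$ appears immediately without the intermediate differentiation-then-antidifferentiation step.

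What your route buys is brevity and a clean justification of finiteness (your observation that $M_{1/b}(t)\le 1$ for $t\le 0$ regardless of tails is a point the paper leaves implicit). What the paper's route buys is that the density formula $\tfrac{1}{2}M'_{1/b}(-|x-q(d)|)$ is exactly the object that feeds into the later privacy bound: Lemma~\ref{thm: RPLap mechDP} and Theorem~\ref{simple DP} are stated as ratios of $\diff{M_{1/b}(t)}{t}$ at shifted arguments, not of the MGF itself. So if you adopt your CDF-level derivation, you should also record, by differentiating your piecewise formula, that the mixture density equals $\tfrac{1}{2}M'_{1/b}(-|w|)$; otherwise the link to Equation~\eqref{simple DPeq} has to be rebuilt.
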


Thus, for a PDF with non-negative support (since scale parameter is always non-negative), the R$^2$DP mechanism outputs another
PDF using the MGF (where CDF is the moment and PDF is its derivative, as shown in Equation~\ref{eqn11} in Appendix \ref{proofsec}) . Moreover, since MGF is a bijective function~\cite{feller2008introduction}, the R$^2$DP mechanism can in fact generate a search space as large as the space of all PDFs with non-negative support and an existing MGF.  However, the next challenge is that not all random variables have moment generating functions (MGFs), e.g., Cauchy distribution~\cite{bulmer1979principles}. Fortunately, MGFs possess an appealing composability property between
independent probability distributions~\cite{Charalambides:2005:CMD:1196346}, which can be used to provide a search space of all linear combinations of a set of popular distributions with known MGFs (infinite number of RVs). 
\begin{thm}[MGF of Linear Combination of RVs]
\label{thm:lin}
 If $x_1, \cdots, x_n$ are $n$ independent RVs with MGFs $M_{x_i}(t)=\mathbb E (e^{t x_i})$ for $i = 1,\cdots , n$, then the MGF of the linear combination $Y=\sum\limits_{i=1}^{n}a_ix_i$ is $\prod \limits_{i=1}^{n} M_{x_i} (a_it)$.
\end{thm}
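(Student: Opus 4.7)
The plan is to prove the identity directly from the definition of the moment generating function, exploiting the two algebraic facts that a weighted sum in the exponent turns into a product of exponentials, and that the expectation of a product of independent random variables factors into a product of expectations. The argument is short and computational; the substantive content is really the use of independence of the $x_i$'s.

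First I would start from the definition applied to $Y = \sum_{i=1}^{n} a_i x_i$, namely
\begin{equation*}
M_Y(t) \;=\; \mathbb{E}\bigl[e^{tY}\bigr] \;=\; \mathbb{E}\Bigl[e^{t \sum_{i=1}^{n} a_i x_i}\Bigr].
\end{equation*}
Next I would convert the sum in the exponent into a product of exponentials, obtaining
\begin{equation*}
M_Y(t) \;=\; \mathbb{E}\Bigl[\prod_{i=1}^{n} e^{(a_i t)\, x_i}\Bigr].
\end{equation*}

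Then I would invoke mutual independence of $x_1,\dots,x_n$, which (together with measurability of the exponential map) implies that the random variables $e^{(a_i t) x_i}$ are themselves mutually independent. Hence the expectation of their product equals the product of their expectations, yielding
\begin{equation*}
M_Y(t) \;=\; \prod_{i=1}^{n} \mathbb{E}\bigl[e^{(a_i t)\, x_i}\bigr] \;=\; \prod_{i=1}^{n} M_{x_i}(a_i t),
\end{equation*}
which is exactly the claimed formula.

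The only genuine subtlety, and what I would flag as the ``main obstacle'' in a fully rigorous write-up, is ensuring that all the expectations involved are well defined at the argument $a_i t$: one must assume $t$ lies in a neighborhood of the origin contained in the intersection of the domains where each $M_{x_i}$ is finite (equivalently, $a_i t$ lies in the domain of $M_{x_i}$ for every $i$). Under that mild domain restriction, the factorization step via independence is standard, and the rest of the proof is purely algebraic manipulation of exponentials.
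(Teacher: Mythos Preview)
Your proof is correct and is the standard textbook argument. The paper does not supply its own proof of this statement; it quotes the result as a known fact from probability theory (citing a reference on combinatorial methods in discrete distributions), so there is nothing to compare against beyond noting that your derivation is exactly the classical one.
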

Consequently, we define the search space of the R$^2$DP mechanism as all possible linear combinations of a set of independent RVs with existing MGF (Section~\ref{pdffind} will provide more details on how to choose the set of independent RVs). Although this search space is only a subset of all two-fold distributions, we will show through both numerical results (in Section~\ref{numericsec}) and experiments with real data (Section~\ref{exp:sec}) that this search space is indeed sufficient to generate near-optimal utility w.r.t. all  utility metrics (universality).

\section{Privacy and Utility}
\label{sec5}
In this section, we analyze the privacy and utility of the R$^2$DP, and then discuss extensions for improving and implementing R$^2$DP. 

\subsection{Privacy Analysis}
\label{subsec:privacy}

 We now show the R$^2$DP mechanism provides differential privacy guarantee. By Theorem~\ref{thm: RPLap mech}, the DP bound of the R$^2$DP is 
 
 \vspace{-0.1in}

\begin{eqnarray*}
\label{DPlapexmp1}
 &\hspace{-.4cm} e^{\epsilon}= \max\limits_{\forall S \in \R} \left\{\frac{-M_{\frac{1}{b}}(-|x-q(d)|)|_{S_{\geq q(d)}}+M_{\frac{1}{b}}(-|x-q(d)|)|_{S_{< q(d)}}}{-M_{\frac{1}{b}}(-|x-q(d')|)|_{S_{\geq q(d')}}+M_{\frac{1}{b}}(-|x-q(d')|)|_{S_{< q(d')}}}\right\} 
\end{eqnarray*}
 
 Hence, the value of $e^{\epsilon}$ only depends on the distribution of reciprocal of the scale parameter $b$, i.e., $f_{\frac{1}{b}}$. Moreover, an MGF is positive and log-convex~\cite{fisz2018probability} where the latter property is desirable in defining various natural logarithm upper bounds, e.g., DP bound. In the following theorem, our MGF-based formula for the probability $\Prob(\{q(d)+Lap(b)\}\in S)$ can be easily applied to calculate the \DP guarantee (see Appendix~\ref{proofsec} for the proof).

\begin{thm}
\label{simple DP}
The R$^2$DP mechanism $\mathcal M_q(d,b)$ is 

\vspace{-0.15in}

\begin{equation}
\label{simple DPeq}
    \ln \left[ \cfrac{ \mathbb E(\frac{1}{b})} {\diff{M_{\frac{1}{b}}(t)}{t}|_{t=- \Delta q}} \right]\text{-differentially private.}
\end{equation}
\end{thm}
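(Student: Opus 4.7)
The plan is to reduce the DP bound to a pointwise ratio bound on densities and then express that ratio through a single univariate function, the derivative of $M_{1/b}$. Concretely, I would first write the density of the R$^2$DP Laplace output at point $x$ (on input $d$) by compounding the Laplace density with $f_b$: $p(x;d)=\mathbb{E}_b\!\left[(2b)^{-1}e^{-|x-q(d)|/b}\right]$. Substituting $Y=1/b$ and swapping expectation with a derivative (justified by dominated convergence, using that $M_{1/b}$ is smooth on an open neighbourhood of $[-\Delta q,0]$), I would identify $p(x;d)=\tfrac{1}{2}\,M'_{1/b}(-|x-q(d)|)$. This is the crucial step: it collapses the infinite-dimensional compounding into a one-dimensional function whose values are determined entirely by $M_{1/b}$.

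Next, since the DP inequality on arbitrary measurable sets follows by integration from a pointwise density ratio bound, it suffices to control $\sup_x \dfrac{M'_{1/b}(-|x-q(d)|)}{M'_{1/b}(-|x-q(d')|)}$. Writing $u=-|x-q(d)|$ and $v=-|x-q(d')|$, the reverse triangle inequality together with $|q(d)-q(d')|\le\Delta q$ gives $|u-v|\le\Delta q$, and obviously $u,v\le 0$. Because $Y=1/b\ge 0$, the function $M'_{1/b}(t)=\mathbb{E}[Y e^{tY}]$ is strictly positive and non-decreasing for $t\le 0$, so the ratio is maximised by pushing the numerator's argument up to $0$ and the denominator's down to $-\Delta q$, which is admissible (take $x=q(d)$ when $|q(d)-q(d')|=\Delta q$). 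This yields the bound $M'_{1/b}(0)/M'_{1/b}(-\Delta q)$, and since $M'_{1/b}(0)=\mathbb{E}(1/b)$, taking logarithms gives the stated $\epsilon$.

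I expect the main obstacle to be the monotonicity/optimisation argument in the second step: one must verify that the pointwise supremum is indeed attained at the extreme configuration $(u,v)=(0,-\Delta q)$ for every allowed query, rather than somewhere in the interior. Monotonicity of $M'_{1/b}$ is what makes this work, and it is a consequence of non-negativity of $1/b$ rather than of the log-convexity of $M_{1/b}$ per se; I would make this distinction explicit so that the theorem applies beyond log-convex instances. A secondary technical point is the exchange of derivative and expectation in the first step, which I would dispatch with dominated convergence using the finiteness of $\mathbb{E}(1/b)$ plus smoothness of $M_{1/b}$ on an open set containing $[-\Delta q,0]$; these are exactly the regularity conditions that the R$^2$DP search space (linear combinations of RVs with existing MGFs) is designed to satisfy.
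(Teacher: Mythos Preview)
Your reduction to the pointwise density ratio and the identification $p(x;d)=\tfrac12\,M'_{1/b}(-|x-q(d)|)$ are exactly what the paper does, and they are fine. The gap is in the optimisation step: monotonicity of $M'_{1/b}$ is \emph{not} sufficient to conclude that the supremum of $M'(u)/M'(v)$ over $\{u,v\le 0,\ |u-v|\le\Delta q\}$ is attained at $(0,-\Delta q)$. Concretely, for $x\le q(d)<q(d')$ with $q(d')-q(d)=\Delta q$ you get $u=-|x-q(d)|$ arbitrary in $(-\infty,0]$ and $v=u-\Delta q$, so you must show that $t\mapsto M'(t)/M'(t-\Delta q)$ is nondecreasing on $(-\infty,0]$. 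An increasing positive function can easily fail this (e.g.\ $g(t)=\arctan(t+C)+\pi/2$ has $g(t)/g(t-1)$ larger near $t=-C$ than near $t=0$), so ``push $u$ up and $v$ down'' is not a valid argument under the linked constraint.

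What actually carries the step is \emph{log-convexity of $M'_{1/b}$}: since $Y=1/b\ge 0$, H\"older's inequality applied to $Y e^{tY}$ yields $M'(\theta t_1+(1-\theta)t_2)\le M'(t_1)^{\theta}M'(t_2)^{1-\theta}$, hence $\log M'$ is convex, $(\log M')'=M''/M'$ is nondecreasing, and therefore $\log M'(t)-\log M'(t-\Delta q)=\int_{t-\Delta q}^{t}(\log M')'(s)\,ds$ is nondecreasing in $t$, with maximum at $t=0$. This is precisely the lemma the paper proves and uses. So your instinct that non-negativity of $1/b$ is the source is correct, but the operative consequence is log-convexity of $M'_{1/b}$ (not of $M_{1/b}$, and not mere monotonicity); replace your monotonicity sentence with this H\"older/log-convexity argument and the proof goes through.
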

 
Moreover, Theorem~\ref{thm:lin} can be directly applied to calculate the \DP guarantee of any RV from the search space defined in Section~\ref{secsearc} (i.e., all linear combinations of a set of independent RVs with known MGFs).

\begin{coro}
[Differential Privacy of Combined PDFs]
\label{thm:mgffin}
If $x_1, \cdots, x_n$ are $n$ independent random variables with respective MGFs $M_{x_i}(t)=\mathbb E (e^{t x_i})$ for $i = 1, \cdots, n$, then the R$^2$DP mechanism $\mathcal M_q(d,b)$ where $\frac{1}{b}$ is defined as the linear combination $\frac{1}{b}=\sum\limits_{i=1}^{n}a_ix_i$ is $\epsilon$-differentially private, where
 \vspace{-0.2cm}
 \begin{eqnarray}
 \scriptsize
    \label{mgfdp}
  \epsilon=\ln\left[\cfrac{\sum \limits_{j=1}^{n} a_j\cdot E_{x_j}(\frac{1}{b})}{\sum \limits_{j=1}^{n} a_j\cdot M'_{x_j}(-a_j\cdot \Delta q) \cdot \prod \limits_{\substack{i=1 \\ i\neq j}}^{n} M_{x_i} (-a_i\cdot \Delta q)}\right]
    \end{eqnarray}

\end{coro}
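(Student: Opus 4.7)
The plan is to reduce Corollary~\ref{thm:mgffin} directly to Theorem~\ref{simple DP} by exploiting the factorization of the MGF of a sum of independent random variables established in Theorem~\ref{thm:lin}. Theorem~\ref{simple DP} already expresses the privacy loss of an arbitrary R$^2$DP Laplace mechanism as
\begin{equation*}
\epsilon = \ln\!\left[\cfrac{\mathbb E\bigl(\tfrac{1}{b}\bigr)}{\left.\diff{M_{1/b}(t)}{t}\right|_{t = -\Delta q}}\right],
\end{equation*}
so the only work left is to write both the numerator and the denominator in closed form under the assumption $\frac{1}{b} = \sum_{i=1}^n a_i x_i$ with the $x_i$ mutually independent.

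For the numerator, linearity of expectation gives $\mathbb E(1/b) = \sum_{j=1}^n a_j \,\mathbb E(x_j)$, which matches the top line of \eqref{mgfdp} (reading the paper's $E_{x_j}(\tfrac{1}{b})$ as the contribution of the $j$-th summand, i.e.\ $\mathbb E(x_j)$). For the denominator, Theorem~\ref{thm:lin} yields the factorization
\begin{equation*}
M_{1/b}(t) \;=\; \prod_{i=1}^n M_{x_i}(a_i t),
\end{equation*}
and applying the product rule together with the chain rule (the inner derivative $\frac{d}{dt}(a_i t) = a_i$) produces
\begin{equation*}
\diff{M_{1/b}(t)}{t} \;=\; \sum_{j=1}^n a_j\, M'_{x_j}(a_j t) \prod_{\substack{i=1 \\ i \neq j}}^n M_{x_i}(a_i t).
\end{equation*}
Evaluating at $t = -\Delta q$ reproduces exactly the denominator in \eqref{mgfdp}.

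The proof is therefore a three-line substitution into Theorem~\ref{simple DP}, and the main piece of care is the bookkeeping in the product rule for $n$ factors, where each factor carries its own dilation $a_i$ inside the argument. Independence of the $x_i$ is essential, since otherwise Theorem~\ref{thm:lin} would fail and $M_{1/b}$ would not split as a product; log-convexity of each $M_{x_i}$ (guaranteed by the hypothesis that the MGFs exist) ensures that the resulting $M_{1/b}$ is itself log-convex, so the quantity inside the logarithm is well-defined and non-negative, which is consistent with Theorem~\ref{simple DP}. No additional machinery or auxiliary lemma is required.
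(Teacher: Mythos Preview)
Your proposal is correct and matches the paper's own reasoning: the paper does not give a separate proof for Corollary~\ref{thm:mgffin} but simply remarks that Theorem~\ref{thm:lin} ``can be directly applied'' to Theorem~\ref{simple DP}, which is exactly the substitution you carry out. Your explicit product-rule computation of $\diff{M_{1/b}(t)}{t}$ and your reading of the somewhat idiosyncratic notation $E_{x_j}(\tfrac{1}{b})$ as $\mathbb E(x_j)$ are both the intended steps.
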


Therefore, we have established a search space of probability distributions with a universal formulation for their \DP guarantees, which is the key enabler for the universality of R$^2$DP. Next, we characterize the utility of R$^2$DP mechanisms.

\subsection{Utility Analysis}

We now characterize the utility of the R$^2$DP mechanism. To make
concrete discussions, we focus on the usefulness metric (see
Section~\ref{sec:metric}), and a similar logic can also be applied to
other metrics.

\subsubsection{Characterizing the Utility}

Denote by $U(\epsilon, \Delta q, \gamma)$ the usefulness of an R$^2$DP mechanism for all $ \epsilon>0$, sensitivity $\Delta q$ and error bound $\gamma$. The optimal usefulness is then given as the answer of the following optimization problem over the search space of PDFs. 
\vspace{-0.1cm}
\begin{eqnarray}
\label{multi-obj1}
&\hspace{-0.5cm} \max\limits_{f_{\frac{1}{b}}\in F} \big\{U(\epsilon, \Delta q, \gamma) \big\}=\max\limits_{f_{\frac{1}{b}}\in F} \bigg \{\frac{1}{2} \cdot \Big[-M_{\frac{1}{b}}(-|x-q(d)|)|_{q(d)}^{q(d)+\gamma}\nonumber\\
&\hspace{3.5cm}+M_{\frac{1}{b}}(-|x-q(d)|)|_{q(d)-\gamma}^{q(d)}\Big] \bigg \}, \nonumber\\
& \text{subject to     } \ \ \ \epsilon=\ln \left[ \cfrac{ \mathbb E(\frac{1}{b})} {\diff{M_{\frac{1}{b}}(t)}{t}|_{t=- \Delta q}} \right]  \nonumber
\end{eqnarray}

where the utility function is the probability of generating $\epsilon$-DP query results within a distance of $\gamma$-error (using Theorem~\ref{thm: RPLap mech}). Note that $\epsilon$ and $\Delta q$ do not directly impact the usefulness but they do so indirectly through the \DP constraint. Furthermore, as shown in Theorem~\ref{simple DP}, the \DP guarantee $\epsilon$ over the established search space is a unique function of the parameters of the second-fold distribution. 

\begin{coro}
\label{co1}
Denote by $u$, the set of parameters for a probability distribution $f_{\frac{1}{b}}$, and by $M_{f(u)}$ its MGF. Then, the optimal usefulness of an R$^2$DP mechanism utilizing $f_{\frac{1}{b}}$, at each triplet $(\epsilon, \Delta q, \gamma)$ is
\begin{eqnarray}
\small
    \label{gen:conddelt}
       &\hspace{-0.5cm}  U_f(\epsilon, \Delta q, \gamma)=\max\limits_{u\in \mathbb{R}^{|u|}} \bigg \{\frac{1}{2} \cdot \Big[-M_{f(u)}(-|x-q(d)|)|_{q(d)}^{q(d)+\gamma}\nonumber\\
&\hspace{3.5cm}+M_{f(u)}(-|x-q(d)|)|_{q(d)-\gamma}^{q(d)}\Big] \bigg \}, \nonumber\\
& \text{subject to     } \ \ \ \epsilon=\ln \left[ \cfrac{ \mathbb E(\frac{1}{b})} {\diff{M_{\frac{1}{b}}(t)}{t}|_{t=- \Delta q}} \right]  \nonumber
    \end{eqnarray}
\end{coro}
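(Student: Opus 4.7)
The plan is to derive Corollary~\ref{co1} as a direct specialization of the multi-objective optimization problem displayed immediately before the corollary: restricting the infinite-dimensional search over all admissible PDFs $f_{\frac{1}{b}}\in F$ to the finite-dimensional search over the parameter vector $u\in\mathbb{R}^{|u|}$ of a fixed parametric family $f(u)$.

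First I would start from the usefulness definition (Definition~\ref{defn:useful}): a mechanism is $(\gamma,\zeta)$-useful when $\Prob(|\mathcal M_q(d)-q(d)|\leq\gamma)\geq 1-\zeta$, so maximizing usefulness at a fixed error bound $\gamma$ amounts to maximizing the probability mass $\Prob\bigl(\mathcal M_q(d)\in[q(d)-\gamma,\,q(d)+\gamma]\bigr)$ subject to the privacy constraint. I would then invoke Theorem~\ref{thm: RPLap mech}, which expresses the CDF of an R$^2$DP Laplace mechanism in terms of the MGF $M_{\frac{1}{b}}(-|x-q(d)|)$, split according to whether the dummy integration variable lies above or below $q(d)$. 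Evaluating that CDF on $[q(d)-\gamma,\,q(d)]$ and $[q(d),\,q(d)+\gamma]$ separately produces exactly the bracketed two-term expression appearing in the statement; the prefactor $\frac{1}{2}$ is inherited from the $\frac{1}{2b}$ normalisation of the Laplace density.

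Next I would impose the privacy constraint supplied by Theorem~\ref{simple DP}, namely $\epsilon=\ln\bigl[\mathbb E(\tfrac{1}{b})\big/\,\diff{M_{\frac{1}{b}}(t)}{t}\big|_{t=-\Delta q}\bigr]$, which is already the constraint of the preceding displayed optimization. At this stage both the objective and the constraint are written entirely through the MGF $M_{\frac{1}{b}}$ and the expectation of the reciprocal scale parameter, so no residual dependence on the raw density $f_{\frac{1}{b}}$ remains.

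Finally, fixing a parametric family $f$ with parameter vector $u$ determines $f_{\frac{1}{b}}$ uniquely, so $M_{f_{\frac{1}{b}}}=M_{f(u)}$ and the abstract maximization over $f_{\frac{1}{b}}\in F$ collapses to the concrete finite-dimensional maximization over $u\in\mathbb{R}^{|u|}$; substituting $M_{f(u)}$ for $M_{\frac{1}{b}}$ in both the objective and the constraint yields the claimed formula. The argument is essentially bookkeeping, and the only delicate point is the careful splitting of the symmetric integral into the two one-sided pieces when applying Theorem~\ref{thm: RPLap mech}; a sign or orientation error there would corrupt the $\frac{1}{2}$ prefactor or swap the two MGF terms, so I would double-check that splitting before declaring the derivation complete.
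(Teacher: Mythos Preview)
Your proposal is correct and matches the paper's approach: the corollary is stated as an immediate specialization of the preceding optimization over $f_{\frac{1}{b}}\in F$, obtained by restricting to a fixed parametric family and invoking Theorem~\ref{thm: RPLap mech} for the objective and Theorem~\ref{simple DP} for the constraint. The paper gives no separate proof beyond this restriction, so your derivation is, if anything, more explicit than what appears there.
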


Since MGFs are positive and log-convex, with $M(0)=1$, we have $U_f(\epsilon, \Delta q, \gamma)=1-\min\limits_{u\in \mathbb{R}^{|u|}} M_{f(u)}(-\gamma)$. Thus, for usefulness metric, the optimal distribution for $\epsilon$ is the one with the minimum MGF evaluated at $\gamma$. In particular, for a set of privacy/utility parameters, we can find the optimal PDF using the \textit{Lagrange multiplier}~\cite{bertsekas2014constrained}. i.e.,

\begin{eqnarray}
    \label{lagrange1}
 \mathcal{L} (u,\lambda)= M_{f(u)}(-\gamma)+ \lambda \cdot (\ln \left[ \cfrac{ \mathbb E(\frac{1}{b})} {\diff{M_{\frac{1}{b}}(t)}{t}|_{t=- \Delta q}} \right]-\epsilon) 
    \end{eqnarray}
Moreover, Theorem~\ref{thm:lin} can be directly applied to design a utility-maximizing R$^2$DP mechanism with a sufficiently large search space
(with an infinite number of different random variables).
\begin{coro}
[Optimal Utility for Combined RVs]
\label{thm:mgff}
 If $ \ x_1, x_2,$ $ \cdots, x_n$ are $n$ independent random variables with respective MGFs $M_{x_i}(t)=\mathbb E (e^{t x_i})$ for $i = 1, 2,\cdots, n$, then for the linear combination $Y=\sum\limits_{i=1}^{n}a_ix_i$, 
the optimal usefulness (similar relation holds for other metrics) under $\epsilon$-\DP constraint is given as 
 \begin{eqnarray}
 \scriptsize
    \label{mgffin}
       &\hspace{-1.5cm}  U_{Y}(\epsilon, \Delta q, \gamma)=1-\min\limits_{\mathcal{A,U}} \bigg\{ \prod \limits_{i=1}^{n} M_{x_i} (-a_i\gamma) \bigg\}\\
&\hspace{-7cm} \text{subject to           }  \nonumber\\ 
& \hspace{-.3cm} \epsilon= \ln\left[\cfrac{\sum \limits_{j=1}^{n} a_j\cdot E_{x_j}(\frac{1}{b})}{\sum \limits_{j=1}^{n} a_j\cdot M'_{x_j}(a_j\cdot -\Delta q) \cdot \prod \limits_{\substack{i=1 \\ i\neq j}}^{n} M_{x_i} (-a_i\cdot \Delta q)}\right] \nonumber  \nonumber
    \end{eqnarray}
    where $\mathcal{A}=\{a_1,a_2,\cdots, a_n\}$ is the set of the coefficients and $\mathcal{U}=\{u_1,u_2,\cdots, u_n\}$ is the set of parameters of the probability distributions of RVs $x_i, \ \forall i \leq n$. 
\end{coro}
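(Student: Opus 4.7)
The plan is to combine three previously established results: the MGF composability of Theorem~\ref{thm:lin}, the simplified optimal-usefulness identity noted immediately after Corollary~\ref{co1}, and the explicit DP characterization from Corollary~\ref{thm:mgffin}. The argument essentially substitutes the product form of the MGF into the scalar optimization problem of Corollary~\ref{co1}.

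First, I would instantiate Corollary~\ref{co1} with $\frac{1}{b} = Y = \sum_{i=1}^n a_i x_i$. The remark following that corollary states that, since any MGF is positive, log-convex, and satisfies $M(0) = 1$, the two boundary evaluations in the utility integral are symmetric around $q(d)$ and collapse to $U_Y(\epsilon,\Delta q,\gamma) = 1 - \min M_{Y}(-\gamma)$, where the minimization runs over all parameters of $f_Y$. This step removes the $\frac{1}{2}$ prefactor and the piecewise boundary expressions, converting the problem into a single-point evaluation of the compound MGF at $t = -\gamma$.

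Next, by Theorem~\ref{thm:lin}, the independence of $x_1,\dots,x_n$ gives the factorization $M_Y(t) = \prod_{i=1}^n M_{x_i}(a_i t)$. Evaluating at $t = -\gamma$ produces precisely the product $\prod_{i=1}^n M_{x_i}(-a_i \gamma)$ stated in the corollary, and the minimization variables now decompose into the combined set $(\mathcal{A},\mathcal{U})$ of the linear coefficients and the parameter vectors of each $x_i$. The DP constraint is then inherited verbatim from Corollary~\ref{thm:mgffin} under the same substitution $\frac{1}{b} = Y$, so no additional privacy analysis is needed.

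The main obstacle is not analytical but structural: I must confirm that the reduction of Corollary~\ref{co1} still applies after restricting $f_{1/b}$ to the linear-combination family. Concretely, this amounts to checking (i) that $M_Y$ inherits positivity, log-convexity, and the normalization $M_Y(0)=1$ from its factors — each trivial from independence and the product form — and (ii) that the support constraint $\frac{1}{b}\geq 0$, needed for the integral-form CDF used in Theorem~\ref{thm: RPLap mech}, is preserved by the linear combination, which requires implicitly restricting $\mathcal{A}$ and $\mathcal{U}$ so that $Y$ remains non-negative almost surely. Once these routine checks are in place, the stated expression and its constraint follow by direct substitution, and the minimization simply inherits the joint search space over $(\mathcal{A},\mathcal{U})$.
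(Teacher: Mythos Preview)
Your proposal is correct and matches the paper's own approach: the paper does not give a separate proof for this corollary but simply indicates that it follows by applying Theorem~\ref{thm:lin} to the single-RV usefulness reduction of Corollary~\ref{co1}, with the constraint inherited from Corollary~\ref{thm:mgffin}. Your additional sanity checks on positivity, log-convexity, and the non-negative support of $Y$ are welcome details that the paper omits.
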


Similar to the case of a single RV, we can compute the optimal solution for this
optimization problem using the Lagrange multiplier function in
Equation~\ref{lagrange1}.
   
\subsubsection{Finding Utility-Maximizing Distributions} 
\label{pdffind}

Since not all second-fold probability distributions can boost the
utility of the baseline Laplace mechanism, leveraging all RVs into our
search space would only result in redundant computation by the
utility-maximized PDF computing module. Accordingly, in this section,
we first derive a necessary condition on the \DP guarantee of R$^2$DP
to boost the utility of the baseline Laplace mechanism (refer to
Appendix~\ref{proofsec} for the proof). Using this necessary
condition, we can easily filter out those probability distributions
that cannot deliver any utility improvement.

\begin{thm}
\label{thm: RPLap mechut}
The utility of R$^2$DP with $\epsilon\geq \ln \Big[ \mathbb E_{\frac{1}{b}} \big(e^{\epsilon(b)} \big ) \Big]$ is always upper bounded by the utility of the $\epsilon$-differentially private baseline Laplace mechanism. Equivalently, for an R$^2$DP mechanism to boost the utility, the following relation is necessarily true.
\begin{equation}
\label{necc}
e^\epsilon=\frac{\mathbb E(\frac{1}{b})}{M'_{\frac{1}{b}}(- \Delta q)} < M_{\frac{1}{b}} (\Delta q)
\end{equation}
\end{thm}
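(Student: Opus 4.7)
The plan is to reduce the claim to the usefulness metric (as Section~\ref{sec5} already does, with the remark that ``a similar logic can also be applied to other metrics'') and to pivot on the log-convexity of the moment generating function $M_{\frac{1}{b}}$. Introduce the shorthand $L(t) := \ln M_{\frac{1}{b}}(t)$; since every MGF is log-convex~\cite{fisz2018probability} and $M_{\frac{1}{b}}(0)=1$, the function $L$ is convex on its domain with $L(0)=0$. Also note that $\epsilon(b)=\Delta q/b$ is exactly the Laplace privacy budget at fixed scale $b$, so $\mathbb{E}_{\frac{1}{b}}[e^{\epsilon(b)}]=\mathbb{E}[e^{\Delta q/b}]=M_{\frac{1}{b}}(\Delta q)$, which is the quantity we must compare $e^{\epsilon}$ to.

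First I would put both utilities on a common footing. The baseline $\epsilon$-DP Laplace mechanism uses scale $b_0=\Delta q/\epsilon$ and therefore has usefulness $1-e^{-\gamma \epsilon/\Delta q}$ at error bound $\gamma$, while R$^2$DP has usefulness $\mathbb{E}_{b}\bigl[1-e^{-\gamma/b}\bigr]=1-M_{\frac{1}{b}}(-\gamma)=1-e^{L(-\gamma)}$. Hence R$^2$DP strictly improves on baseline Laplace at some $\gamma>0$ exactly when $\epsilon < -\frac{\Delta q}{\gamma} L(-\gamma)$. The job is then to show that this in turn forces $\epsilon < L(\Delta q)=\ln M_{\frac{1}{b}}(\Delta q)$, uniformly in $\gamma$.

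The heart of the argument is a single three-point Jensen-type inequality. Since $0=\frac{\Delta q}{\Delta q+\gamma}\cdot(-\gamma)+\frac{\gamma}{\Delta q+\gamma}\cdot\Delta q$, convexity of $L$ gives
\begin{equation*}
0 \;=\; L(0) \;\leq\; \frac{\Delta q}{\Delta q+\gamma}\, L(-\gamma) \;+\; \frac{\gamma}{\Delta q+\gamma}\, L(\Delta q),
\end{equation*}
which, after multiplying by $(\Delta q+\gamma)/\Delta q$ and rearranging, is equivalent to $-\frac{\Delta q}{\gamma} L(-\gamma)\leq L(\Delta q)$. Therefore any $\epsilon$ admitting utility improvement at any $\gamma>0$ also satisfies $\epsilon < \ln M_{\frac{1}{b}}(\Delta q)$. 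Substituting the R$^2$DP privacy formula $e^{\epsilon}=\mathbb{E}(1/b)/M'_{\frac{1}{b}}(-\Delta q)$ from Theorem~\ref{simple DP} turns this into $\mathbb{E}(1/b)/M'_{\frac{1}{b}}(-\Delta q)<M_{\frac{1}{b}}(\Delta q)$, which is exactly the stated necessary condition; contrapositively, whenever $\epsilon\geq \ln\mathbb{E}_{\frac{1}{b}}[e^{\epsilon(b)}]$ the R$^2$DP usefulness is dominated by the baseline Laplace usefulness at every error bound.

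The step I expect to need the most care is the ``\emph{always} upper bounded'' phrasing: the three-point convexity inequality must provide an envelope $-\frac{\Delta q}{\gamma} L(-\gamma)\leq L(\Delta q)$ that is \emph{independent of $\gamma$}, so that a single necessary condition on $\epsilon$ rules out utility gains at every $\gamma$ simultaneously; happily, the weights $\frac{\Delta q}{\Delta q+\gamma},\frac{\gamma}{\Delta q+\gamma}$ are chosen precisely so that the intermediate point lands at $0$, eliminating $\gamma$ from the right-hand side. Extending to the other metrics of Section~\ref{sec:metric} should follow the same template: write the R$^2$DP utility as an expectation in $b$ of the corresponding Laplace utility, compare term-by-term with the baseline at the same $\epsilon$, and close the loop with the MGF-convexity envelope.
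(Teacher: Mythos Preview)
Your proof is correct and lands on exactly the same pivotal inequality as the paper, namely $M_{\frac{1}{b}}(\Delta q)^{-\gamma/\Delta q}\leq M_{\frac{1}{b}}(-\gamma)$ (equivalently $-\tfrac{\Delta q}{\gamma}L(-\gamma)\leq L(\Delta q)$), but you reach it by a different route. The paper applies Jensen's inequality to the random variable $e^{\Delta q/b}$ with the convex power map $g(x)=x^{-\gamma/\Delta q}$, getting $(\mathbb{E}_{\frac{1}{b}}[e^{\Delta q/b}])^{-\gamma/\Delta q}\leq \mathbb{E}_{\frac{1}{b}}[e^{-\gamma/b}]=M_{\frac{1}{b}}(-\gamma)$ and concluding from there. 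You instead treat log-convexity of $M_{\frac{1}{b}}$ as a black box and use a clean three-point convexity inequality at $-\gamma,0,\Delta q$ for the cumulant function $L$. The two are equivalent in strength, but your framing makes the ``always upper bounded'' (i.e., uniform in $\gamma$) claim explicit: the envelope $-\tfrac{\Delta q}{\gamma}L(-\gamma)\leq L(\Delta q)$ visibly bounds the improvement threshold by a quantity independent of $\gamma$, so a single contrapositive dispatches every error bound at once. The paper's route is slightly more elementary (one direct Jensen step, no need to cite log-convexity separately) but leaves the ``for all $\gamma$'' quantifier implicit.
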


We note that $\epsilon= \ln \Big[ \mathbb E_{\frac{1}{b}} \big(e^{\epsilon(b)} \big ) \Big]$ provides a tight upper bound since it gives the overall $e^\epsilon$ of an R$^2$DP mechanism as the average of \DP leakage. Next, we examine a set of well-known
PDFs as second-fold distribution to identify the distribution that offers a significantly improved utility compared with the bound given in Theorem~\ref{thm: RPLap mechut}. Promisingly, our analytic evaluations for \emph{three} of these distributions, i.e., Gamma, uniform and
truncated Gaussian distributions demonstrate such a payoff (Appendix~\ref{cases} theoretically analyzes several case study PDFs). We note that those chosen distributions are general enough to cover many of other probability distributions (e.g., Exponential, Erlang, and Chi-squared distributions are special cases of Gamma distribution).

\subsubsection{Deriving Error Bounds}
The error bounds of the R$^2$DP mechanism under some well-known
utility metrics are shown in Table~\ref{tablemetrics}. The key idea in deriving these results is to calculate the mean of each utility
metric over the PDF of RV $1/b$ (which is the linear combination of
RVs in multiple PDFs). Specifically, given the error bound $e_L(b)$
for deterministic variance (i.e., Laplace mechanism), the total error
bound of an R$^2$DP mechanism will be the mean $\int_0^\infty e_L(b)
f_b(b) db$.  The results shown in Table~\ref{tablemetrics} can be
easily applied to optimize those metrics in corresponding applications
(e.g., \textbf{$\ell_1$} for private record
matching~\cite{inan2010private}, \textbf{ $\ell_2$} for location
privacy~\cite{bordenabe2014optimal}, usefulness for machine
learning~\cite{Blum:2008:LTA:1374376.1374464}, Mallows for social
network analysis~\cite{Hay:2009:AED:1674659.1677046}, and relative
entropy (with a degree $\alpha)$ for semi-supervised
learning~\cite{grandvalet2005semi}). 

\begin{table}[ht]
\caption{Error bound of R$^2$DP under different metrics}
\centering
\begin{adjustbox}{width=0.5\textwidth,center}

\begin{tabular}{|c|c|c|}
\hline
\bf{Metric} & \bf{Dependency to Prior}& \textbf{R$^2$DP Error Bound}
\\
\hline
\bf{$\ell_1$} & independent& $\int\limits_{0}^\infty M_{\frac{1}{b}}(-x) dx$ 
\\
\hline
\bf{ $\ell_2$}& independent & $\sqrt{2\iint\limits_0^\infty  M_{\frac{1}{b}}(-u) du dx}$
\\
 \hline 
 
Usefulness & independent & $1-M_{\frac{1}{b}}(-\gamma)$\\
\hline
Mallows (p) & dependent & $\left([\sum_{i=1}^n |N_i\sim[-M'_{\frac{1}{b}}(-x)/2]|^p]/n\right)^{1/p}  $\\

\hline
Relative Entropy ($\alpha)$ & dependent & $\frac{\log \sum_{x\in\mathcal{X}}^n p(x)^\alpha q(x)^{1-\alpha}}{\alpha-1} \text{s.t} (q(x)-p(x))\sim -M'_{\frac{1}{b}}(-x)/2$\\
\hline
\end{tabular}
\end{adjustbox}
	\label{tablemetrics}
\end{table}

In this context, the $\ell_1$, $\ell_2$ and usefulness metrics (as
defined in Section \ref{sec:metric}) are independent to the prior
(i.e., not depending on the distribution of the true results). The
metrics will be evaluated based on the deviation between the true and
noisy results (which does not change regardless of the prior). On the
contrary, some other metrics (e.g., Mallows and relative entropy)
depend on the prior distribution of the true results
\cite{Hay:2009:AED:1674659.1677046,grandvalet2005semi}. In such cases,
the metrics will be evaluated based on the deviation between the true
and noisy results w.r.t. the prior in specific experimental settings
(we will discuss those specific priors used in the experiments in
Section~\ref{exp:sec}).

\begin{table*}[ht]
\caption{R$^2$DP compared to Laplace w.r.t. error bounds for learning algorithms}
\centering
\begin{adjustbox}{width=0.9\textwidth,center}
\begin{tabular}{|c|c|c|c|c|}
\hline
&Linear SVM~\cite{ji2014differential} & Bayesian Inference (statistician)~\cite{zhang2016differential} &  Robust Linear Regression~\cite{dwork2009differential} & Naive Bayes~\cite{10.1109/WI-IAT.2013.80} \\
\hline
Laplace & $O(
\frac{\log(1/\beta)}{\alpha^2}+ \frac{1}
{\epsilon\alpha} +
\frac{\log(1/\beta)}
{\alpha \epsilon})$&  $O(mn \log(n)) [1-exp(-\frac{n\epsilon}{2|\mathcal{I}|}]$ & $O(n^{-\epsilon log n})$&$O(\frac{1}{n\epsilon})$
\\
\hline

R$^2$DP & $O(
\frac{\log(1/\beta)}{\alpha^2}+ \mathbb{E}_{\frac{1}{b}}(\frac{b}
{\alpha} +
\frac{b\log(1/\beta)}
{\alpha}))$&  $O(mn \log(n)) [1-M_{\frac{1}{b}}(-\frac{n}{2|\mathcal{I}|})]$ & $O(\mathbb{E}_{\frac{1}{b}}(n^{-\frac{log n}{b}}))$&$O(\mathbb{E}_{\frac{1}{b}}({\frac{b}{n}}))$
\\
\hline
\end{tabular}
\end{adjustbox}
	\label{tableerror}
\end{table*}

In addition to the error bounds given in Table~\ref{tablemetrics}, an
analyst can derive error bounds for more advanced queries,
e.g., those pertaining to learning algorithms
\cite{ji2014differential,zhang2016differential,dwork2009differential,10.1109/WI-IAT.2013.80}. Given
the error bound of Laplace mechanism in an application (e.g., Linear
SVM~\cite{ji2014differential}), the error bound of the R$^2$DP
framework for this application can be derived by taking average of the
Laplace's result over the PDF of $\frac{1}{b_r}$. In particular,
Table~\ref{tableerror} demonstrates the error bounds of R$^2$DP for
some learning algorithms (as shown in Section~\ref{exp:sec}, those
learning algorithms can benefit from integrating R$^2$DP instead of
Laplace).

To derive the error bounds shown in Table~\ref{tablemetrics} and
Table~\ref{tableerror}, the noise parameter(s) and the PDFs used in
R$^2$DP can be released to a downstream analyst. This will not cause
any privacy leakage because, similar to other differential privacy mechanisms, the
privacy protection of R2DP comes from the (first-fold) randomization
(whose generated random noises are never disclosed), which will not be
affected even if all the noise parameter(s) and the PDFs are disclosed
(see Section~\ref{subsec:privacy} and Appendix~\ref{proofsec} for the
formal privacy analysis and proof). We note that, although R$^2$DP
replaces the fixed variance of a standard differential privacy mechanism with a random
variance, this second-fold randomization is not meant to keep the
generated parameters (e.g., the variance) secret, but designed to
cover a larger search space (as detailed in Section~\ref{secsearc}).

\subsection{R$^2$DP Algorithm}

Algorithm~\ref{alg:analyst-actions1} details an instance of the R$^2$DP framework using linear combination of three different PDFs. In particular, the algorithm with $\epsilon$-DP finds the best second-fold distribution using the Lagrange multiplier function (see
Appendix~\ref{sec:lag}) that optimizes the utility metric. Then, it randomly generates the noise using the two-fold distribution (e.g., first-fold Laplace) and injects it into the query. 

\begin{algorithm}[!h]
 \small
 \SetKwInOut{Input}{Input}
 \SetKwInOut{Output}{Output}
\Input{Dataset $D$, Privacy budget $\epsilon$, Query $q(\cdot)$, Metric and its parameters (from data recipient)}

\Output{Query result $q(D)+Lap(b_r)$, DP guarantee $\epsilon$, Second-fold PDF's parameters}

$\Delta q\leftarrow$ Sensitivity ($q(\cdot)$) 

Find optimal parameters from Lagrange Multiplier $\mathcal{L}(\epsilon, \Delta q,\text{metric})=$ $a_{1}^{opt},a_{2}^{opt}, a_{3}^{opt}, k^{opt}, \theta^{opt},a^{opt}_u,b^{opt}_u,\mu^{opt},\sigma^{opt},a_{\mathcal N^T}^{opt}, b_{\mathcal N^T}^{opt}$ 

$X_1 \sim \Gamma(k^{opt}, \theta^{opt})$\\
 $X_2 \sim U(a^{opt}_u,b^{opt}_u)$	\\
 $X_3\sim \mathcal{N}^T(\mu^{opt},\sigma^{opt},a_{\mathcal N^T}^{opt} , b_{\mathcal N^T}^{opt})$

 $\frac{1}{b_r} = a_1^{opt} \cdot X_1 + a_2^{opt} \cdot X_2+a_3^{opt} \cdot X_3$

\textbf{return} $q(D)+Lap(b_r)$, $\epsilon$, $\mathcal{L}(\epsilon, \Delta q,\text{metric})$ 
 	\caption{The Ensemble R$^2$DP Algorithm}
 	\label{alg:analyst-actions1}
 \end{algorithm} 

Some advanced applications (e.g., workload queries) that integrate R$^2$DP to improve their utility are discussed in Appendix \ref{sec:RPDP as a Patch to Existing Work1}.

\section{Experimental Evaluations}
\label{exp:sec}
\begin{figure*}[!tb]
\centering
	\subfigure[$\Delta q=0.5, \gamma=0.1$]{
		\includegraphics[angle=0, width=0.25\linewidth]{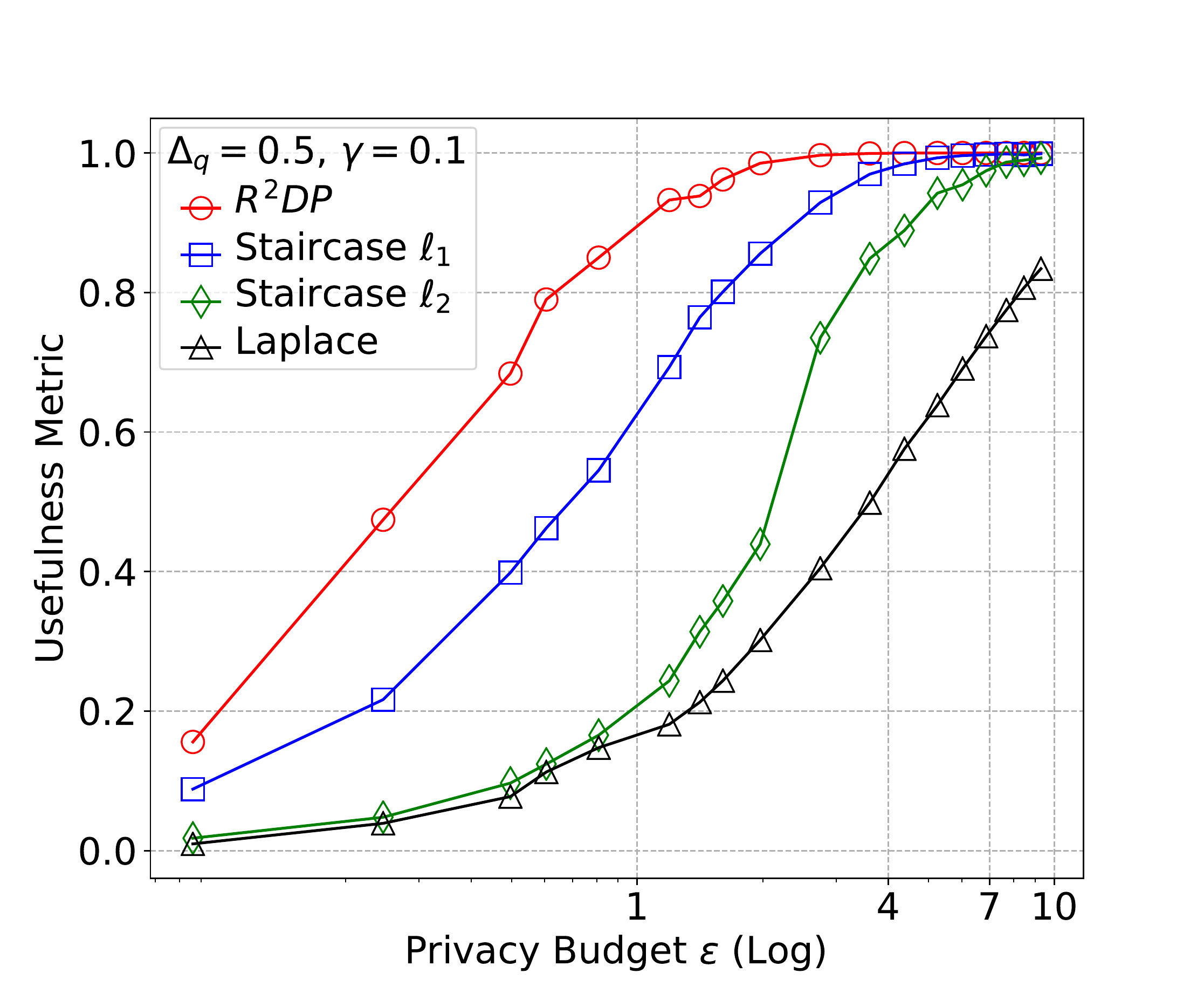}
		 }\hspace{-0.25in}
	\subfigure[$\Delta q=0.5, \gamma=0.4$]{
		\includegraphics[angle=0, width=0.25\linewidth]{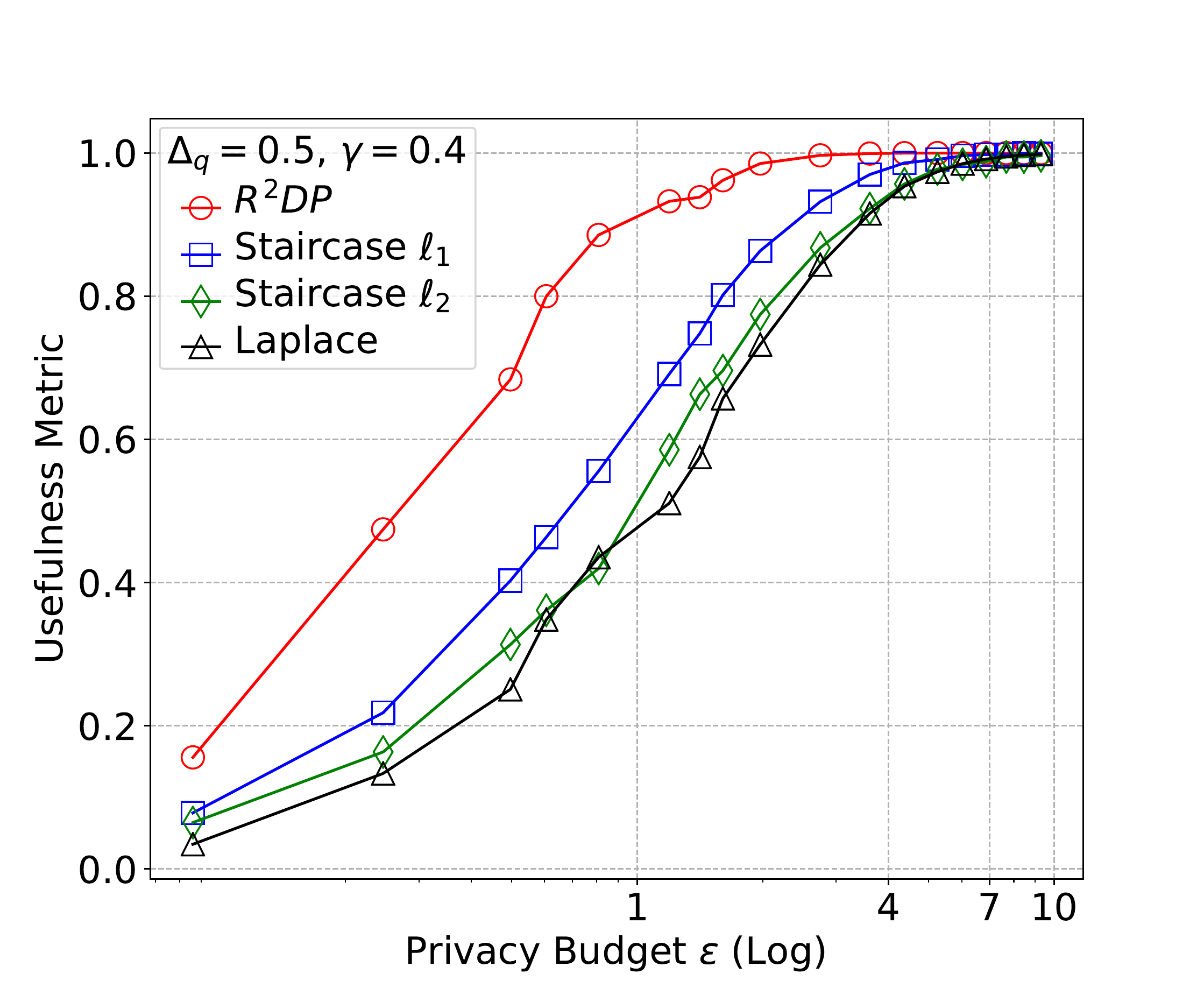}
		 }\hspace{-0.25in}
	\subfigure[$\Delta q=0.5, \gamma=0.6$]{
		\includegraphics[angle=0, width=0.25\linewidth]{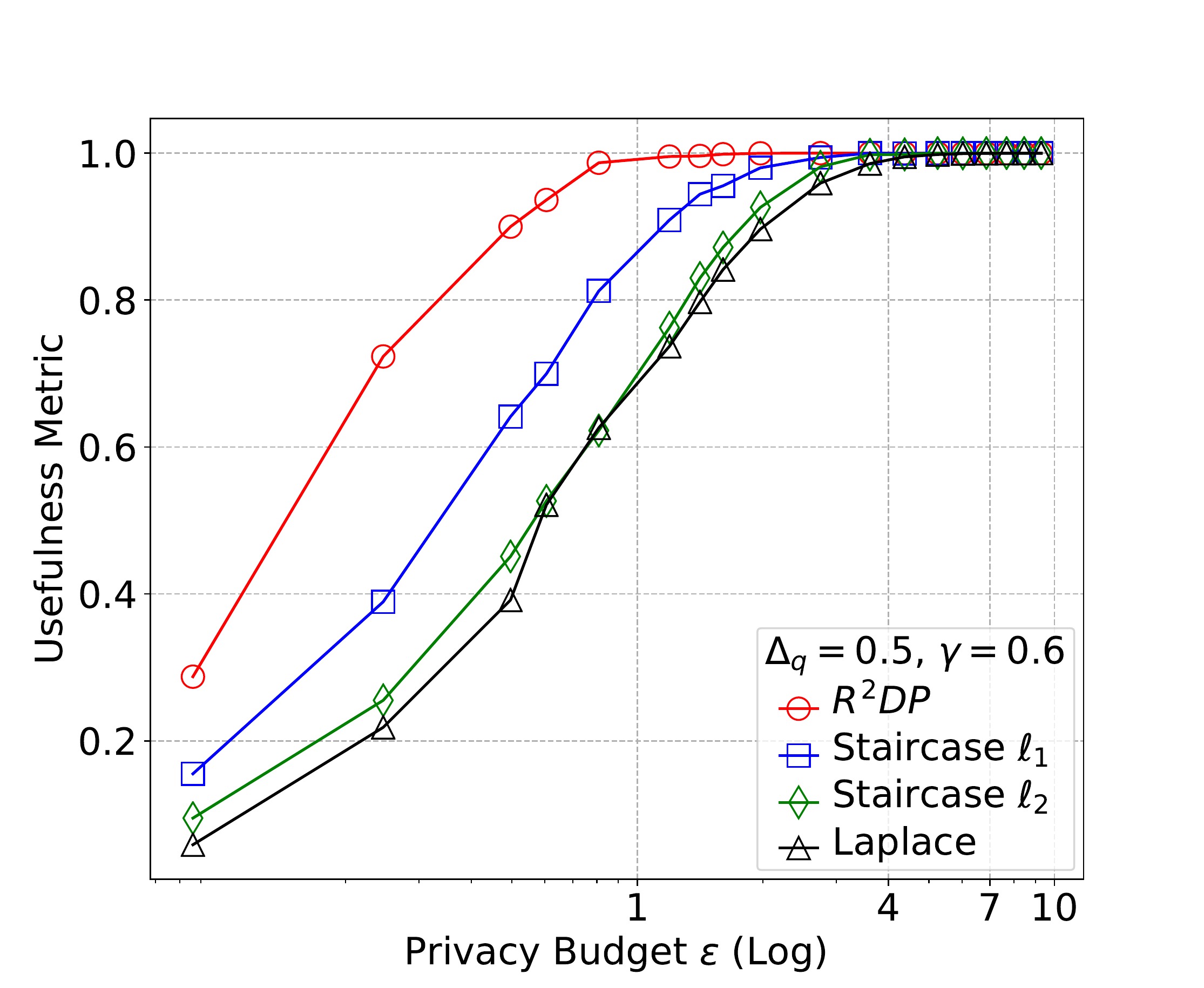}
		 }\hspace{-0.25in}
	\subfigure[$\Delta q=0.5, \gamma=0.9$]{
		\includegraphics[angle=0, width=0.25\linewidth]{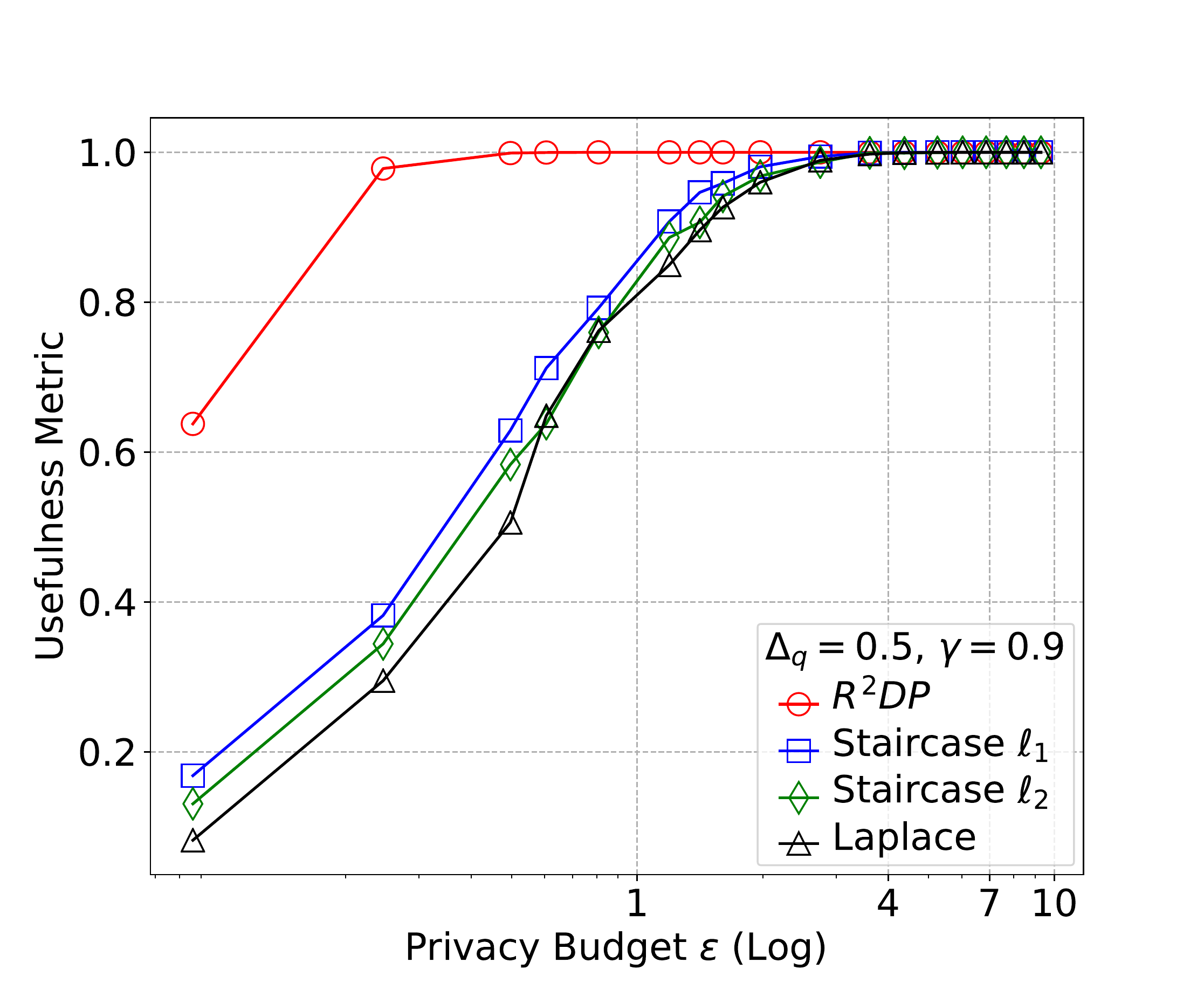}
 		}\hspace{-0.3in}
	\subfigure[$\Delta q=1, \gamma=0.1$]{
		\includegraphics[angle=0, width=0.25\linewidth]{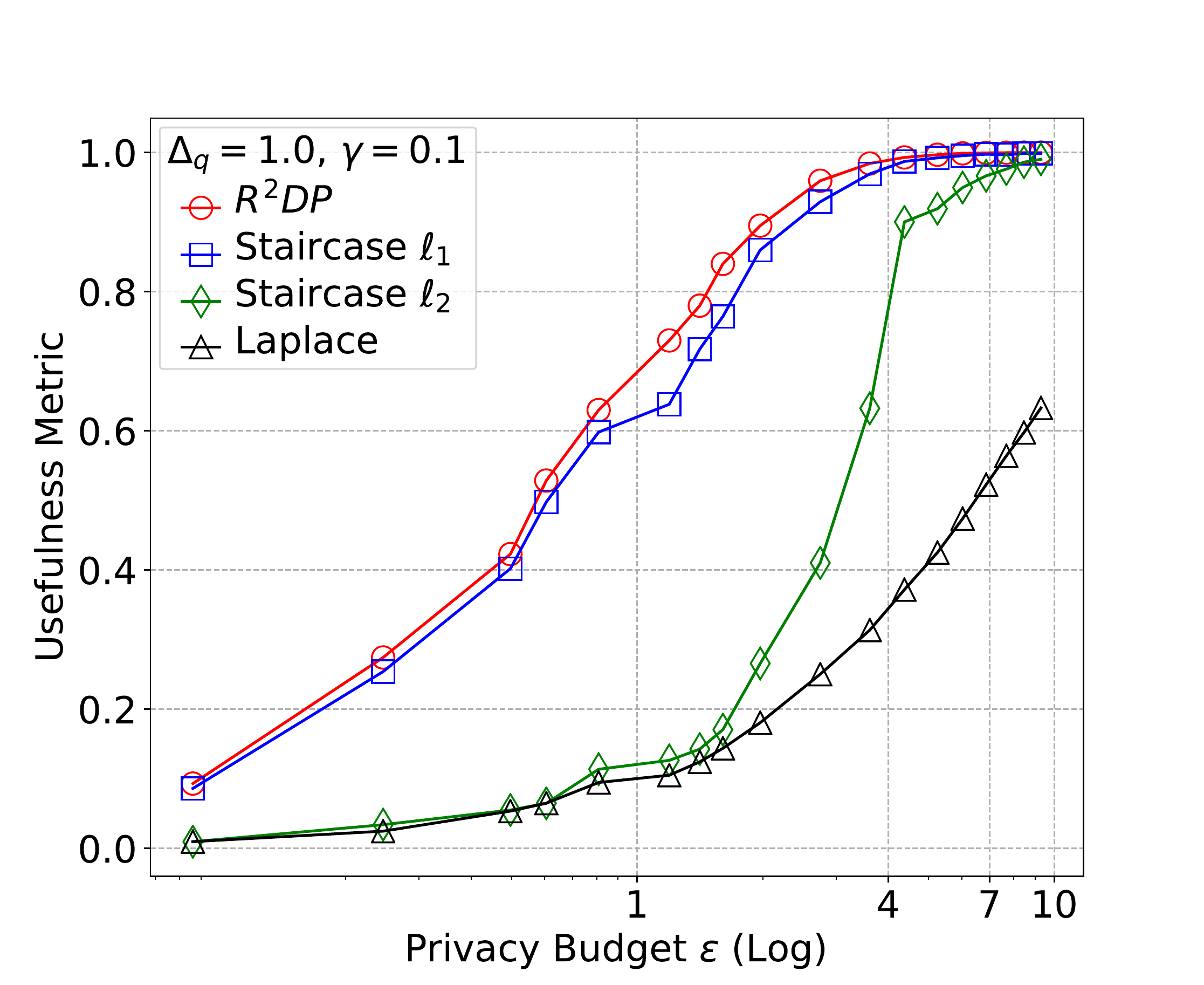}
		 }\hspace{-0.25in}
	\subfigure[$\Delta q=1, \gamma=0.4$]{
		\includegraphics[angle=0, width=0.25\linewidth]{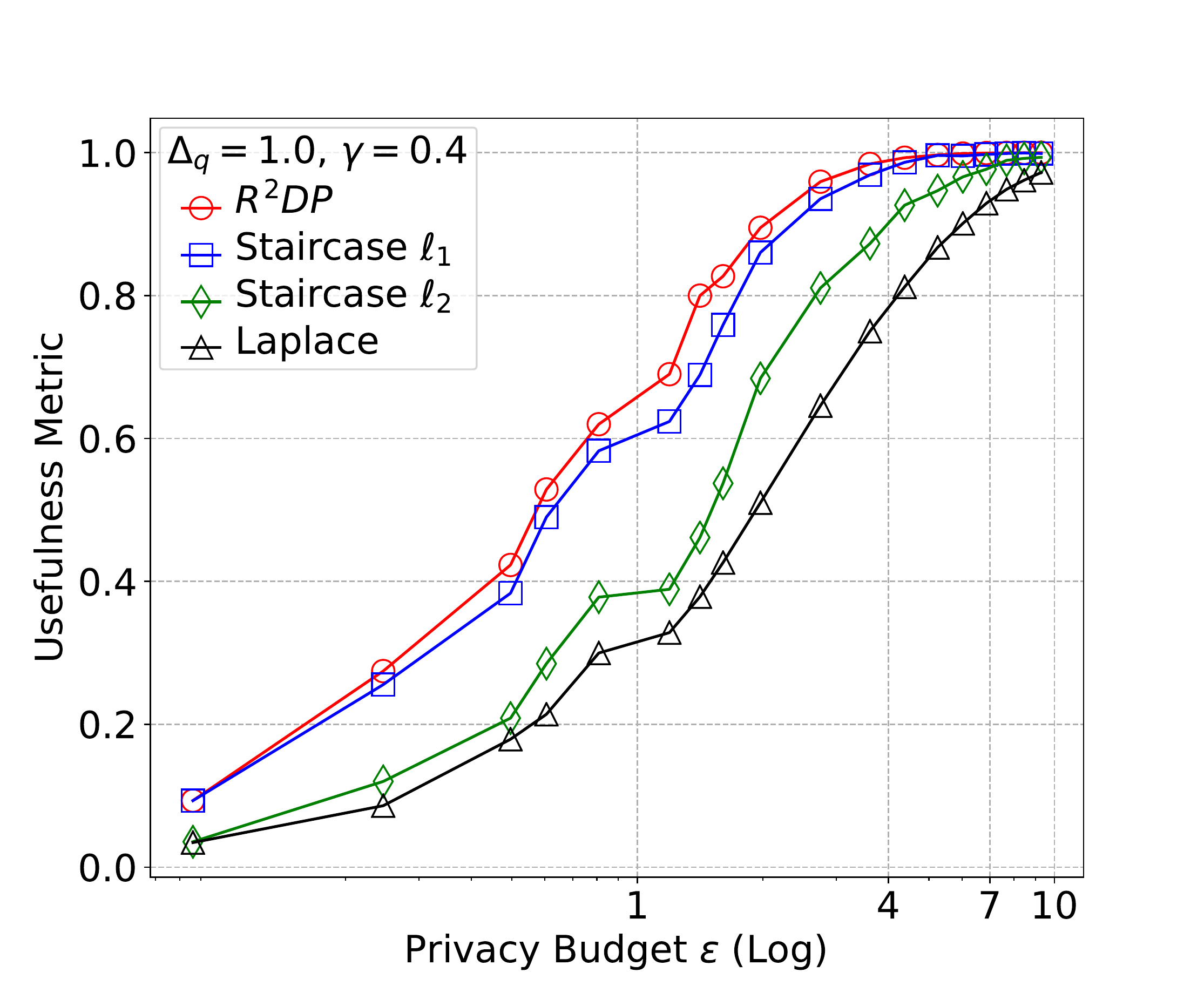}
		 }\hspace{-0.25in}
	\subfigure[$\Delta q=1, \gamma=0.6$]{
		\includegraphics[angle=0, width=0.25\linewidth]{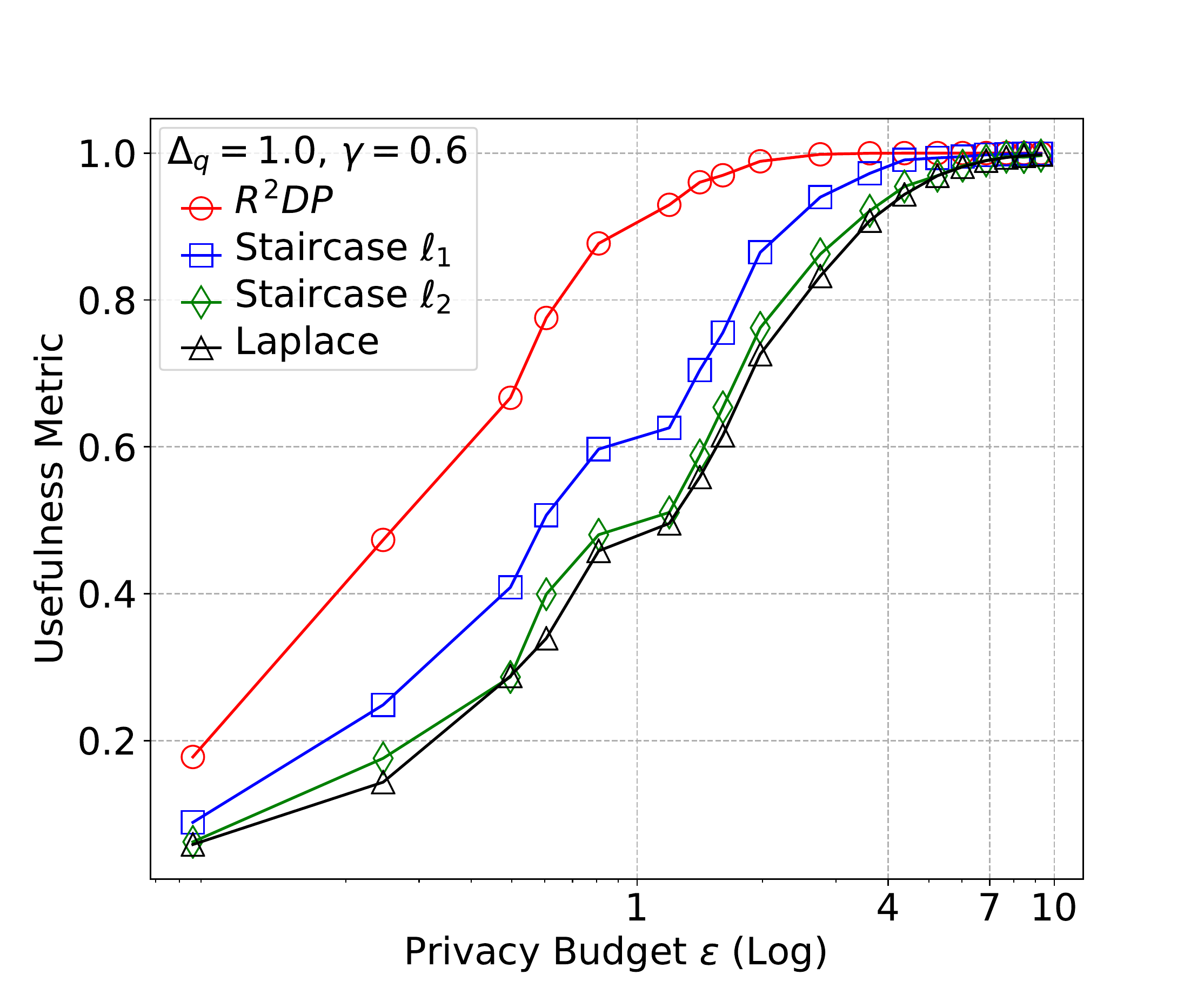}
		 }\hspace{-0.25in}
	\subfigure[$\Delta q=1, \gamma=0.9$]{
		\includegraphics[angle=0, width=0.25\linewidth]{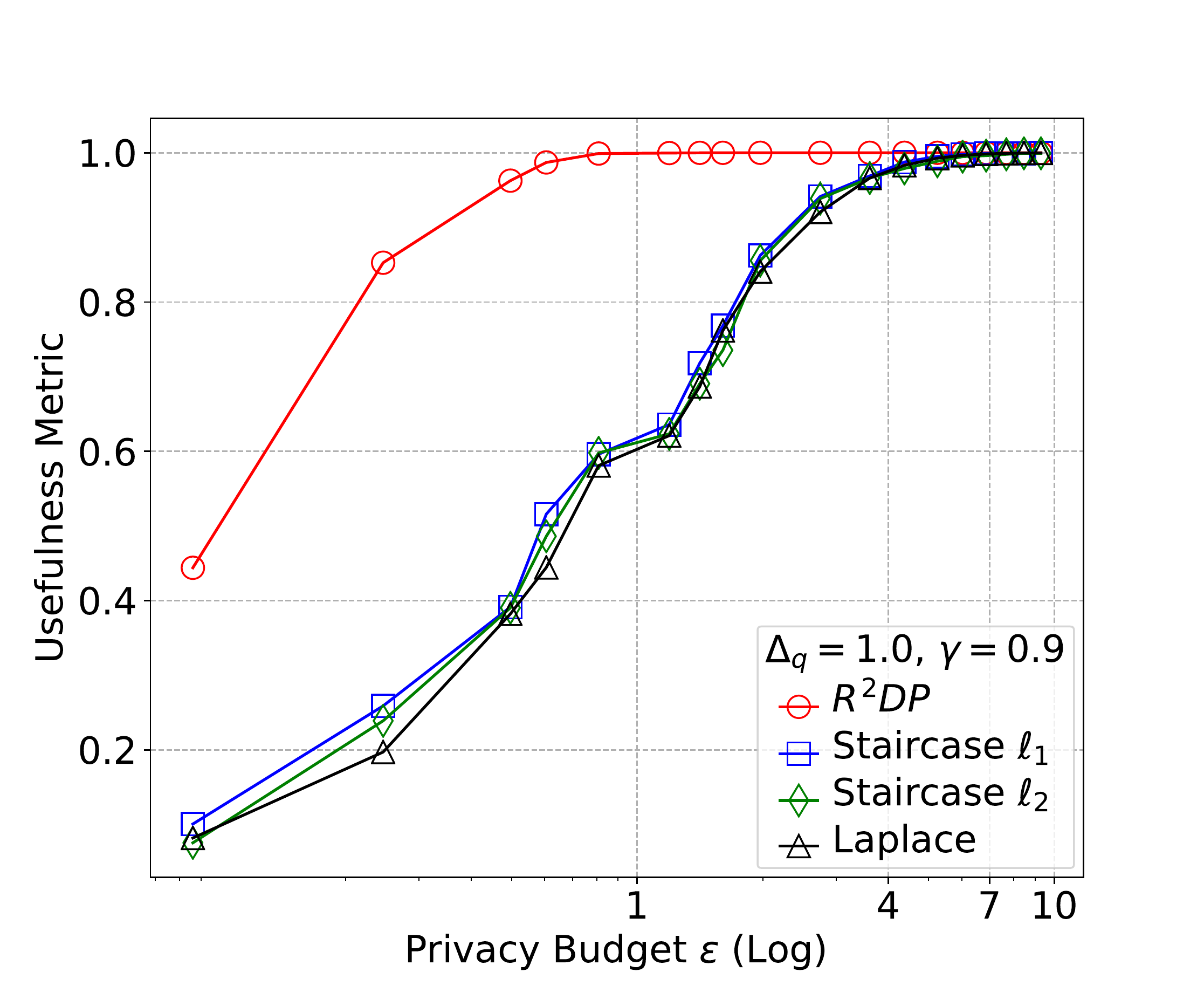}
		 }
	\caption[Optional caption for list of figures]
	{Usefulness metric: R$^2$DP (with five PDFs, i.e., Gamma, Uniform, Truncated Gaussian, Noncentral Chi-squared and Rayleigh distributions) strictly outperforms Laplace and Staircase mechanisms for statistical queries, where the ratio of improvement depends on the values of $\Delta q$, $\gamma$ and $\epsilon$.}\vspace{-0.1in}
	\label{fig:newcnt}
\end{figure*}

In this section, we experimentally evaluate the performance of R$^2$DP using six different utility metrics, i.e., $\ell_1$,
$\ell_2$, entropy, usefulness, Mallows and R\'enyi
divergence. Furthermore, we investigate the tightness of R$^2$DP under R\'enyi \DP (RDP in short)~\cite{mironov2017renyi}
which provides a universal formulation of the privacy losses of various
DP mechanisms, as shown in Appendix~\ref{renyiDPsec} (facilitating the
comparison between different mechanisms).  Our objective is to verify
the following two properties about the performance of the R$^2$DP
framework w.r.t. all seven utility and privacy metrics: (1) R$^2$DP
produces near-optimal results and (2) R$^2$DP performs strictly better
than well-known baseline mechanisms, e.g, Laplace and Staircase
mechanisms, in settings where an optimal PDF is not known, e.g.,
usefulness utility metric or R\'enyi \DP.

\subsection{Experimental Setting}

We perform all the experiments and comparisons on the Privacy Integrated Queries (PINQ) platform \cite{McSherry09}. Besides basic statistical queries, two applications in the current suite (\emph{machine learning} and \emph{social network analysis}) are employed to evaluate the accuracy of R$^2$DP and compare it to Laplace and Staircase mechanisms. 
\subsubsection{Statistical Queries}

In the first set of our experiments, we examine the benefits of
R$^2$DP using basic statistical functions, i.e., count and
average. The dataset comes from a sensor network experiment carried
out in the Mitsubishi Electric Research Laboratories (MERL) and
described in~\cite{Wren:2007:MMD:1352922.1352926}. MERL has collected
motion sensor data from a network of over 200 sensors for a year and
the dataset contains over 30 million raw motion records. To illustrate
the query performance with different sensitivities, we create the
queries based on a subset of the data including aggregated events that
are recorded by closely located sensors over 5-minute intervals. We
formed in this way $10$ input signals corresponding to $10$ spatial
zones (each zone is covered by a group of sensors). Since each
individual can activate several sensors and travel through different
zones, we define moving average functions with arbitrary sensitivity
values, e.g., $\Delta q \in [0.1,5]$. For instance, we could be
interested in the summation of the moving averages over the past 30
min for zones 1 to 4. We apply R$^2$DP w.r.t. usefulness, $\ell_1$,
$\ell_2$, entropy, and R\'enyi metrics, respectively.

    \subsubsection{Social Network} 
    Social network degree distribution is performed on a Facebook dataset~\cite{snapnets}. They consist of ``circles'' and ``friends lists'' from Facebook by representing different individuals as nodes (47,538 nodes) and friend connections as edges (222,887 edges). Recall that the Mallows metric is frequently used for social network (graph-based) applications  \cite{10.1007/978-3-642-36594-2_26}. We thus apply R$^2$DP w.r.t. the Mallows metric in this group of experiments.

    \subsubsection{Machine Learning} 
    
    Naive Bayes classification is performed on two datasets: Adult dataset (in the UCI ML Repository) \cite{Kohavi96scalingup} and KDDCup99 dataset \cite{821515}. First, the Adult dataset includes the demographic information of 48,842 different adults in the US (14 features). It can be utilized to train a Naive Bayes classifier to predict if any adult's annual salary is greater than 50k or not. Second, the KDD competition dataset was utilized to build a network intrusion detector (given 24 training attack types) by classifying ``bad'' connections and ``good'' connections. Recall that the usefulness metric is commonly used for machine learning \cite{Blum:2008:LTA:1374376.1374464}. We thus apply R$^2$DP w.r.t. the usefulness in this group of experiments.

\subsection{Basic Statistical Queries}
We validate the effectiveness of R$^2$DP using two basic statistical queries: count (sensitivity=1) and moving average with different window sizes, e.g., sensitivity $\in[0.1,2]$ to comprehensively study the performance of R$^2$DP by benchmarking with Laplace and Staircase mechanisms. We have the following observations.

\begin{figure*}[!tb]
\label{fig:newcnt1}
\centering
	\subfigure[$\ell_1, \Delta q=0.1$]{
		\includegraphics[angle=0, width=0.25\linewidth]{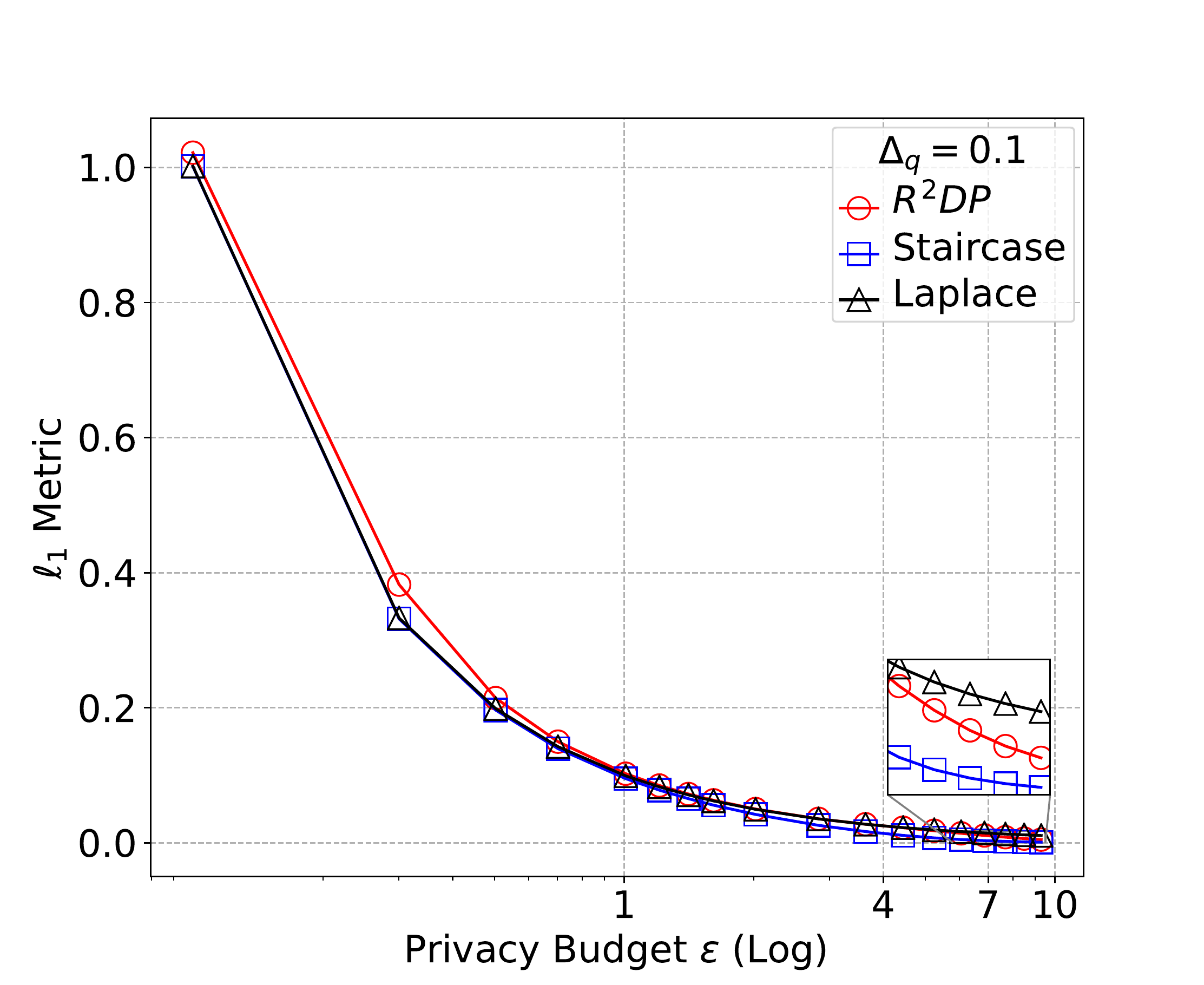}
		 }\hspace{-0.25in}
	\subfigure[$\ell_1, \Delta q=0.5$]{
		\includegraphics[angle=0, width=0.25\linewidth]{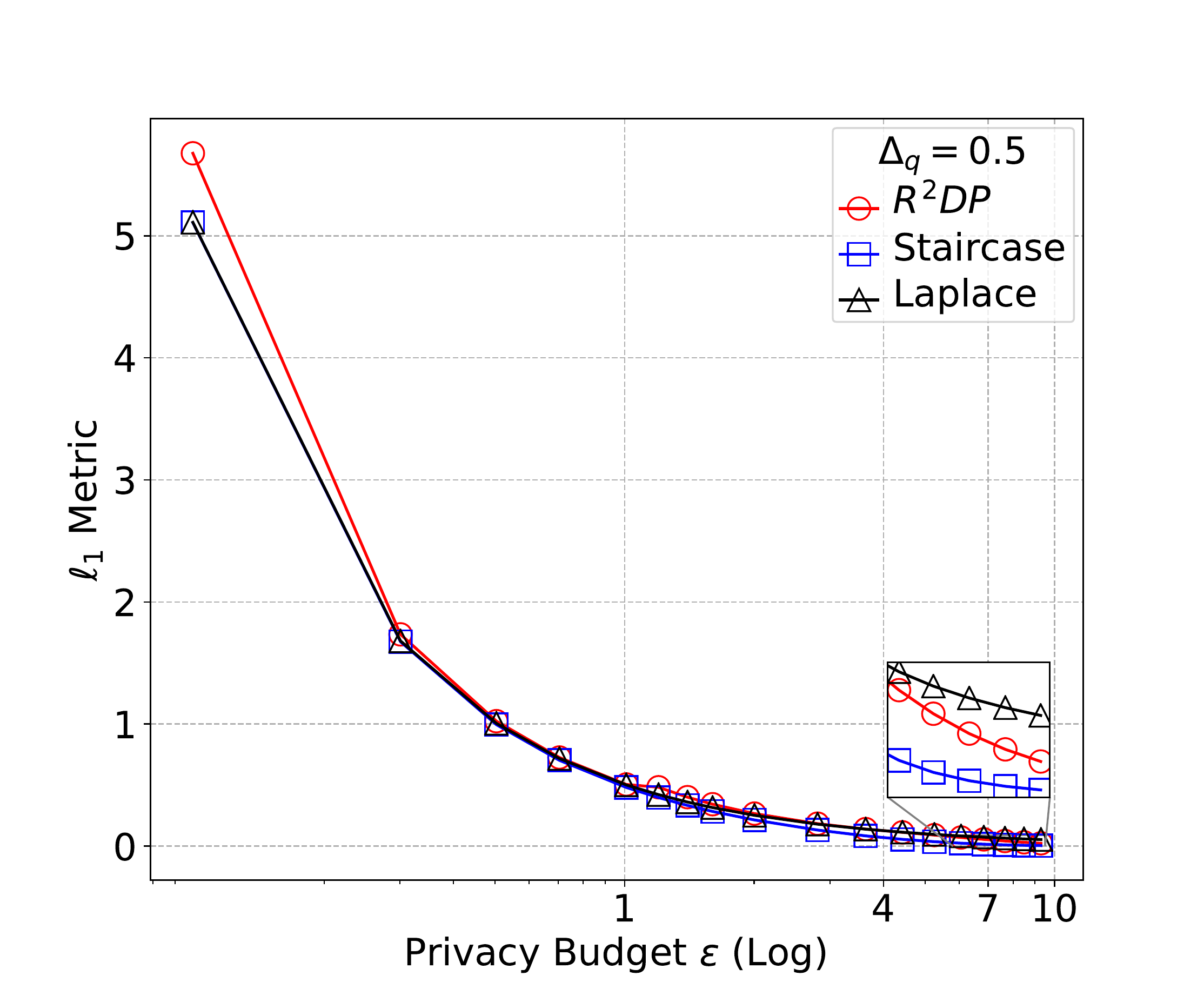}
		 }\hspace{-0.25in}
	\subfigure[$\ell_1, \Delta q=1$]{
		\includegraphics[angle=0, width=0.25\linewidth]{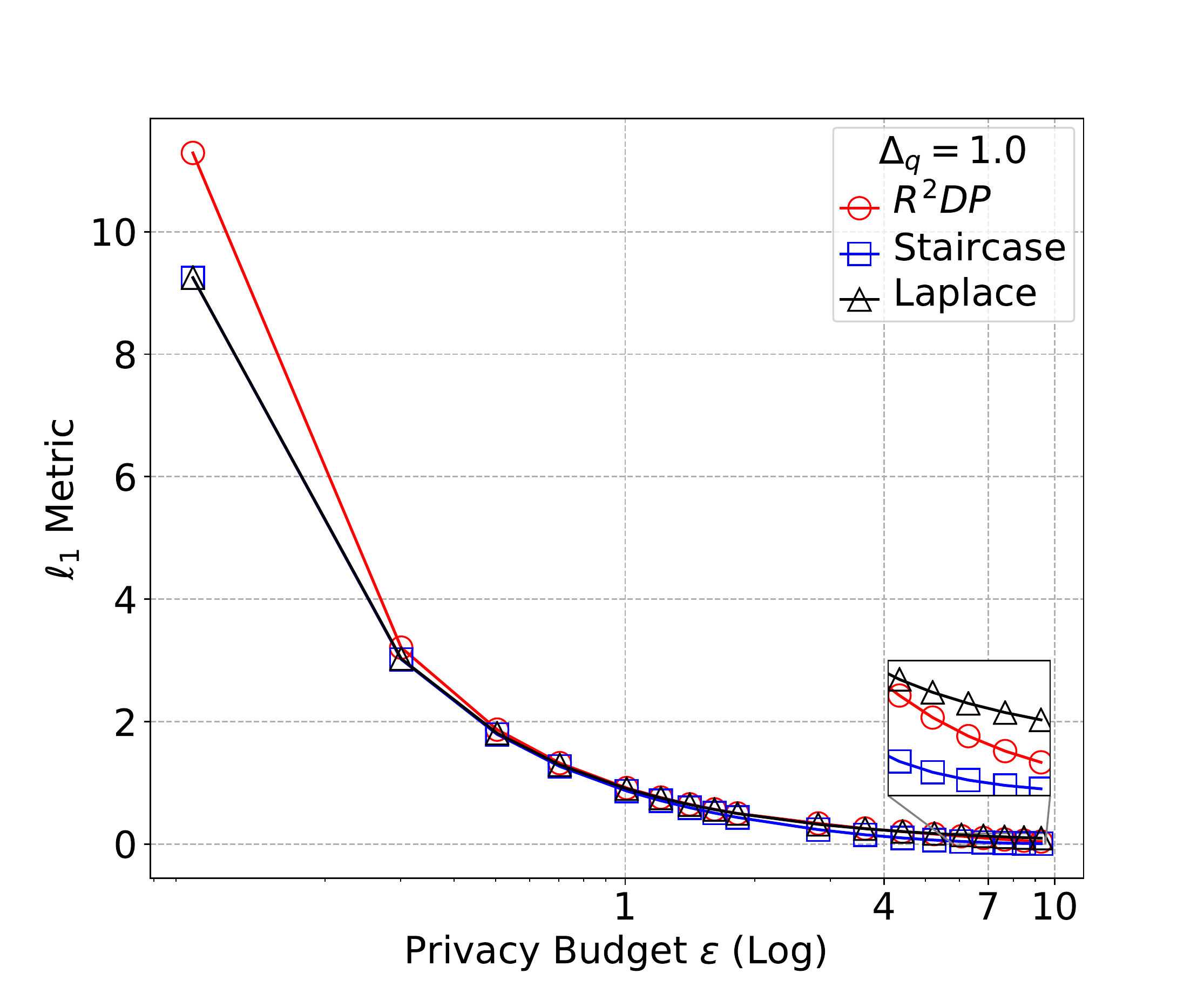}
		 }\hspace{-0.25in}
	\subfigure[$\ell_1, \Delta q=1.5$]{
 		\includegraphics[angle=0, width=0.25\linewidth]{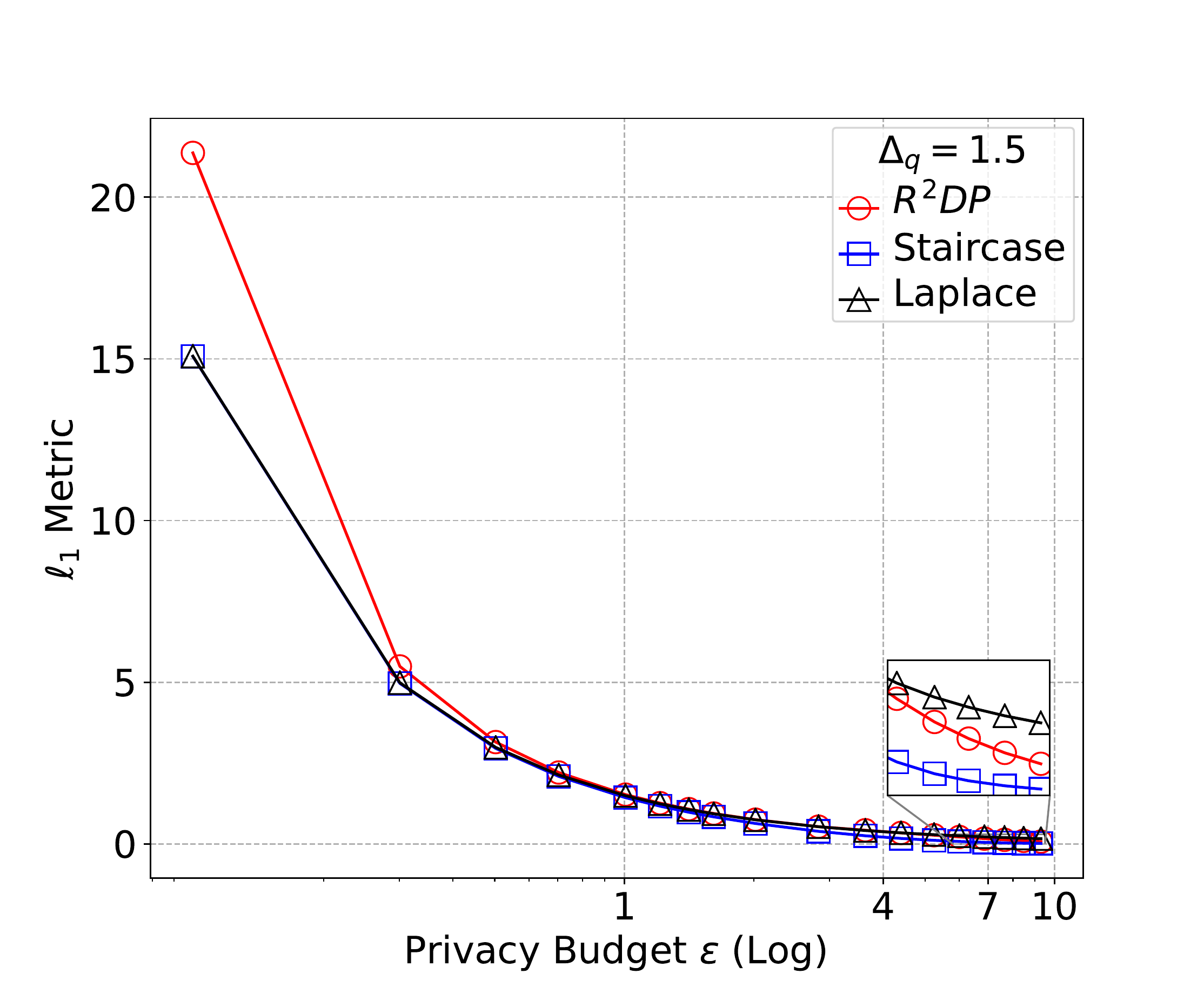}
  		}\hspace{-0.3in}
 	\subfigure[$\ell_2, \Delta q=0.1$]{
 		\includegraphics[angle=0, width=0.25\linewidth]{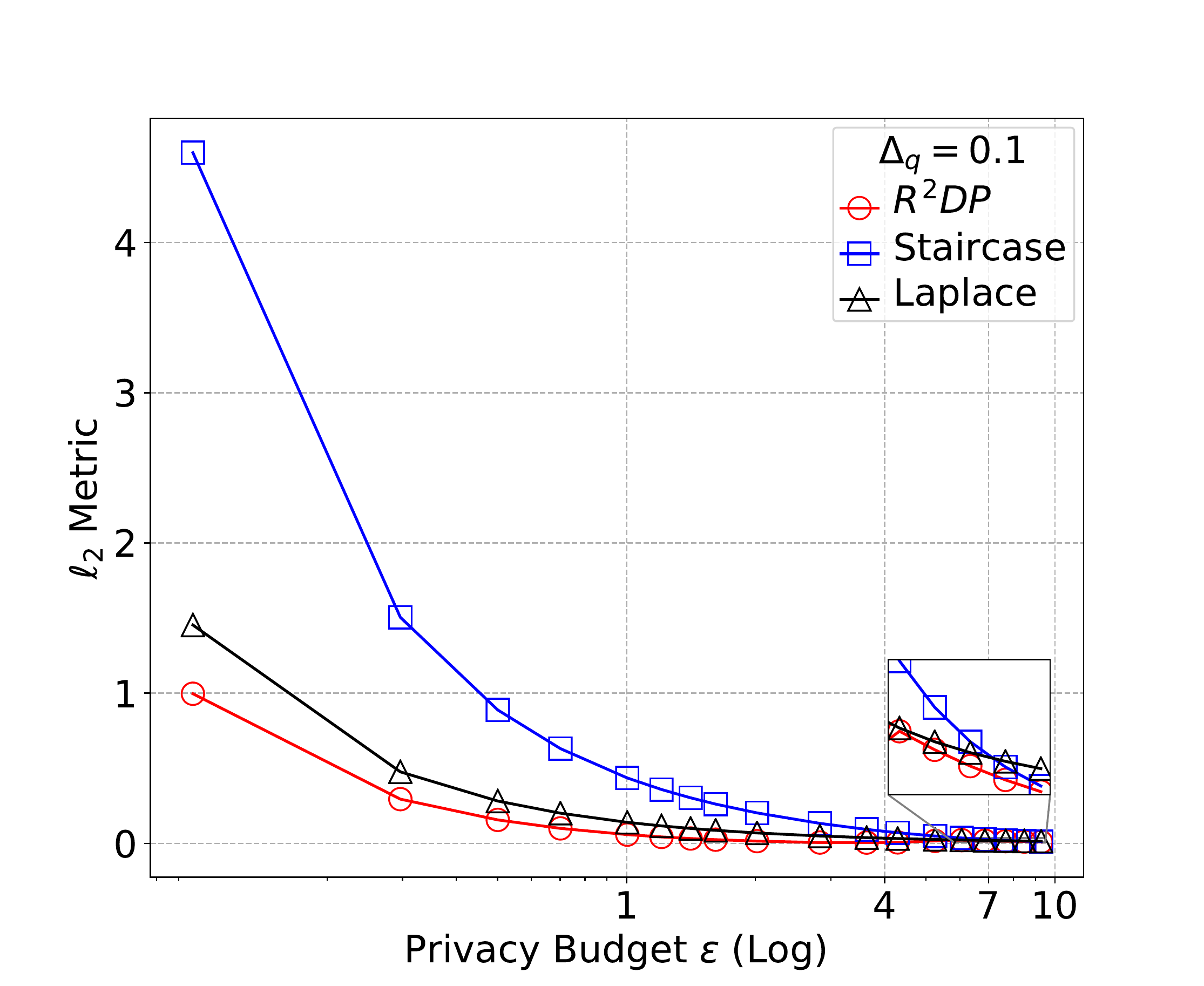}
 		 }\hspace{-0.25in}
 	\subfigure[$\ell_2, \Delta q=0.5$]{
 		\includegraphics[angle=0, width=0.25\linewidth]{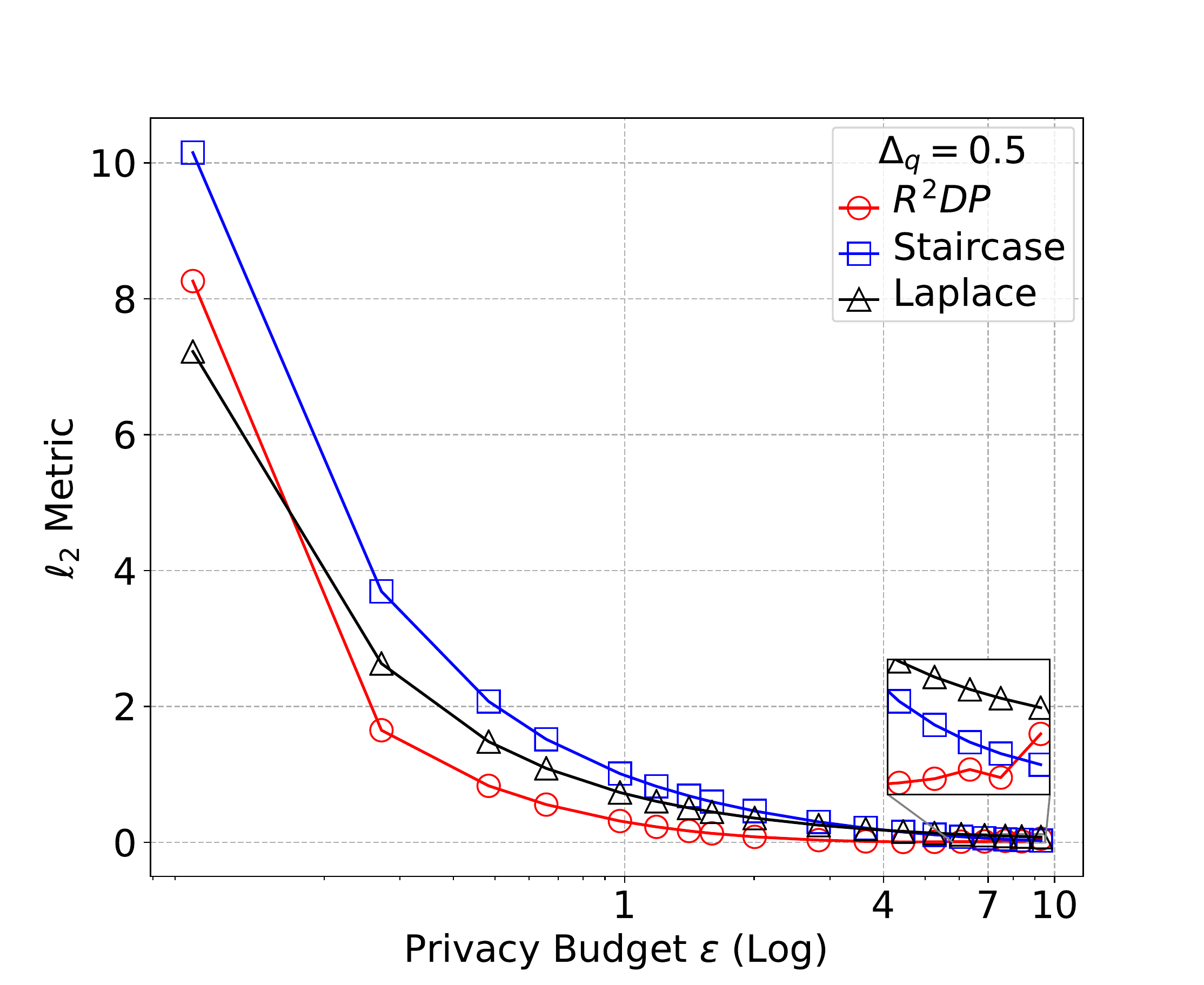}
 }\hspace{-0.25in}
 	\subfigure[$\ell_2, \Delta q=1$]{
 		\includegraphics[angle=0, width=0.25\linewidth]{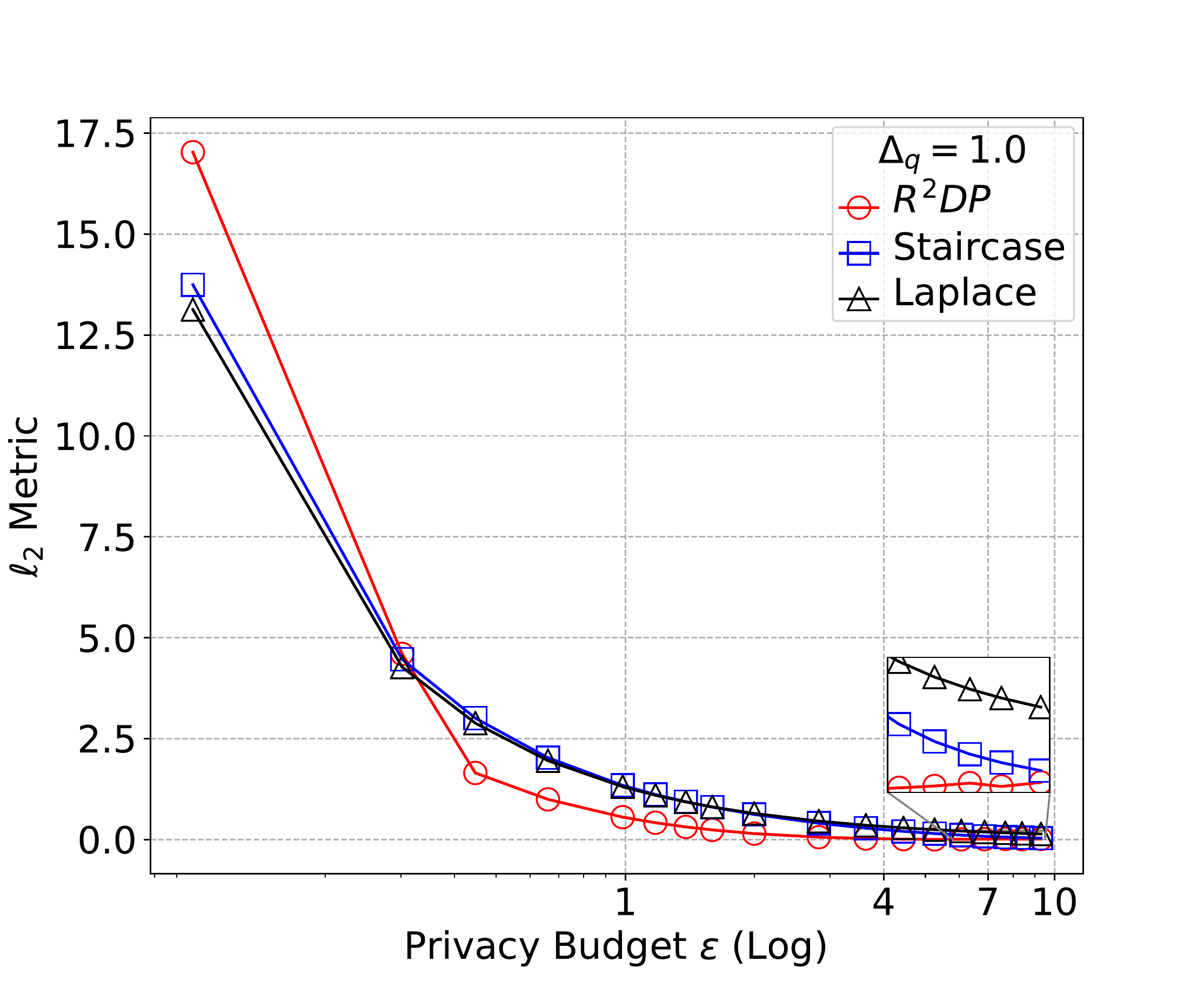}
 		 }\hspace{-0.25in}
 	\subfigure[$\ell_2, \Delta q=1.5$]{
 		\includegraphics[angle=0, width=0.25\linewidth]{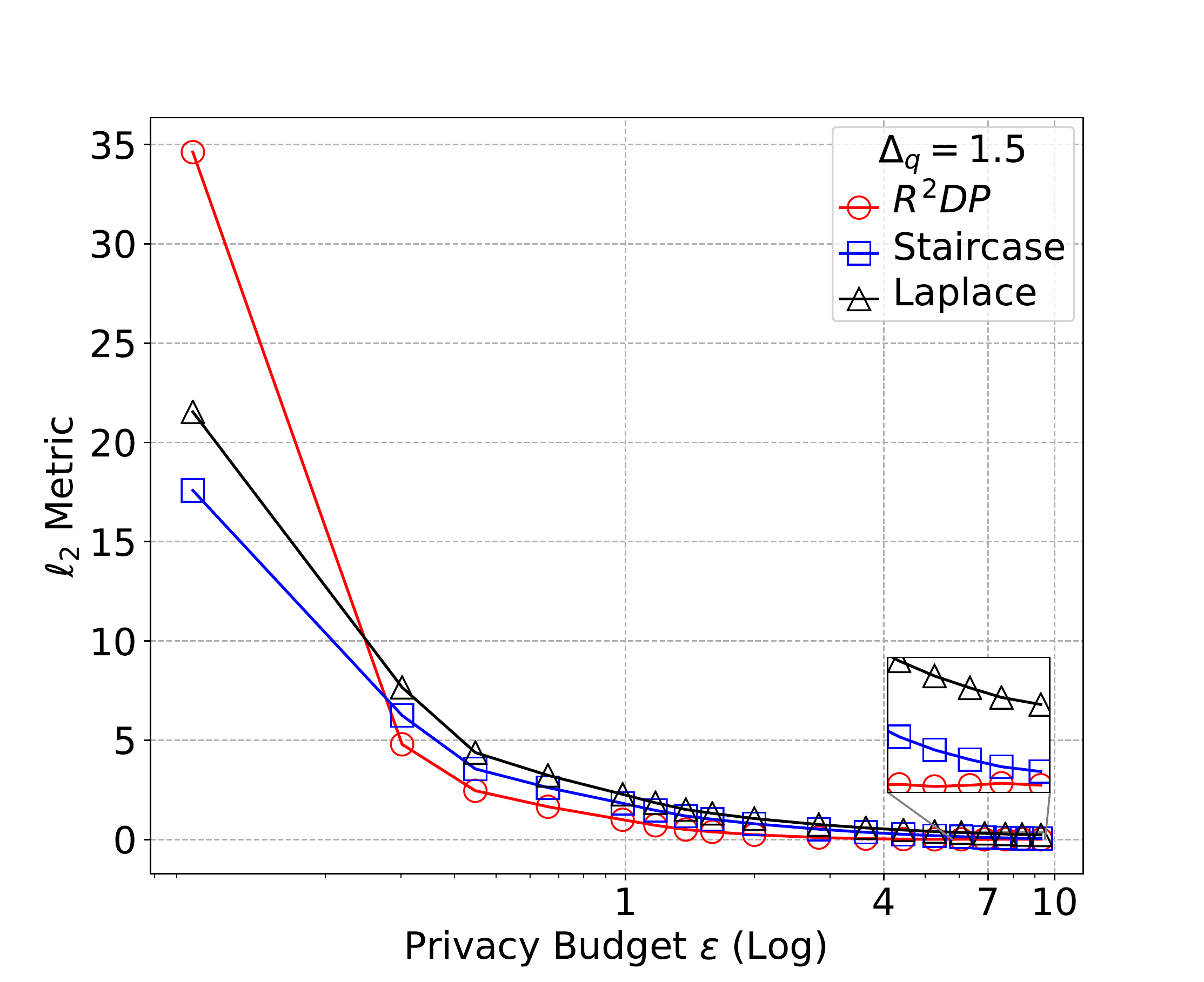}
		 }
	\caption[Optional caption for list of figures]
	{$\ell_1$ and $\ell_2$ metrics:  R$^2$DP compared to Laplace and Staircase mechanisms for statistical queries (with five PDFs, i.e., Gamma, Uniform, Truncated Gaussian, Noncentral Chi-squared and Rayleigh distributions).}
	\label{fig:newld}
\end{figure*}

\subsubsection{Usefulness Metric} 
We compare R$^2$DP with the baseline Laplace and
two classes of Staircase mechanisms proposed in~\cite{Gupte1}
w.r.t. $\ell_1$ and $\ell_2$ metrics, by varying the privacy budget
$\epsilon$, four error bounds $\gamma\in\{0.1, 0.4, 0.6, 0.9\}$ and
two different sensitivities (Section \ref{numericsec} additionally
shows numerical results to provide a more comprehensive evaluation for
the usefulness metric). As shown in Figure \ref{fig:newcnt}, R$^2$DP
generates strictly better results w.r.t. the usefulness metric, and
the ratio of improvement depends on values of $\epsilon$, $\Delta q$
and $\gamma$. In particular, we observe that the improvement is
relatively larger for a larger error bound and smaller sensitivity
(Figure~\ref{fig:newcnt} (a,b,e,f) vs. (c,d,g,h)). One important
factor determining the improvement is the ratio between $\gamma$ and
$\Delta q$, since it exponentially affects the search space of the
R$^2$DP mechanism. Furthermore, we observe that the Laplace and
the staircase mechanisms are not optimal (w.r.t. usefulness) for very
small and large values of $\epsilon$, respectively, even though they
are known to be optimal under other utility metrics (e.g.,
\cite{7353177}).

\begin{figure*}[!tb]
\centering
	\subfigure[$\Delta q=0.1$]{
		\includegraphics[angle=0, width=0.25\linewidth]{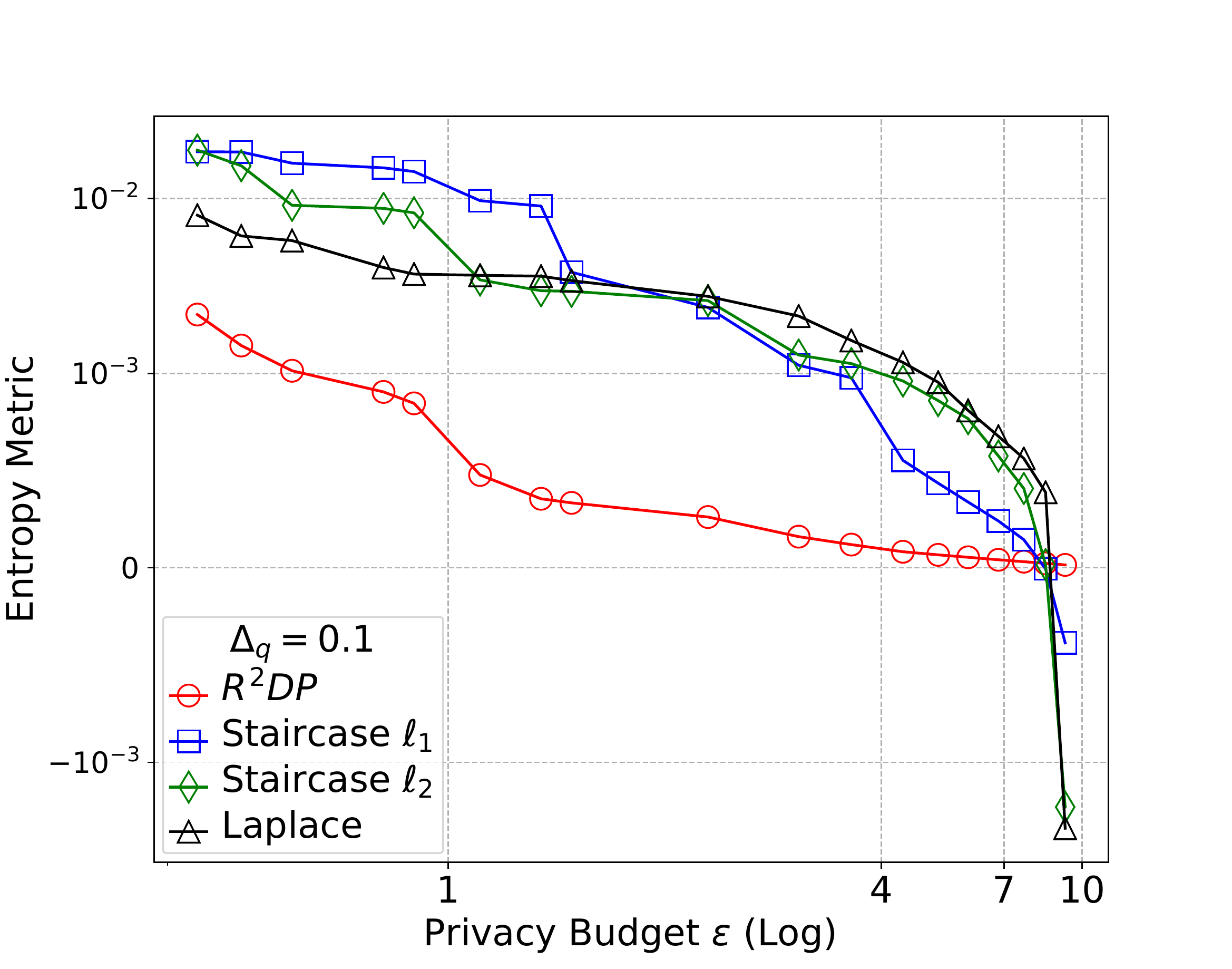}
		\label{entropy1} }\hspace{-0.25in}
	\subfigure[$\Delta q=0.5$]{
		\includegraphics[angle=0, width=0.25\linewidth]{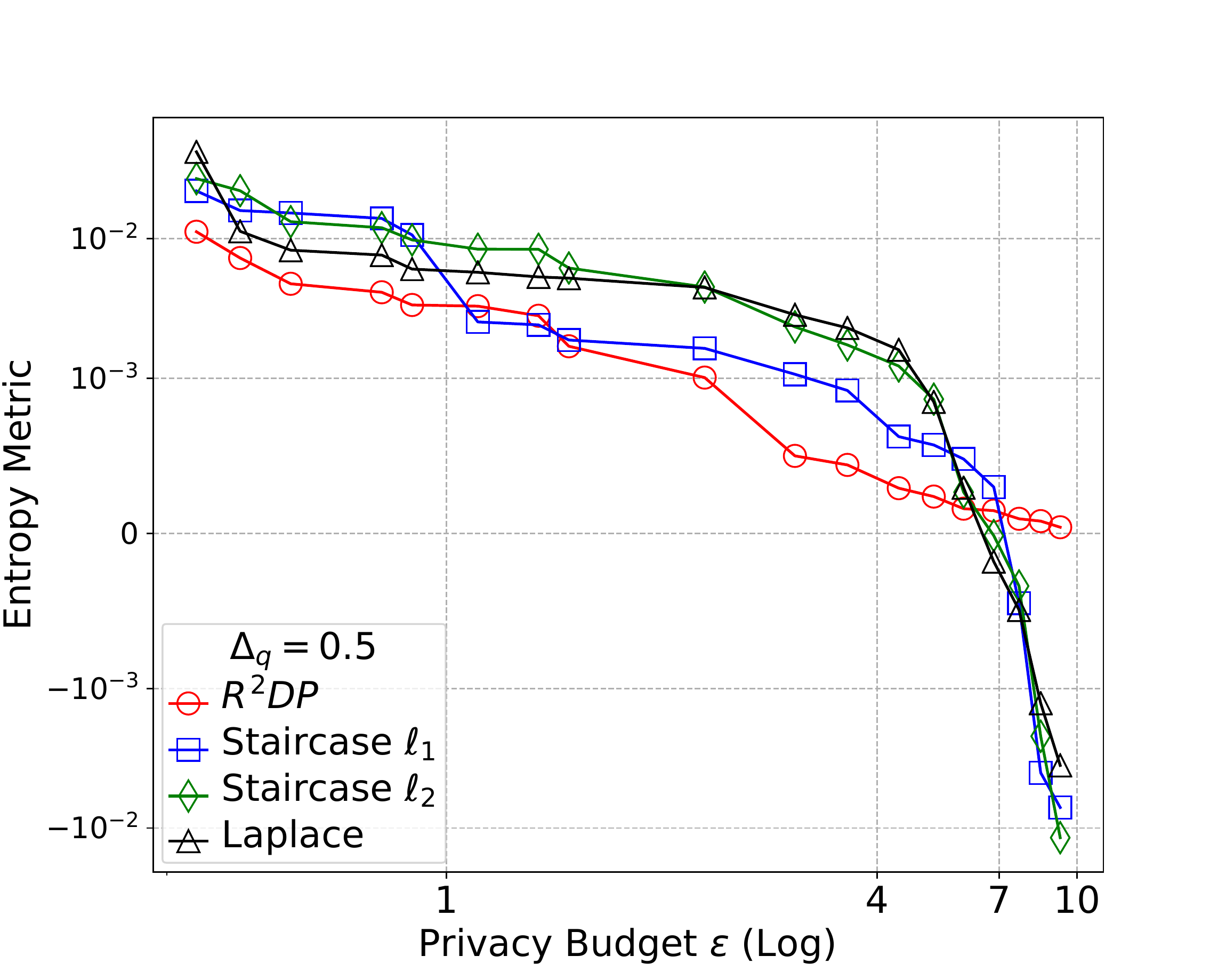} 		\label{entropy2}}\hspace{-0.25in}
\subfigure[$\Delta q=1$]{
		\includegraphics[angle=0, width=0.25\linewidth]{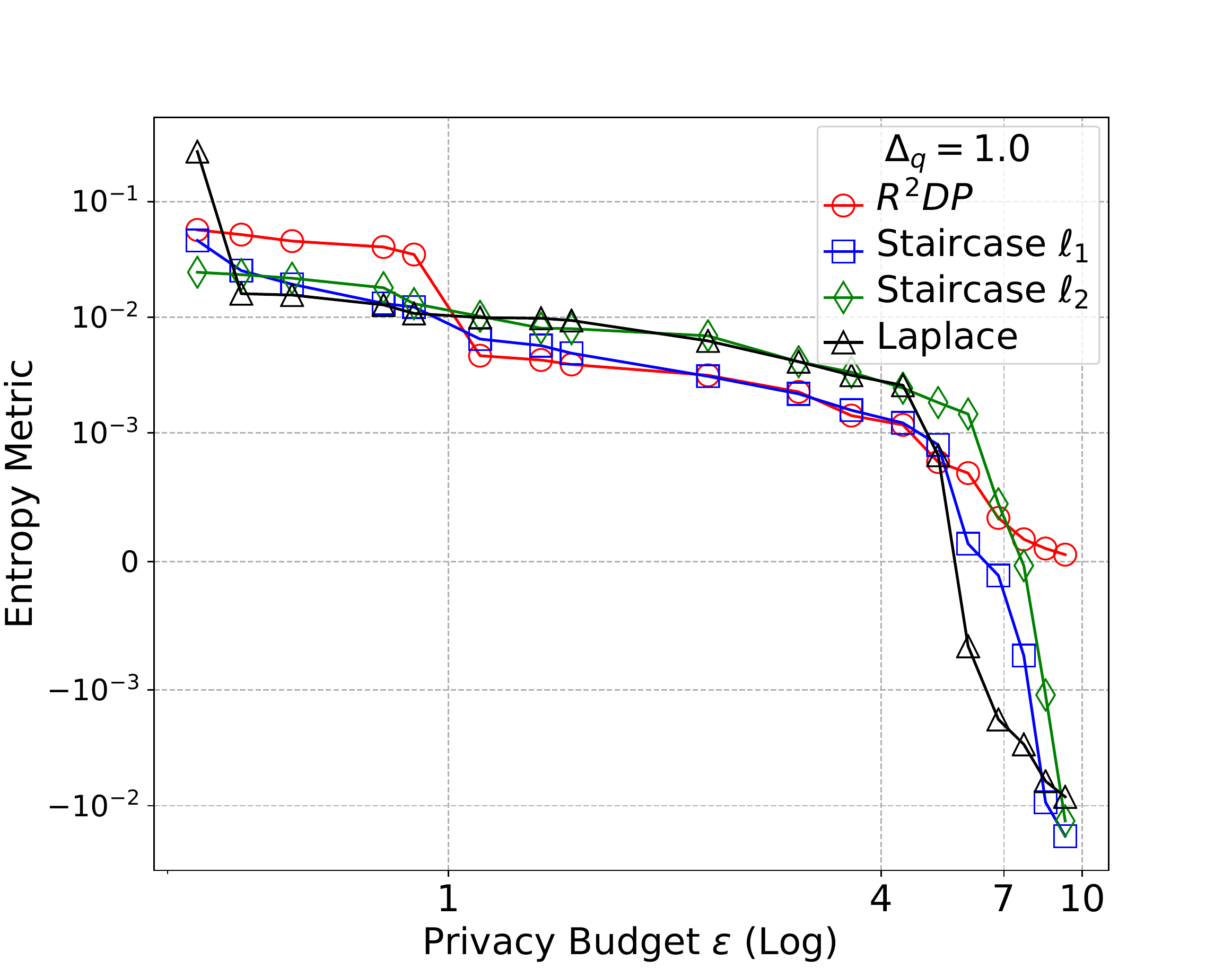}
		 }\hspace{-0.25in}
	\subfigure[$\Delta q=1.5$]{
		\includegraphics[angle=0, width=0.25\linewidth]{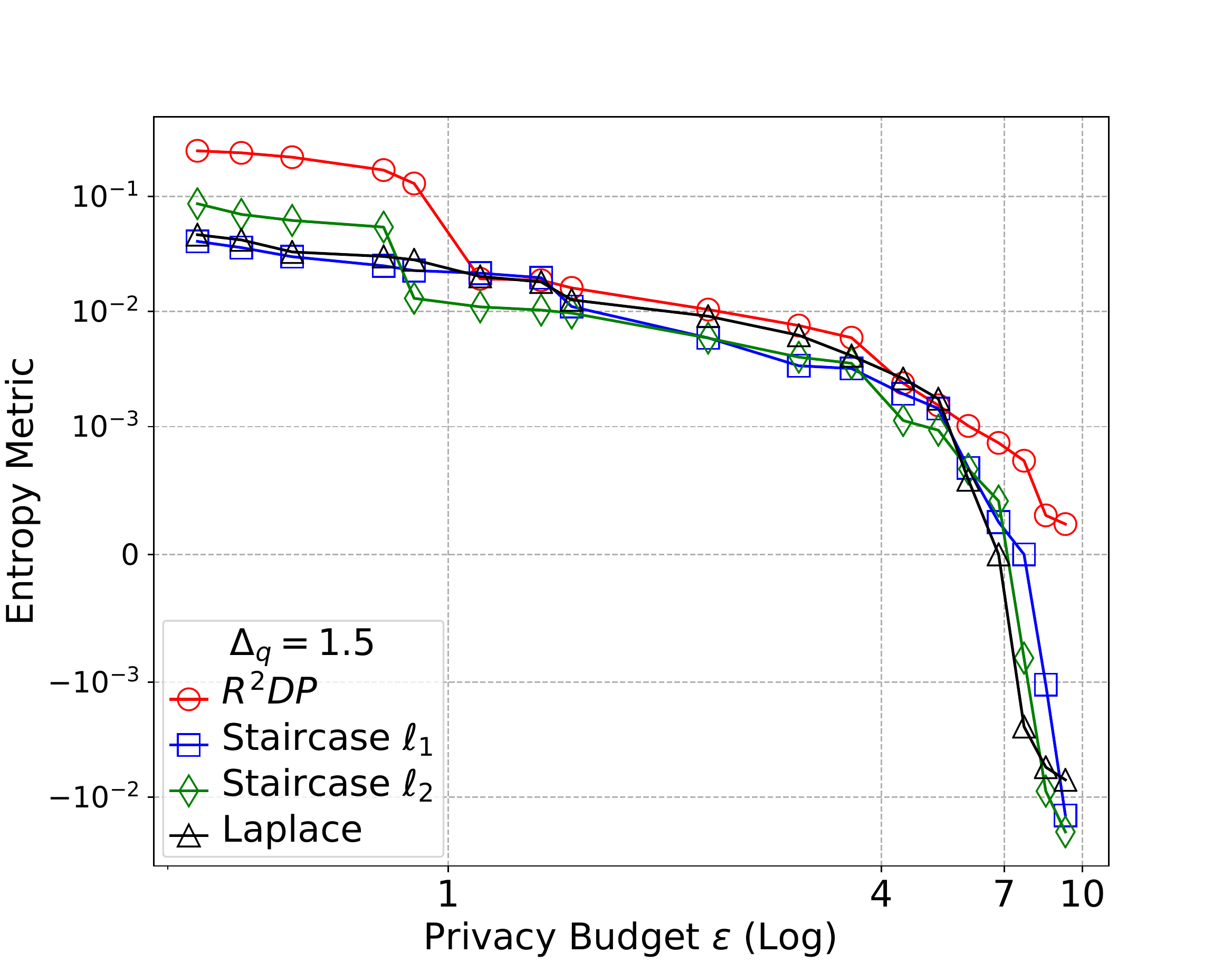} }
	\caption[Optional caption for list of figures]
	{KL Divergence (Relative entropy metric): R$^2$DP (with five PDFs, i.e., Gamma, Uniform, Truncated Gaussian, Noncentral Chi-squared and Rayleigh distributions) compared
to Laplace and Staircase mechanisms.}\vspace{0.05in}
	\label{fig:KLentropy_2}
\end{figure*}

\begin{figure*}[!tb]
\centering
	\subfigure[$\alpha=2, \Delta q=0.5$]{
		\includegraphics[angle=0, width=0.25\linewidth]{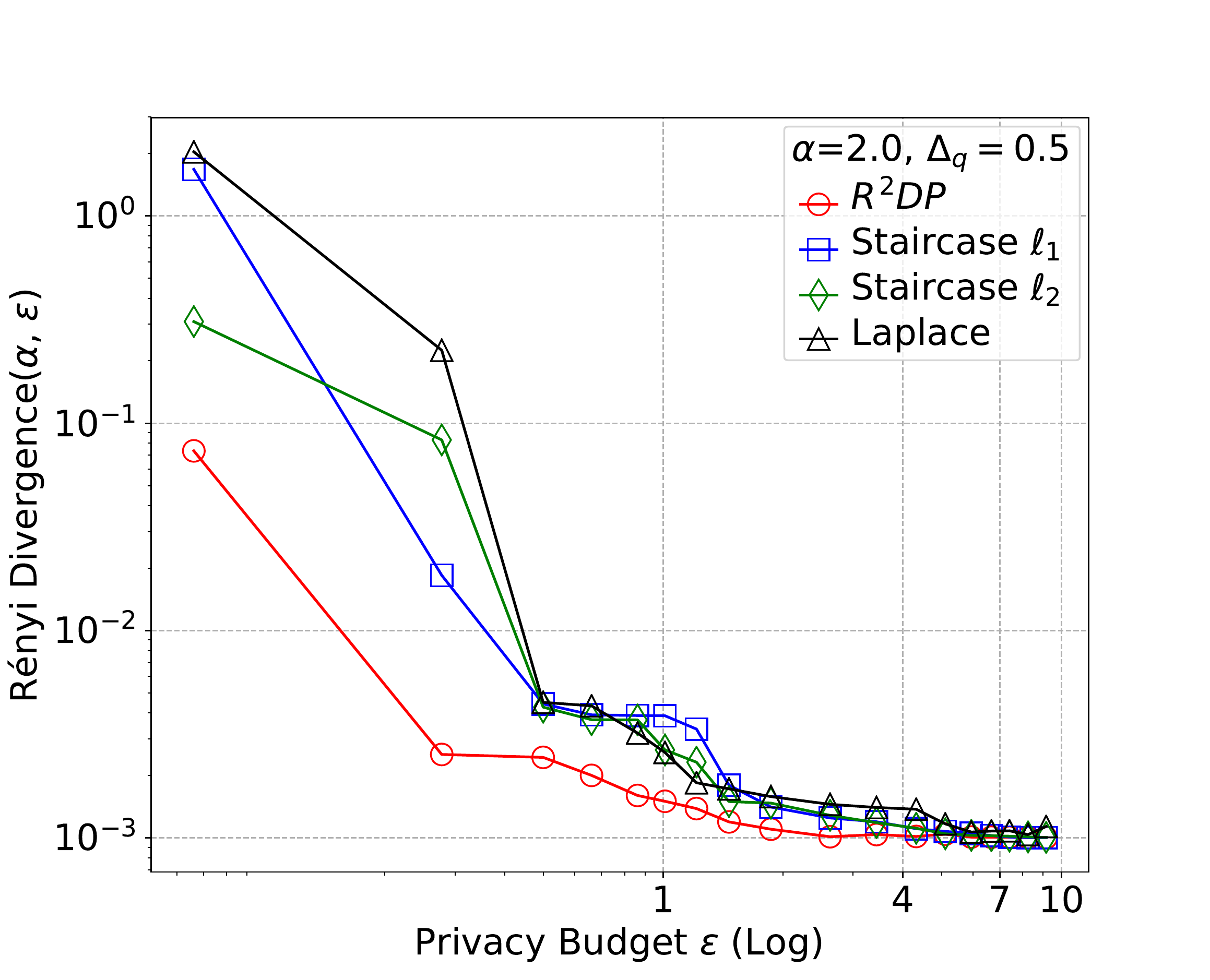}
		 }\hspace{-0.25in}
	\subfigure[$\alpha=2, \Delta q=1$]{
		\includegraphics[angle=0, width=0.25\linewidth]{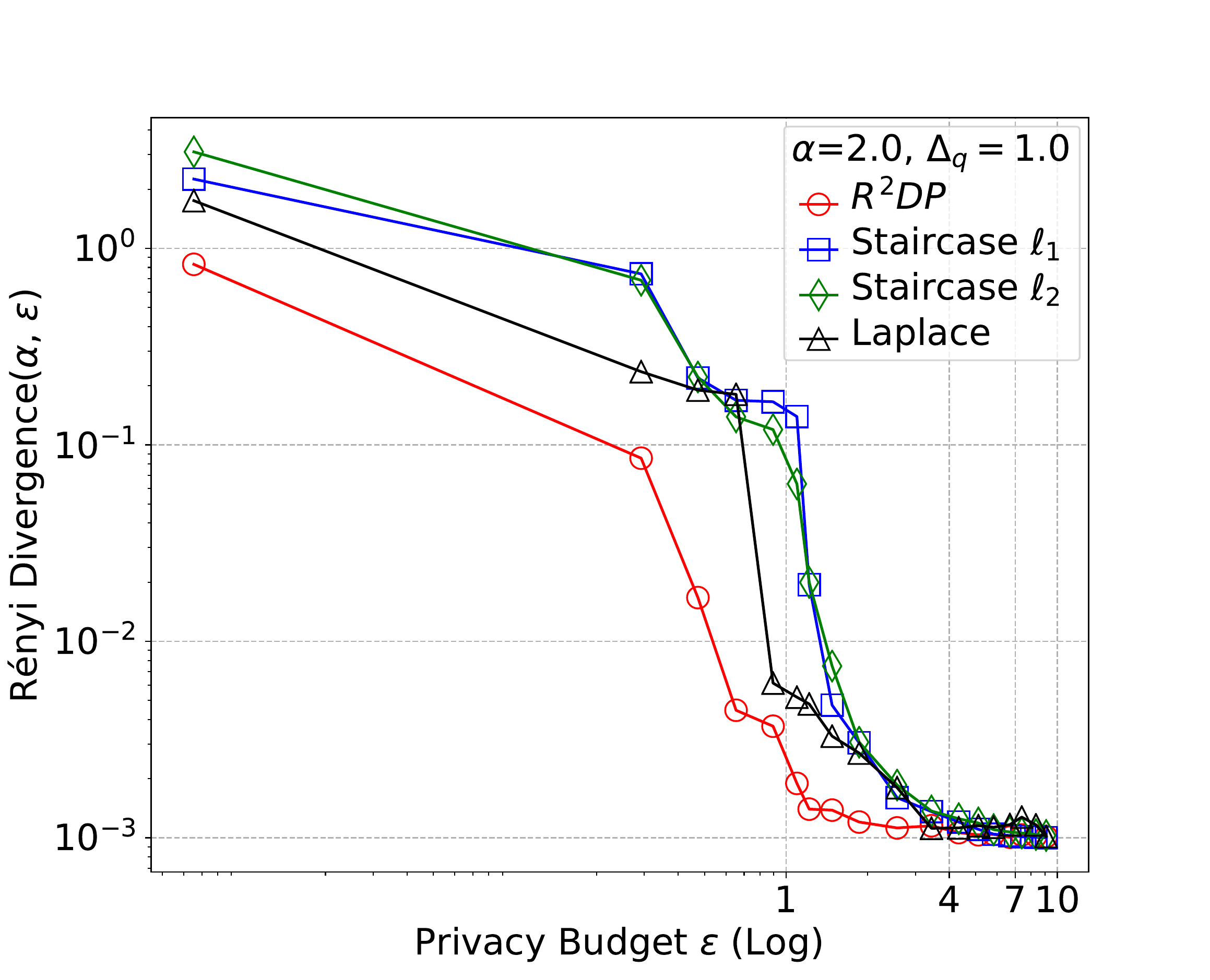} 		}\hspace{-0.25in}
\subfigure[$\alpha=3, \Delta q=0.5$]{
		\includegraphics[angle=0, width=0.25\linewidth]{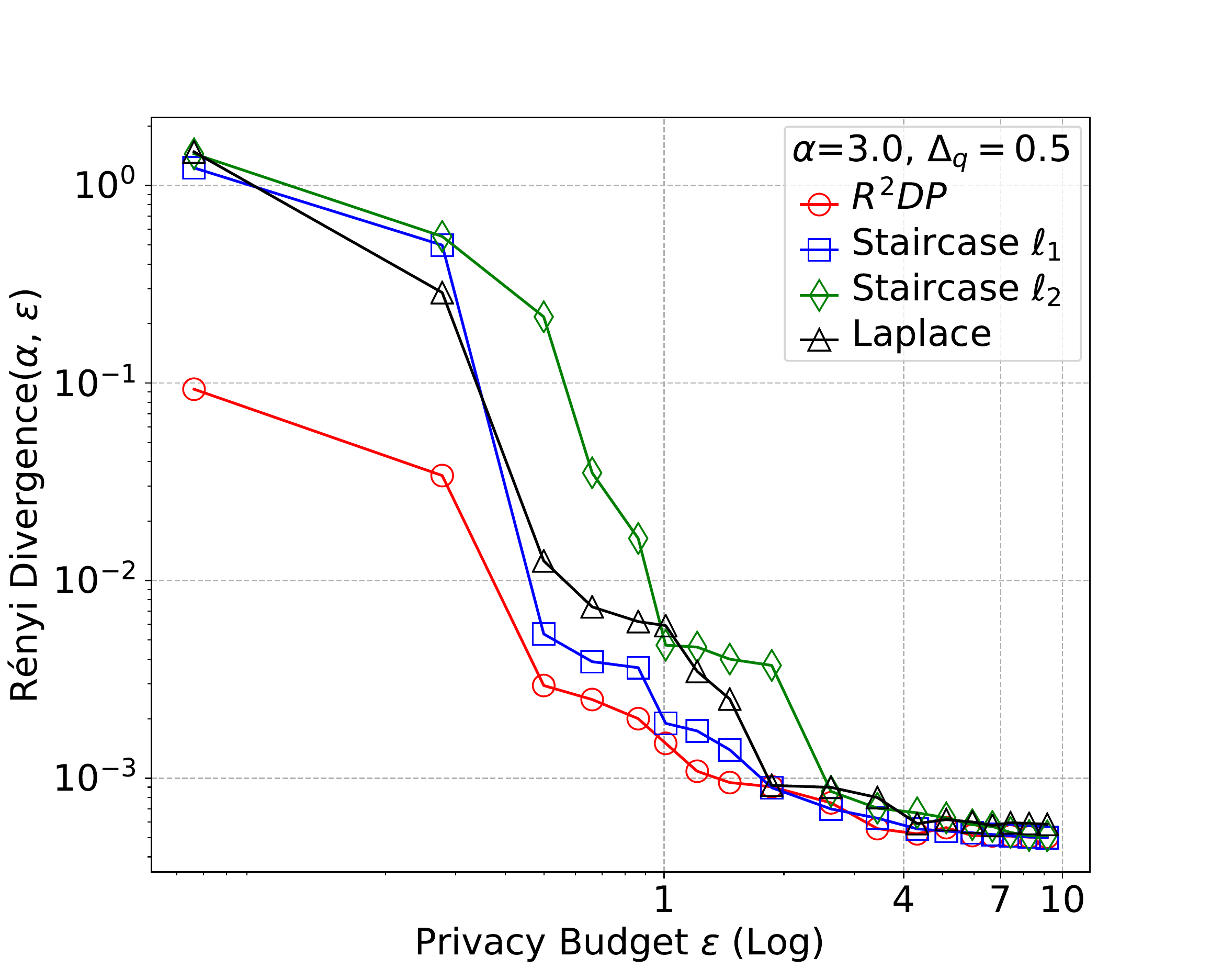}
	 }\hspace{-0.25in}
	\subfigure[$\alpha=2, \Delta q=1$]{
		\includegraphics[angle=0, width=0.25\linewidth]{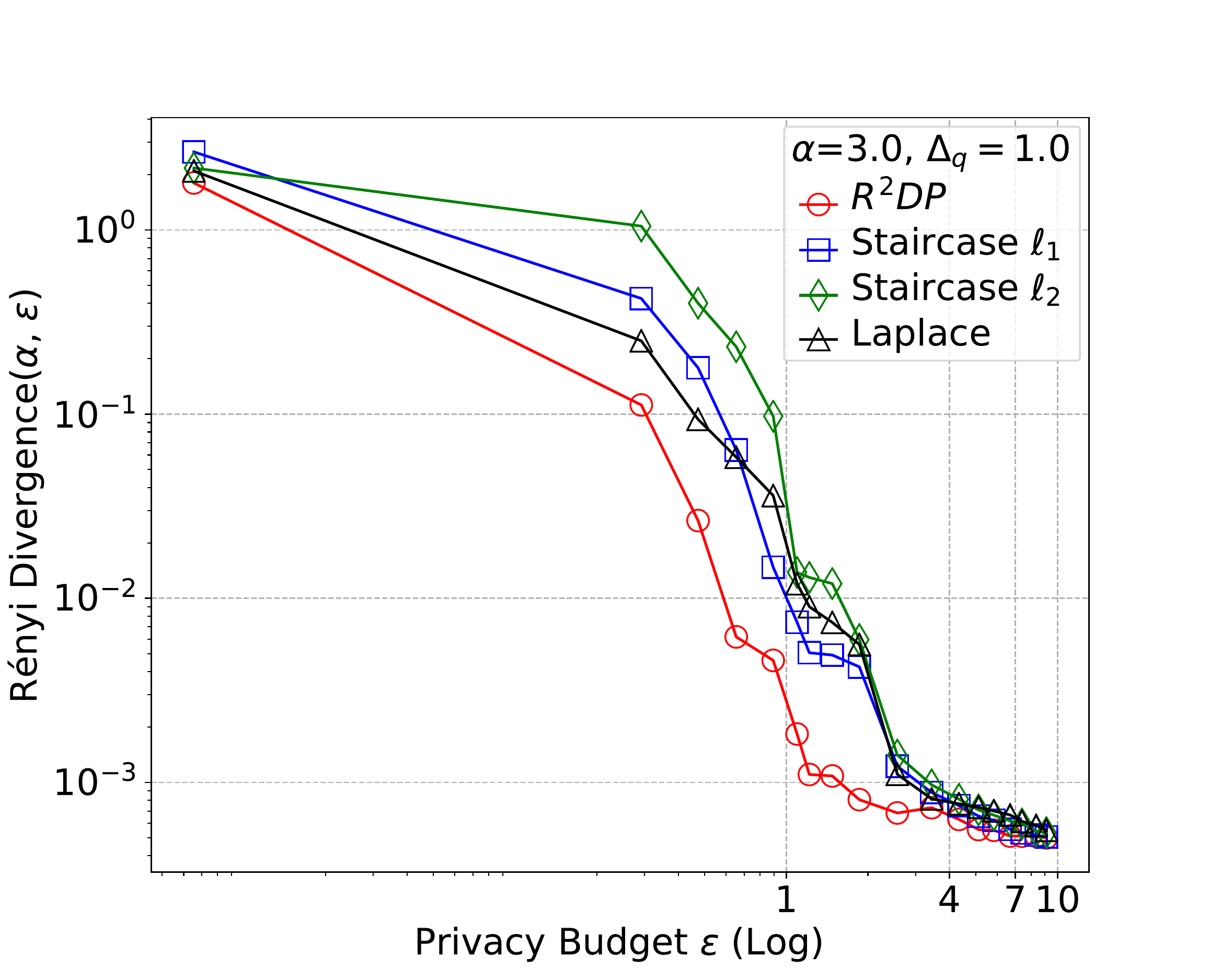} }
	\caption[Optional caption for list of figures]
	{R\'enyi Divergence (Relative entropy metric): R$^2$DP (with five PDFs, i.e., Gamma, Uniform, Truncated Gaussian, Noncentral Chi-squared and Rayleigh distributions) compared
to Laplace and Staircase mechanisms.}
	\label{fig:Renentropy_11}
\end{figure*}

\begin{figure*}[!tb]
\centering
	\subfigure[$\alpha=1$]{
		\includegraphics[angle=0, width=0.25\linewidth]{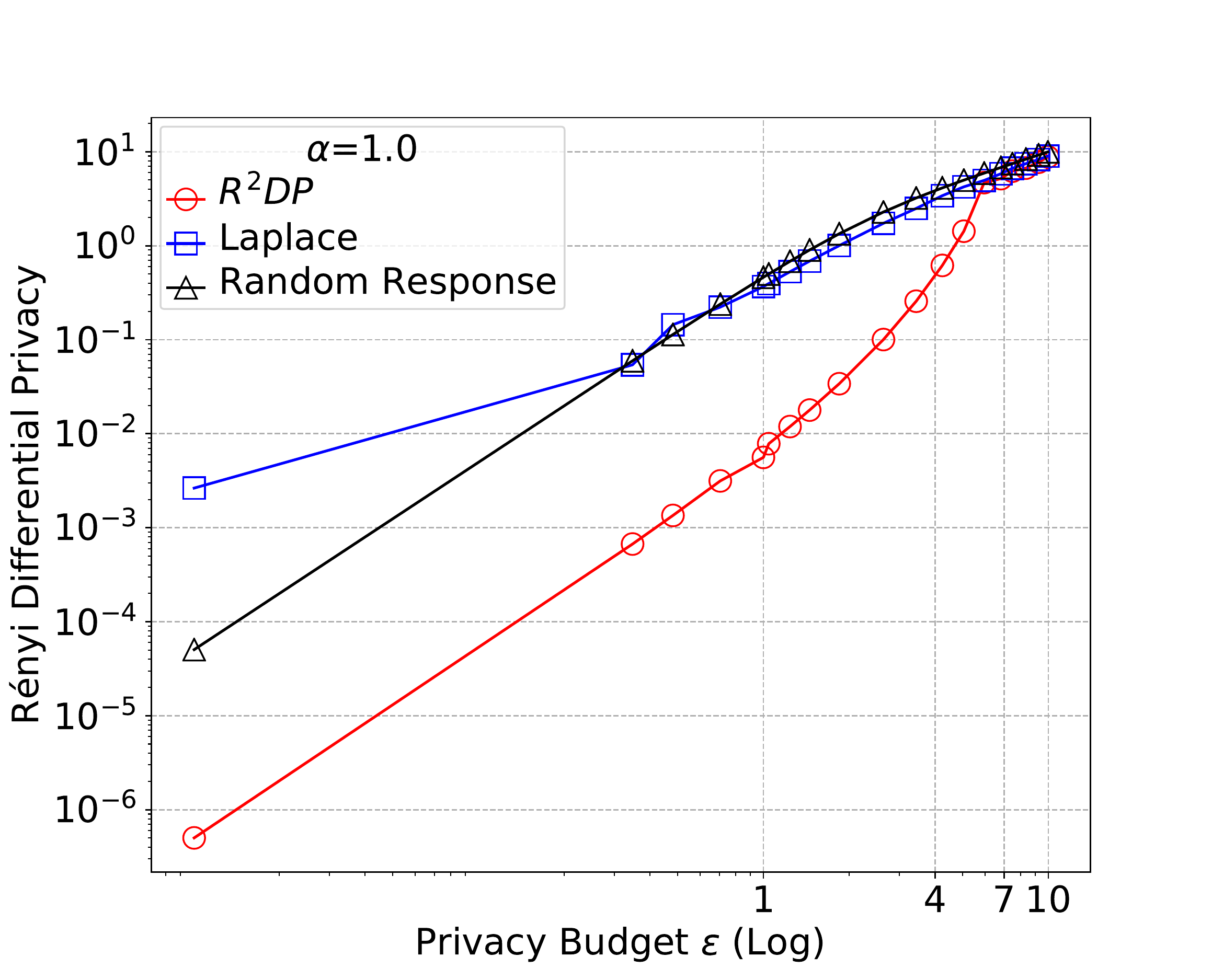}
		 }\hspace{-0.25in}
	\subfigure[$\alpha=3$]{
		\includegraphics[angle=0, width=0.25\linewidth]{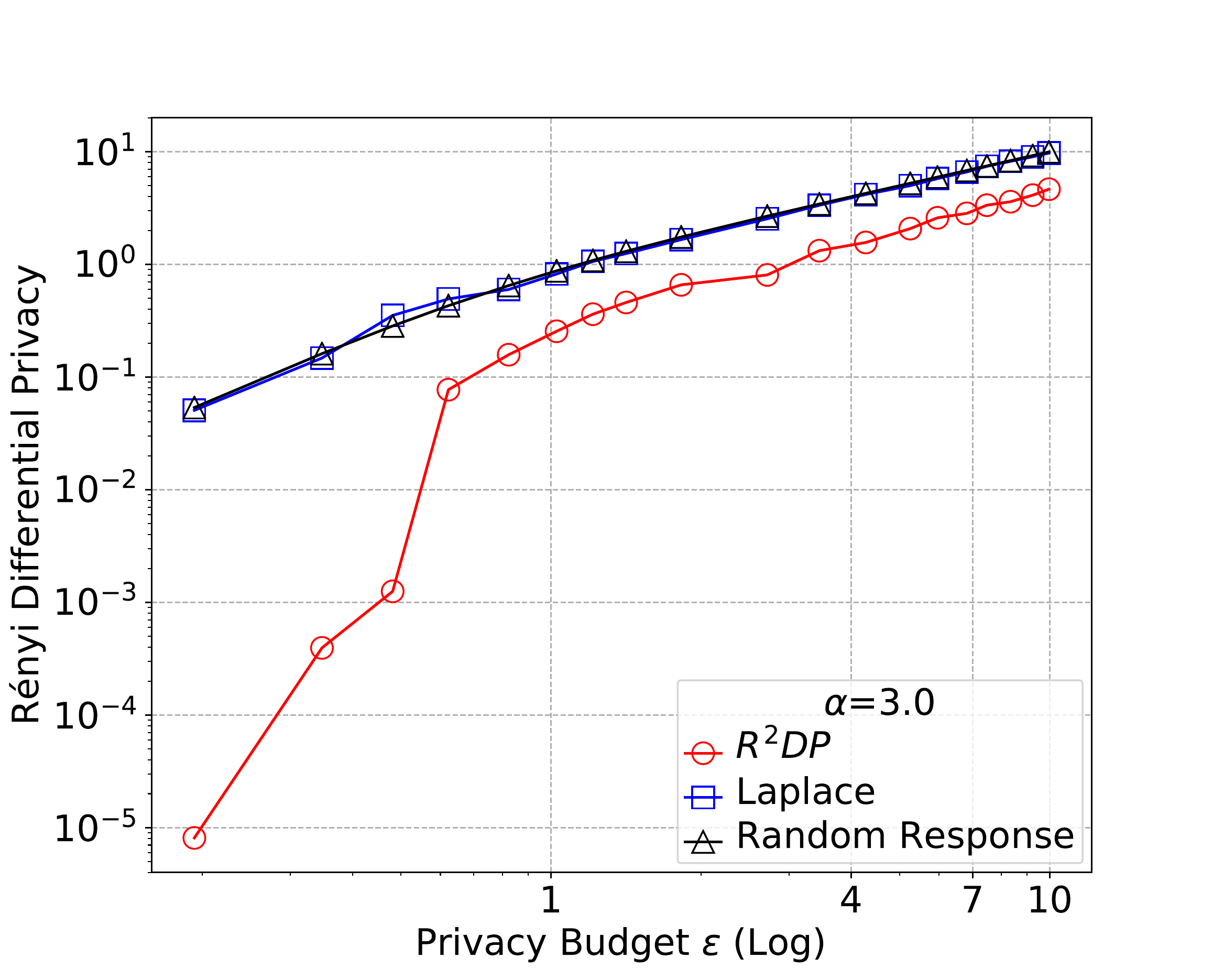} }\hspace{-0.25in}
\subfigure[$\alpha=4$]{
		\includegraphics[angle=0, width=0.25\linewidth]{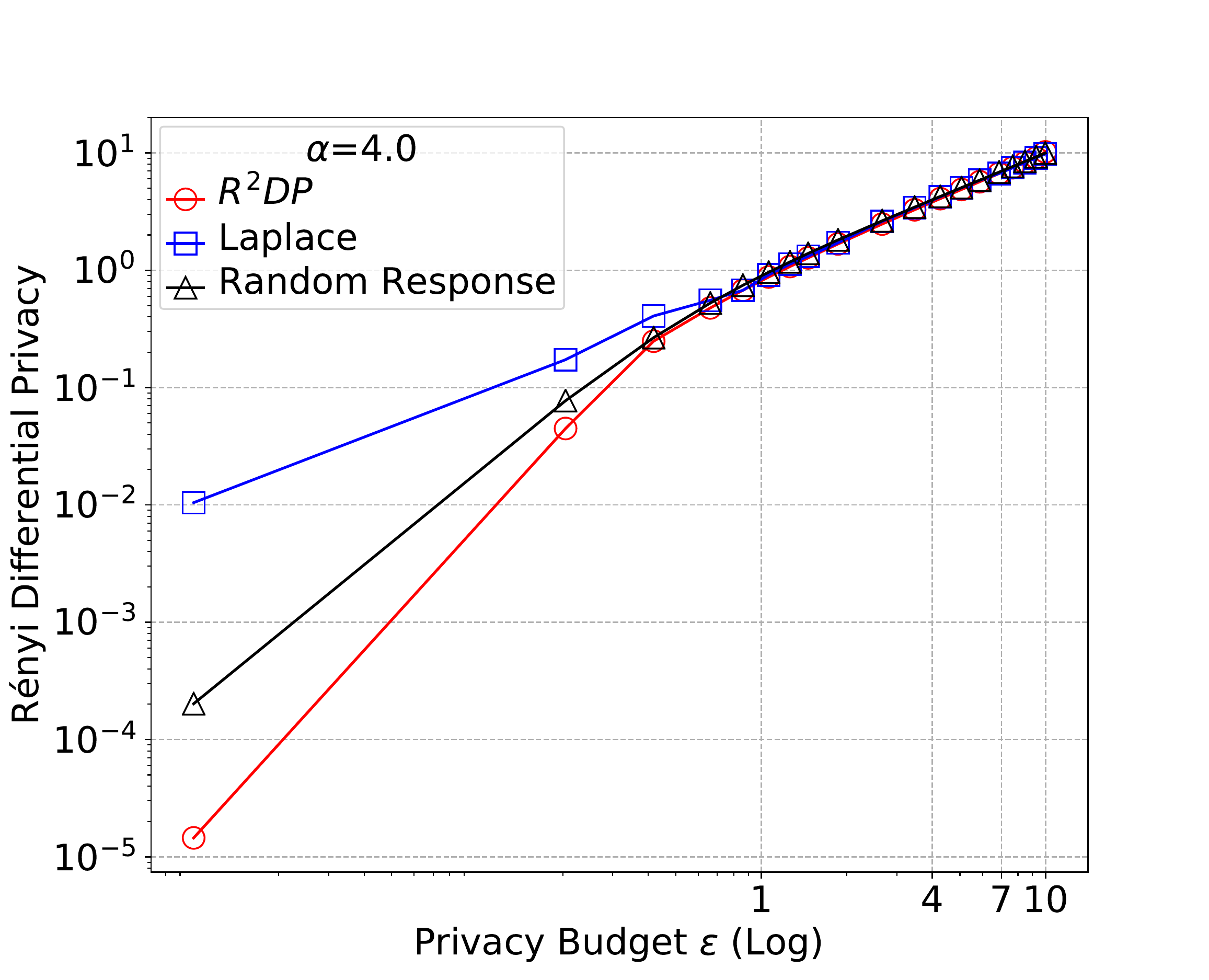}
		 }\hspace{-0.25in}
	\subfigure[$\alpha=5$]{
		\includegraphics[angle=0, width=0.25\linewidth]{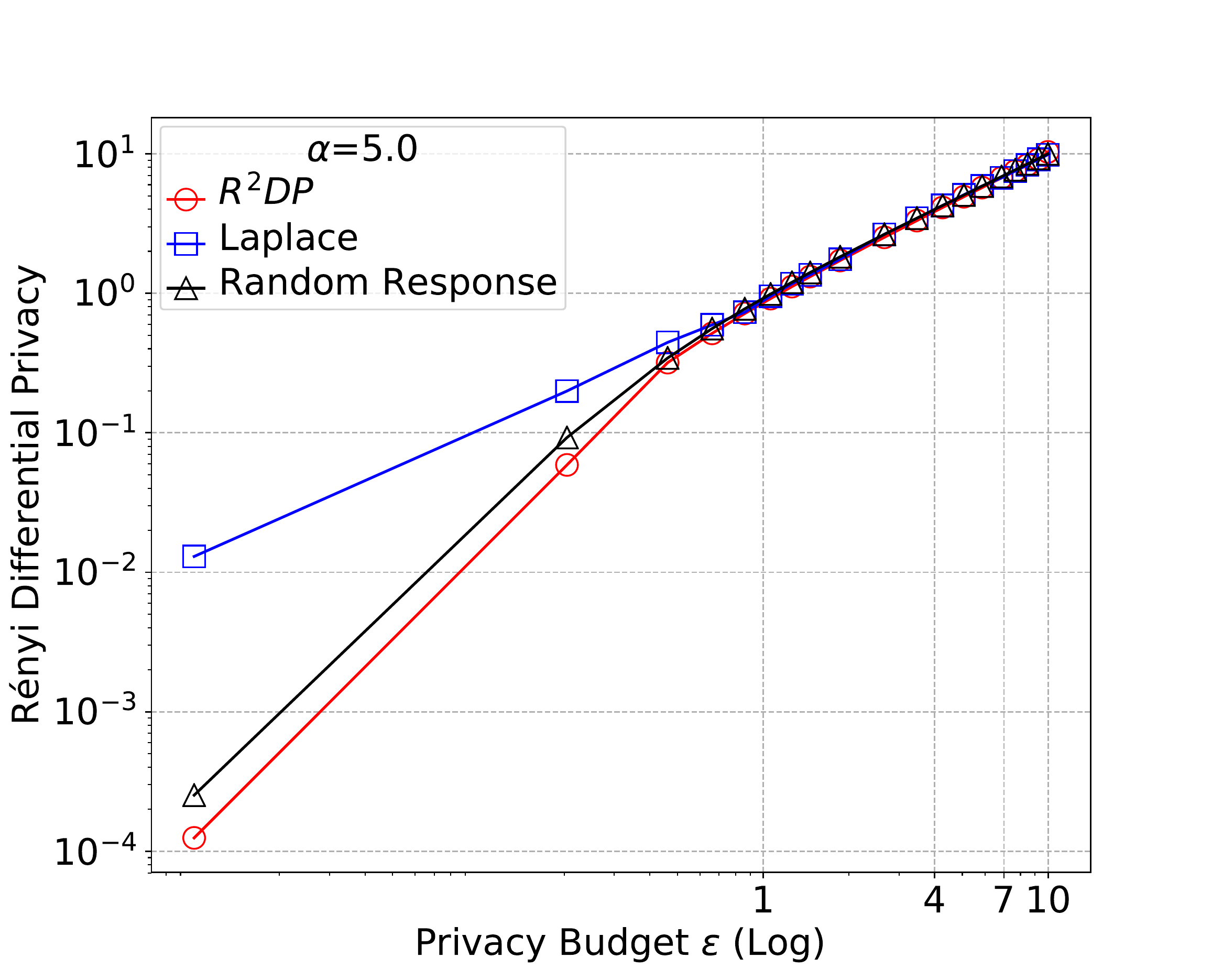} 	}\hspace{-0.25in}
	\subfigure[$\alpha=1$]{
		\includegraphics[angle=0, width=0.25\linewidth]{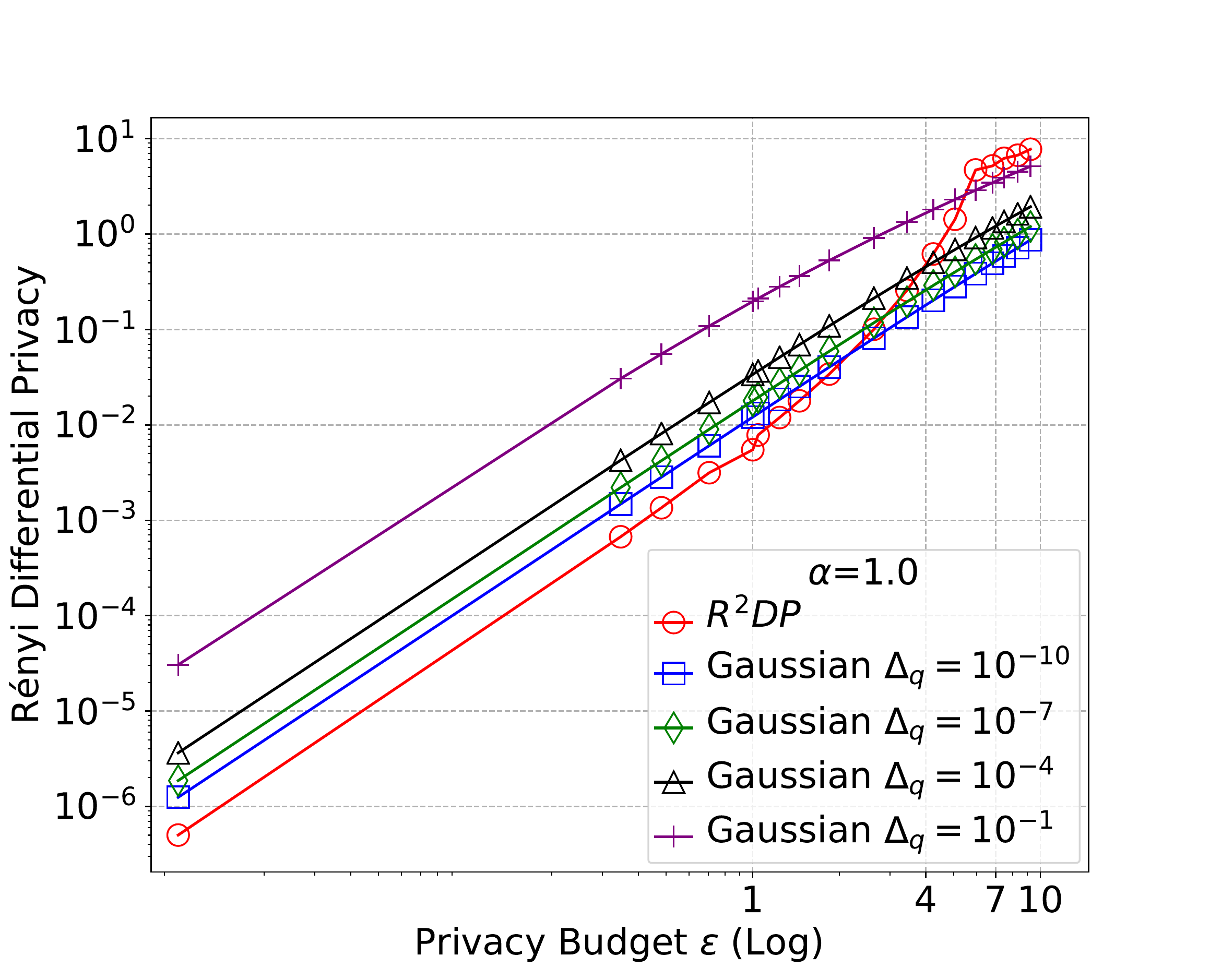}
		 }\hspace{-0.25in}
	\subfigure[$\alpha=3$]{
		\includegraphics[angle=0, width=0.25\linewidth]{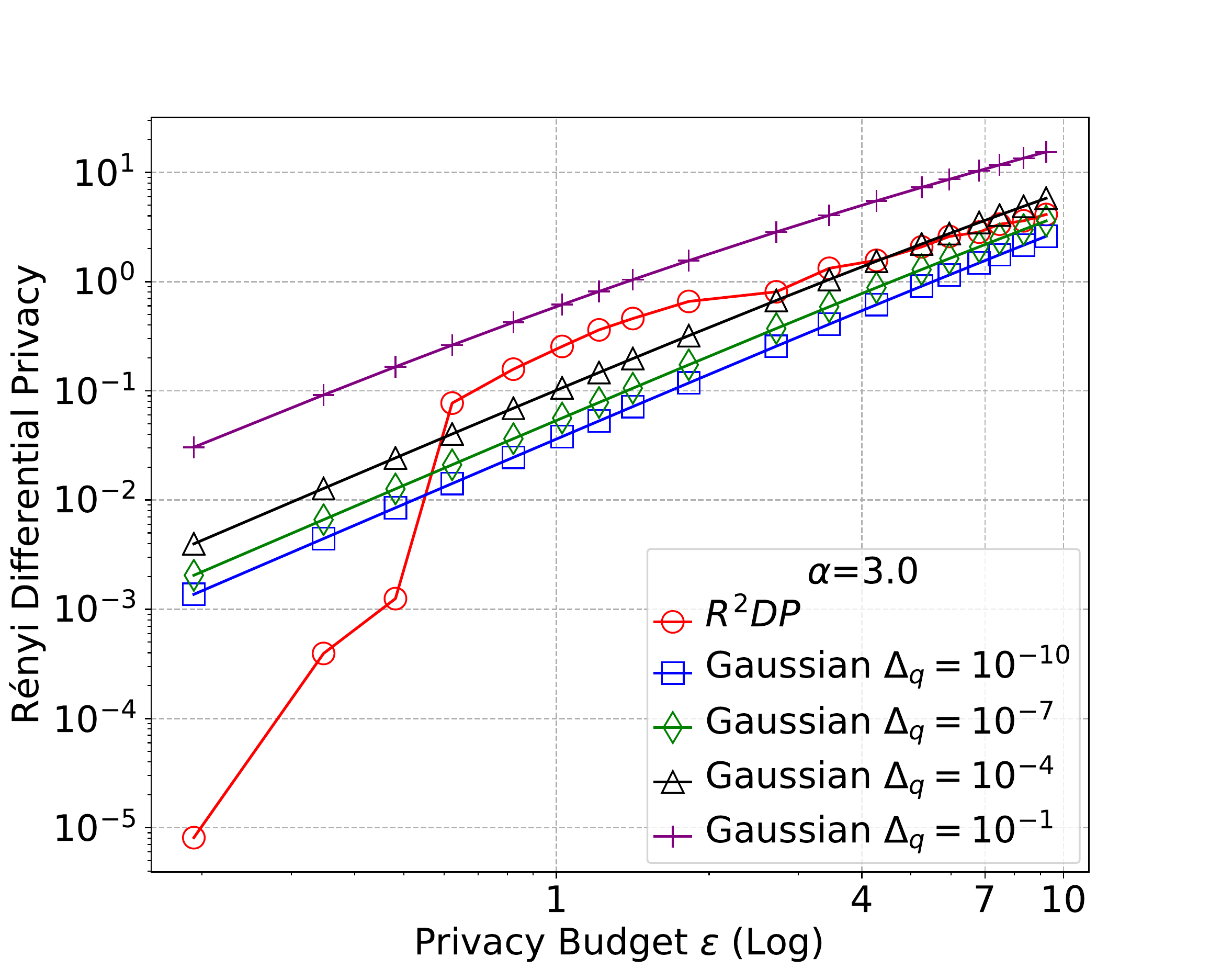} 		}\hspace{-0.25in}
\subfigure[$\alpha=4$]{
		\includegraphics[angle=0, width=0.25\linewidth]{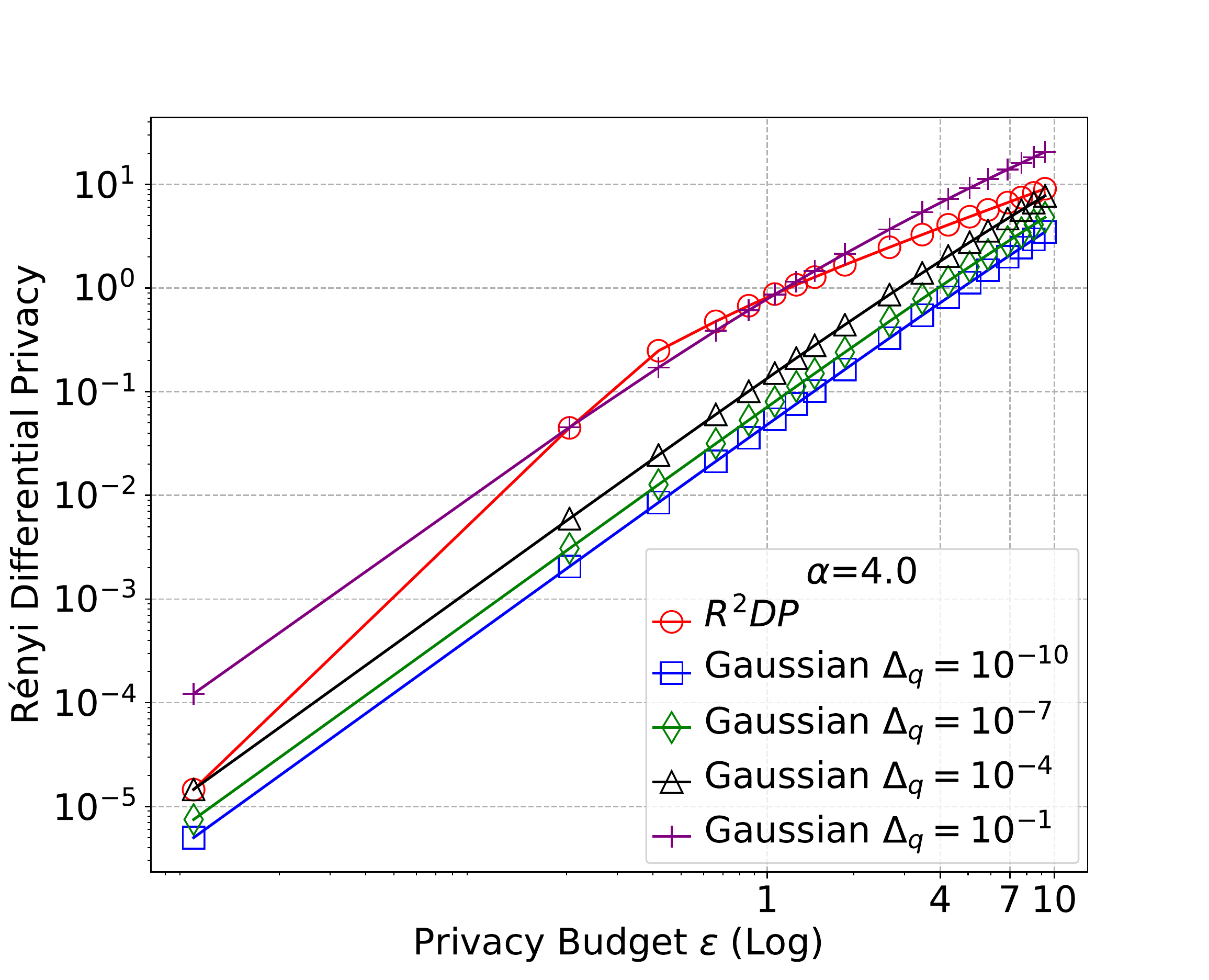}
		 }\hspace{-0.25in}
	\subfigure[$\alpha=5$]{
		\includegraphics[angle=0, width=0.25\linewidth]{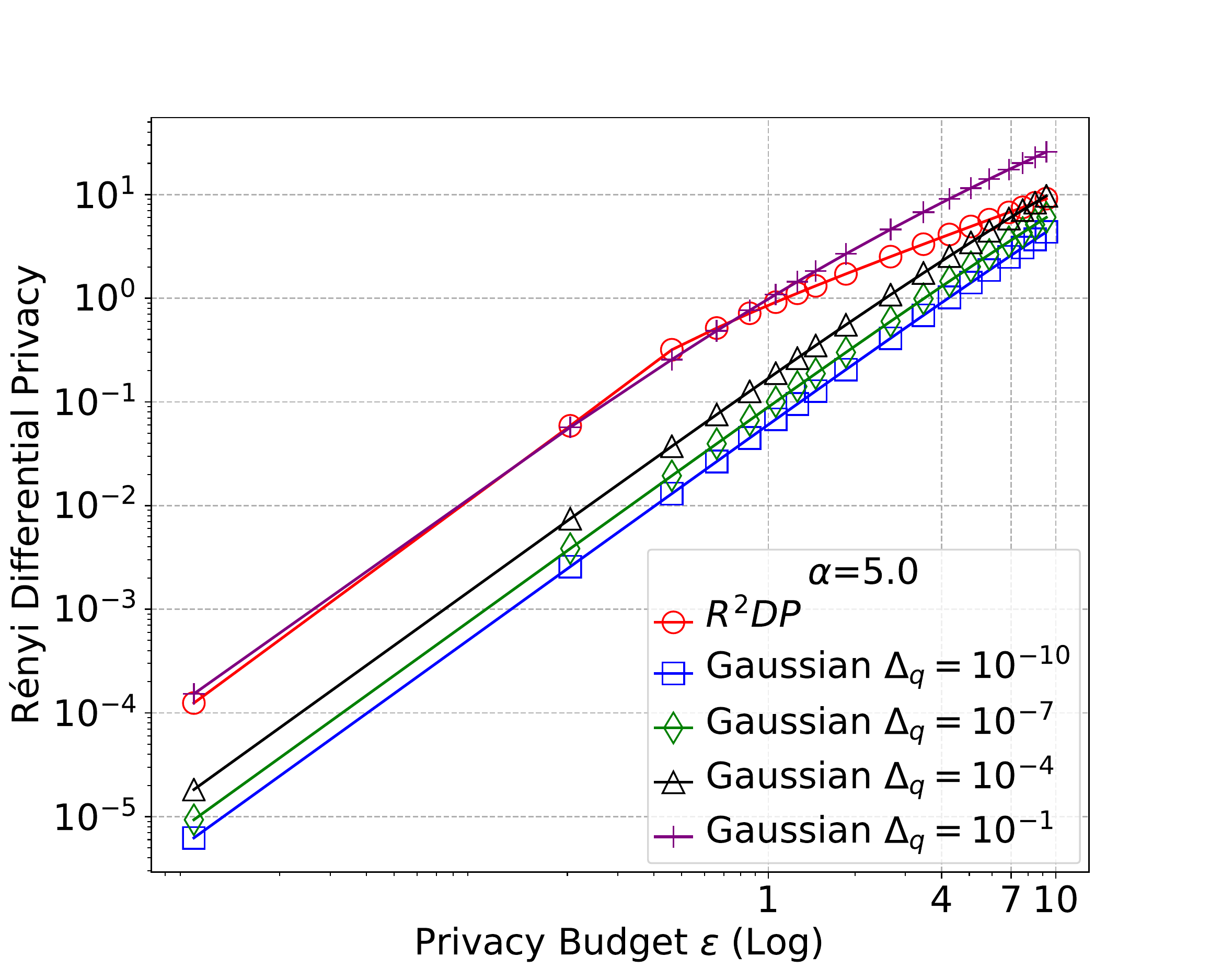}}
	\caption[Optional caption for list of figures]
	{R\'enyi Differential Privacy: (a-d) R$^2$DP  compared
to Laplace and Random Response mechanisms, and (e-h) R$^2$DP compared
to Gaussian mechanism.}
	\label{fig:RenyiDP}
\end{figure*}

\subsubsection{$\ell_1$ and $\ell_2$ Metrics}
\label{secl1l2}
We compare R$^2$DP with the baseline Laplace and Staircase mechanisms~\cite{Gupte1}, by varying the privacy budget $\epsilon$ and for four different sensitivities $\Delta q\in \{0.1, 0.5, 1, 1.5\}$. Our results validate the findings of Geng et al.~\cite{7353177}, i.e., in the low privacy regime ($\epsilon\rightarrow \infty$), the Staircase mechanism is optimal while in the high privacy regime ($\epsilon\rightarrow 0$), the Laplace mechanism is optimal. 

More importantly, our evaluations show that, for medium regime of
privacy budgets (which could be more desirable in practice), the class
of optimal noise can be totally different. In fact, as shown in Geng
et al.~\cite{7353177}, the lower-bound of $\epsilon$ at which the
Staircase distribution performs better than the Laplace distribution
is somewhere around $\epsilon=3$ for both $\ell_1$ and $\ell_2$
metrics. As illustrated in Figure~\ref{fig:newld}, in contrast to
$\ell_1$ metric (for which the results of laplace and staircase are
relatively tight), R$^2$DP can find a class of noises with
significantly improved $\ell_2$ metric for $\epsilon<3$ (a logarithmic
X axis is used to illustrate the performance in this region). The PDF
of this class of noises is mostly two-fold distributions with Laplace
distribution as the first fold, and Gamma distribution as the second
fold. This finding is in line with the optimal class of noise proposed
by Koufogiannis et al.~\cite{koufogiannis2015optimality}, i.e.,
$f(v)=\cfrac{\epsilon^n \Gamma
  (\frac{n}{2}+1)}{\pi^{\frac{n}{2}}\Gamma (n+1)} e^{-\epsilon ||v||_2
}$. Furthermore, our results suggest different classes of optimal
noises (than those found in the literature) for different parameters,
sensitivity, $\epsilon$ and $p$ (index of $\ell$ norm). In particular,
a larger $p$ tends to provide larger search spaces for R$^2$DP
optimization, which results in further improved results for
$\epsilon<3$ (Figure~\ref{fig:newld} (a,b,e,f) vs. (c,d,g,h)).

\subsubsection{Relative Entropy Metric}
As Wang et al~\cite{7039713} has already shown that the output entropy
of $\epsilon$-DP randomization mechanisms is lower bounded by
$1-\ln(\epsilon/2)$ (for count queries) and the optimal result is
achieved with Laplace mechanism, we focus our entropy metric
evaluation on relative entropy metrics, i.e., KL and R\'enyi
divergences. To define the prior distribution for this group of
experiments, we have created a histogram with $50$ bins of our data
and calculated the probability mass function (pmf) of the
bins. \footnote{2 millions records fall into 50 bins (e.g., equal
  range for each bin). Then, any counting and moving average query
  (with different sensitivities) can be performed within each of the
  50 bins to generate the distribution. Finally, the distance between
  the original and noisy distributions can be measured using the
  relative entropy metrics.} As illustrated in
Figure~\ref{fig:KLentropy_2}, we can draw similar observations for the KL entropy metric. In particular, we observe that R$^2$DP performs better for smaller sensitivity due to the larger
search space of PDFs used in optimization. Similarly the R\'enyi entropy depicted in Figure~\ref{fig:Renentropy_11}
shows a similar trend with different $\alpha$ (the index
of the divergence).

\subhead{Summary} The R$^2$DP mechanism can generate better results than most of the well-known distributions for utility metrics without known optimal distributions (e.g., usefulness), and our results asymptotically
approach to the optimal for utility metrics with known optimal distributions (e.g., $\ell_1$ and $\ell_2$). In particular, even though R$^2$DP is not specifically designed to optimize $\ell_1$ and $\ell_2$ metrics, we observe very similar performance between the
R$^2$DP results and the optimal Staircase results, e.g., the multiplicative gain compared to the Laplace results. We note that using a larger number of independent RVs drawn from different PDFs as the search space generator may further improve the results.

\subsection{Tightness of R$^2$DP under R\'enyi DP}
R\'enyi \DP~\cite{mironov2017renyi} is a recently proposed as a
relaxed notion of DP which effectively quantifies the bad outcomes in
($\epsilon,\delta$)-DP mechanisms and consequently evaluates how such
mechanisms behave under sequential compositions (see
Appendix~\ref{renyiDPsec} for details on R\'enyi DP). We now evaluate
how the privacy loss of R$^2$DP behaves under R\'enyi DP.

Specifically, this group of experiments are conducted to provide
insights about the privacy loss of R$^2$DP and other well-known
mechanisms. In particular, Figure~\ref{fig:RenyiDP} (a-d) depicts the
R\'enyi differential privacy of the R$^2$DP and two basic mechanisms
for counting queries: random response and Laplace mechanisms. These
results are based on the privacy guarantees depicted in
Table~\ref{tablerenyi}. Our results demonstrate that fine tuning
R$^2$DP can generate strictly more private results compared to the
other two $\epsilon$-DP mechanisms when the definition of the privacy
notion is relaxed. Furthermore, the level of such tightness depends on
the R\'enyi differential privacy index where a smaller value of
$\alpha$ pertains to a relatively tighter R$^2$DP mechanism. On the
other hand, all three mechanisms behave more similarly as $\alpha$
increases. Ultimately, at $\alpha \rightarrow \infty$, where R\'enyi
\DP becomes equivalent to the classic notion of $\epsilon$-DP, all
three mechanisms' privacy guarantees converge to $\epsilon$.

\begin{figure*}[!tb]
\centering
	\subfigure[$p=1, \Delta q=0.5$]{
		\includegraphics[angle=0, width=0.25\linewidth]{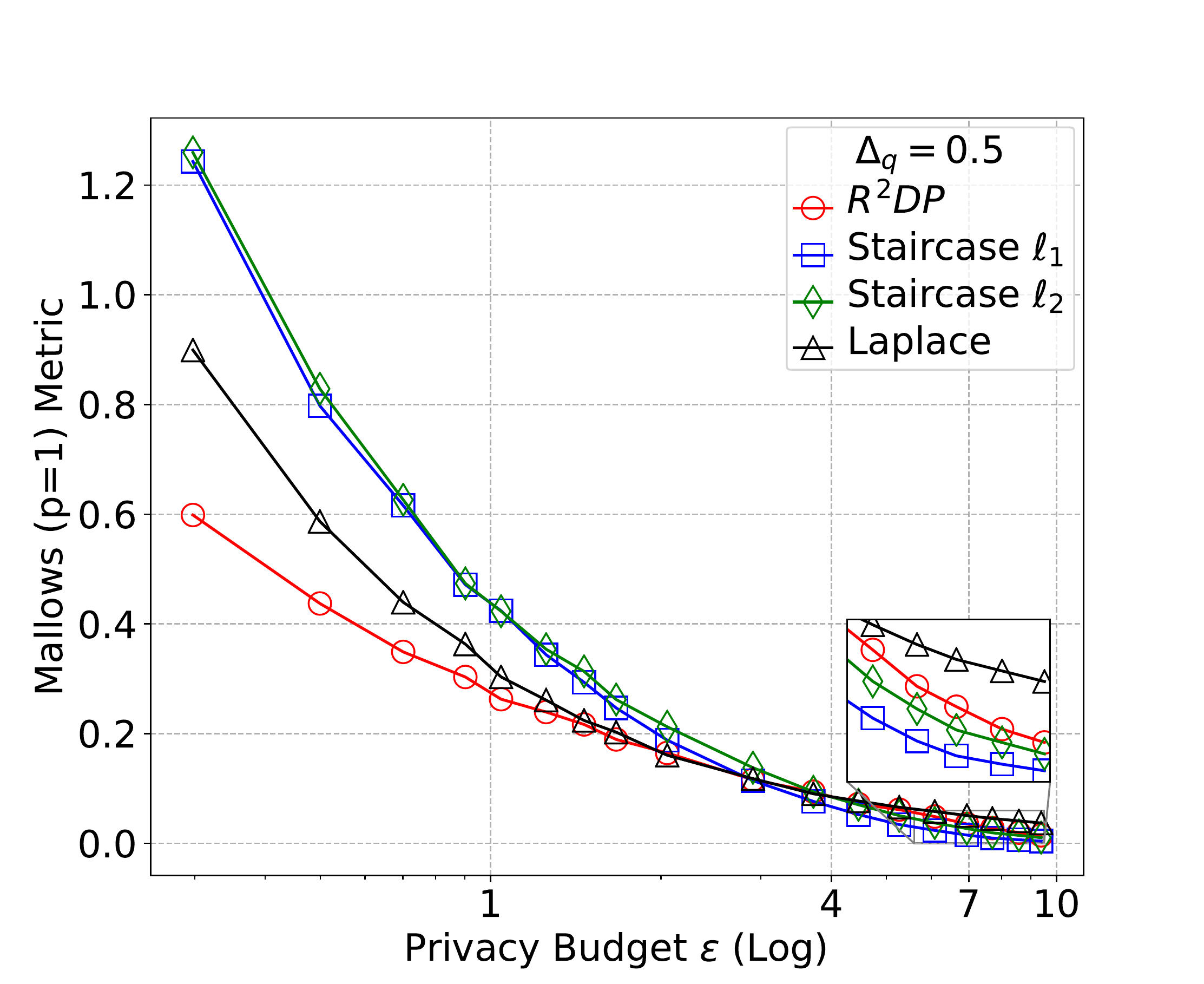}
		 }\hspace{-0.25in}
	\subfigure[$p=1, \Delta q=1$]{
		\includegraphics[angle=0, width=0.25\linewidth]{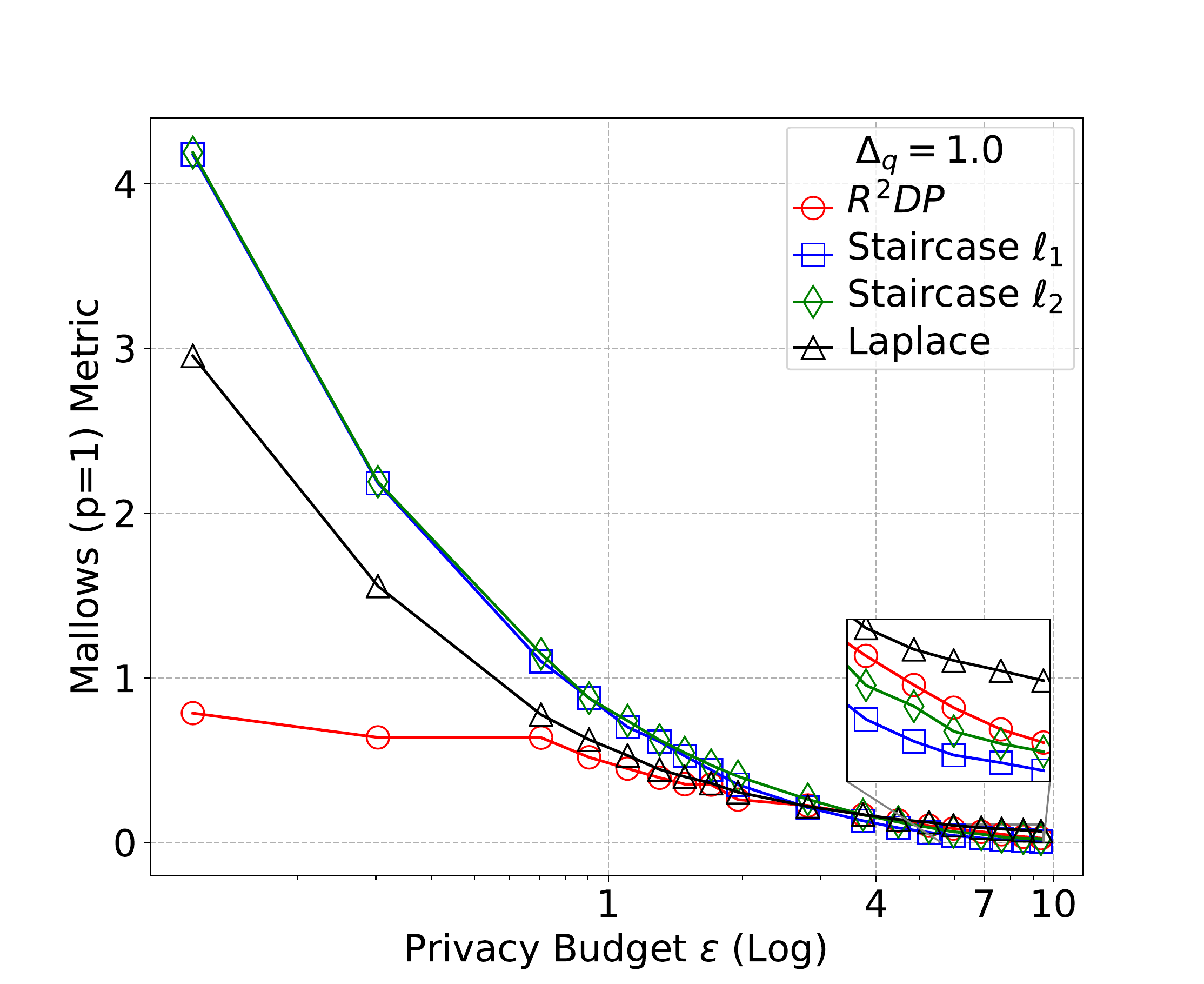} 		}\hspace{-0.25in}
\subfigure[$p=2, \Delta q=0.5$]{
		\includegraphics[angle=0, width=0.25\linewidth]{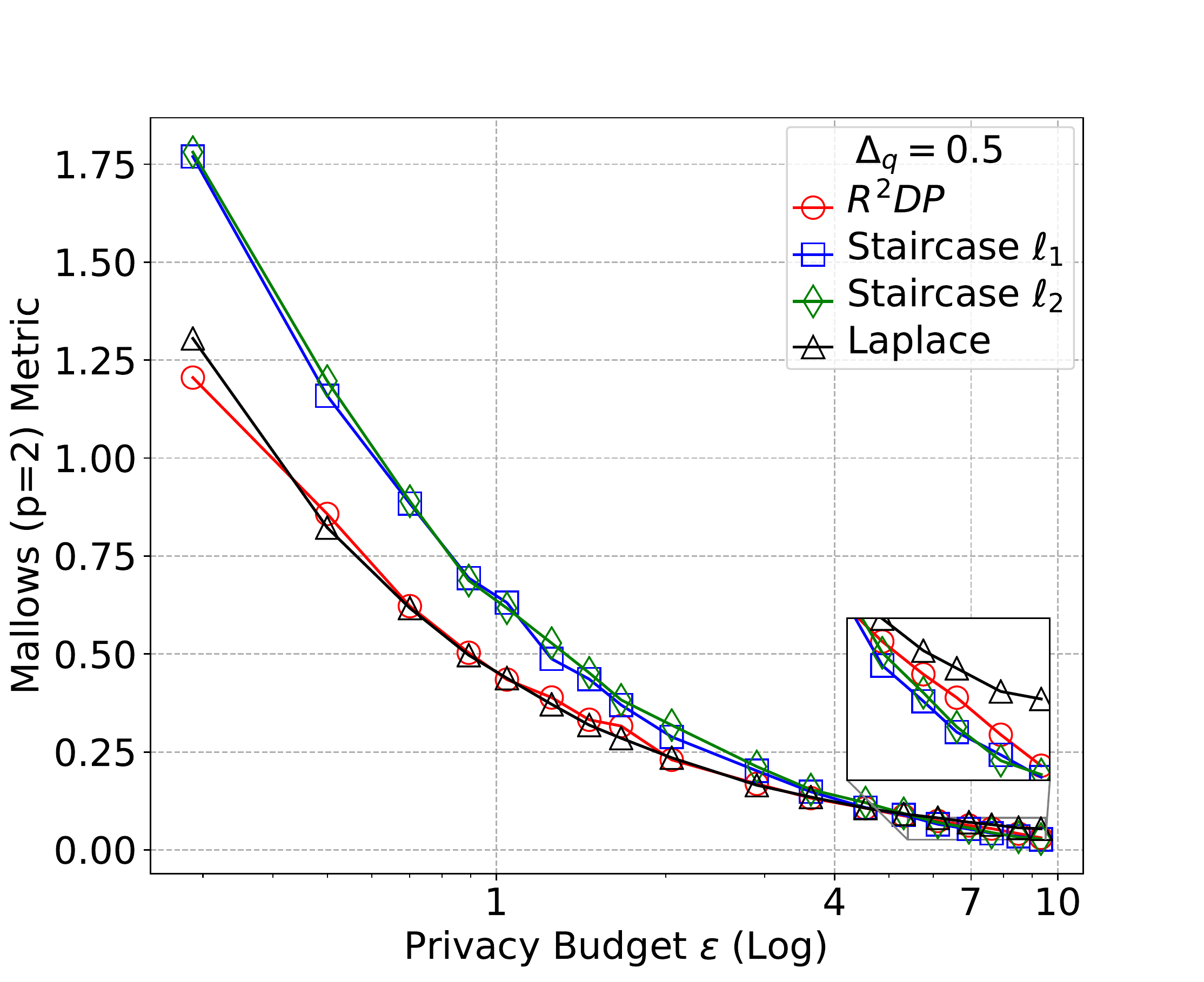}
		 }\hspace{-0.25in}
	\subfigure[$p=2, \Delta q=1$]{
		\includegraphics[angle=0, width=0.25\linewidth]{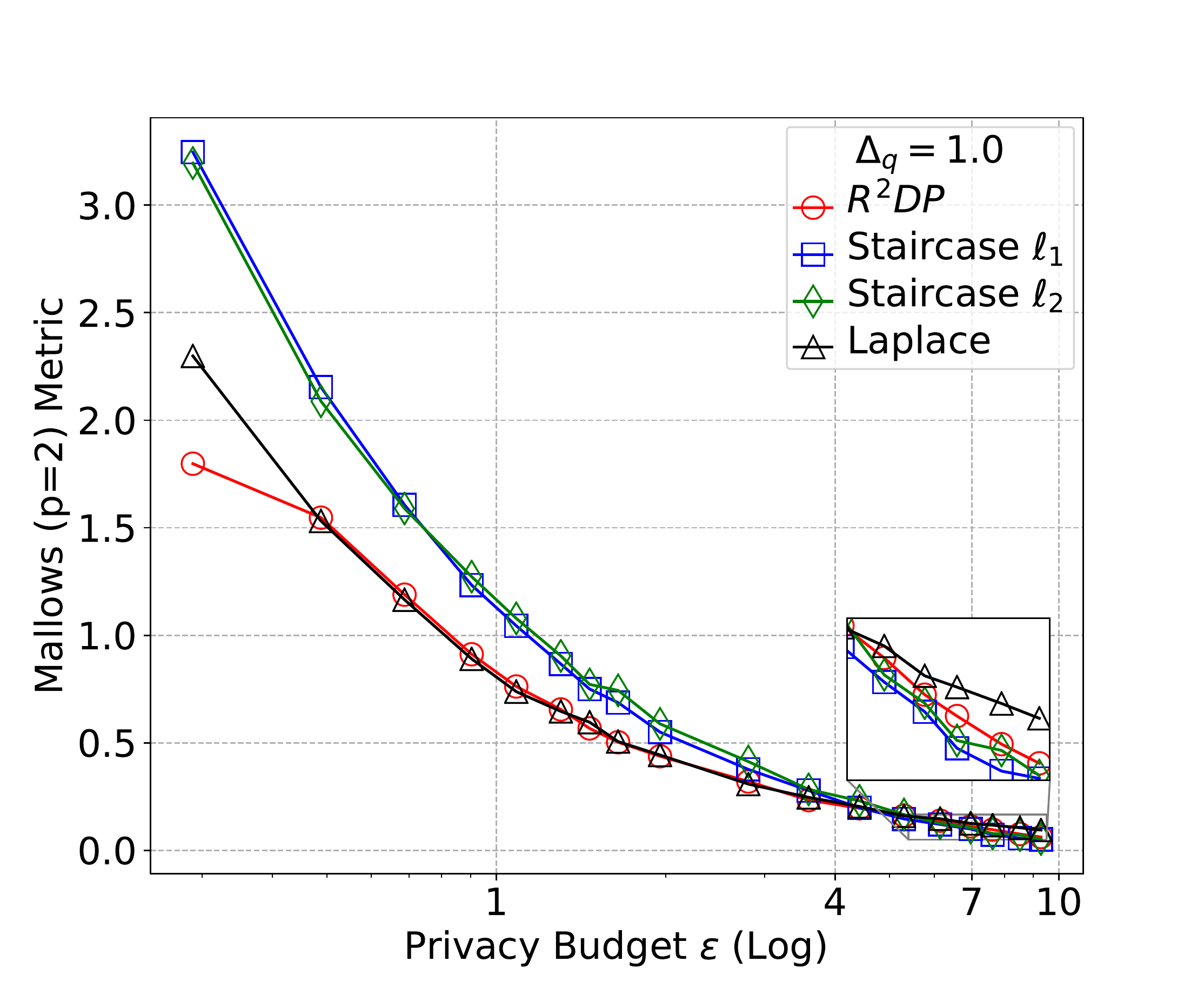}}
	\caption[Optional caption for list of figures]
	{Mallows metric: R$^2$DP compared
to Laplace and Staircase mechanisms for degree distribution (Facebook dataset).}
	\label{fig:mall}
\end{figure*}

\begin{figure*}[ht]
	\centering
	\subfigure[Precision vs. $\epsilon$ (UCI Adult)]
	{\includegraphics[angle=0, width=0.25\linewidth]{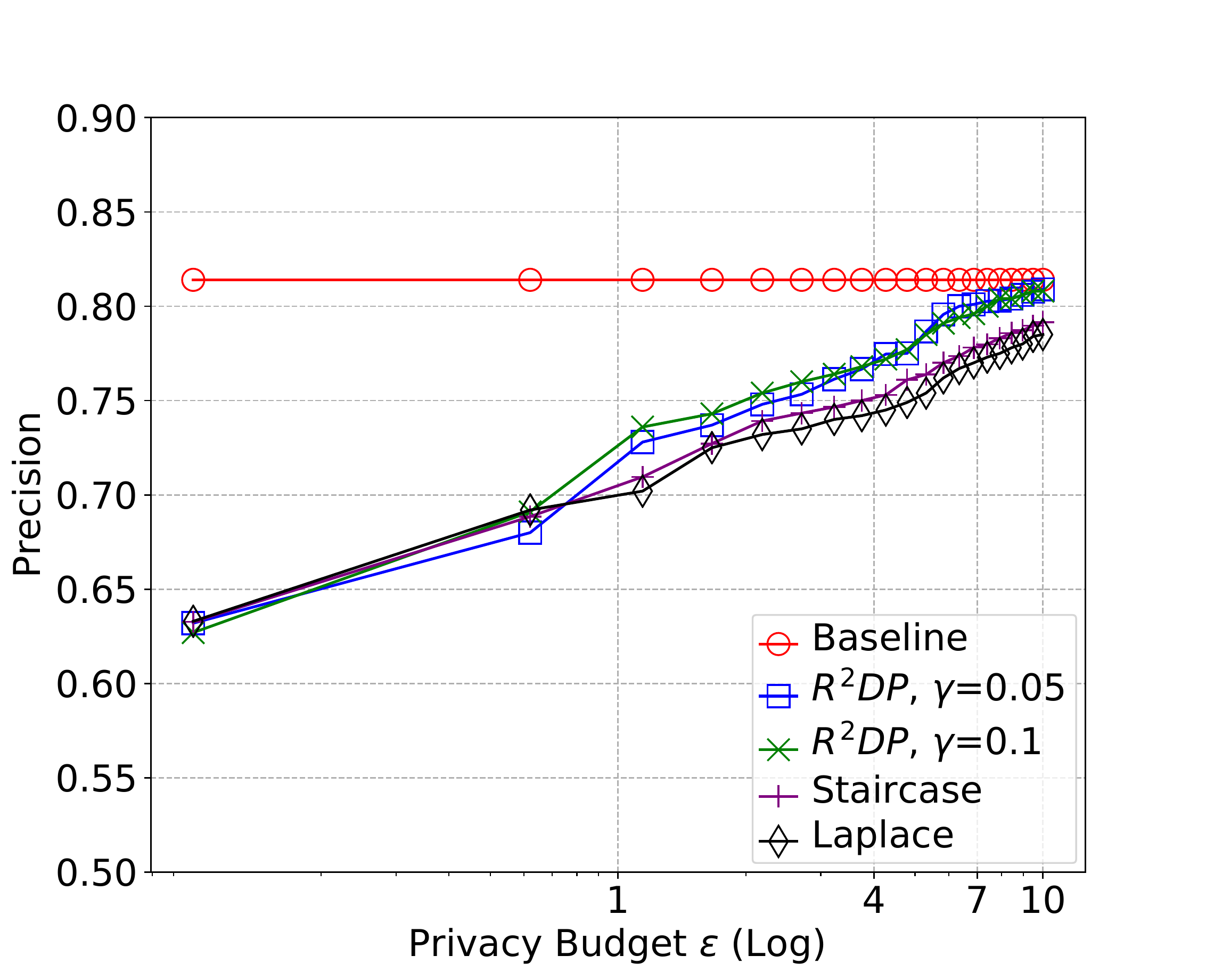}
		\label{fig:pre_eps} }\hspace{-0.25in}
	\subfigure[Recall vs. $\epsilon$ (UCI Adult)]{
		\includegraphics[angle=0, width=0.25\linewidth]{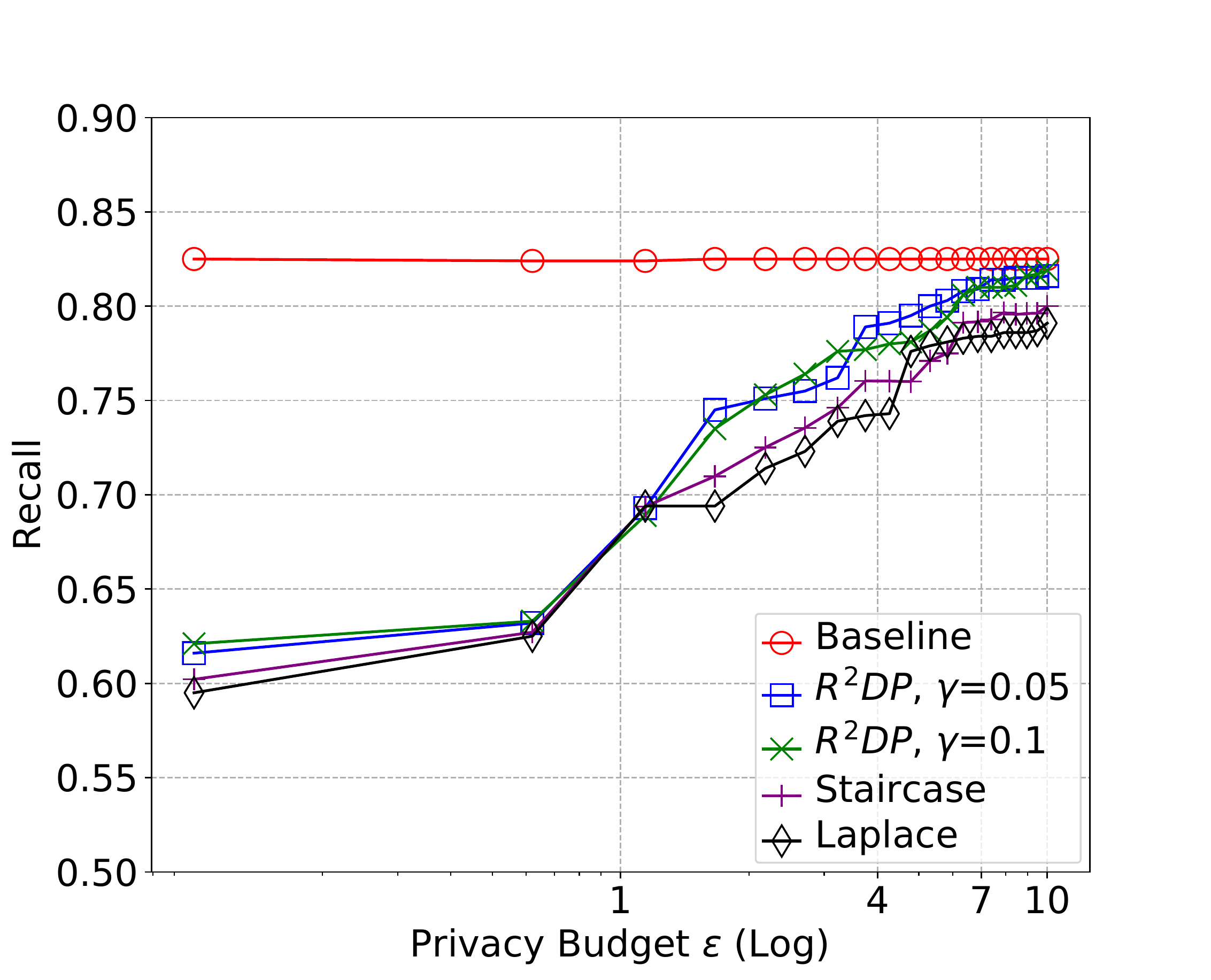} \hspace{-0.25in} \label{fig:pre_eps1}}
	\subfigure[Precision vs. $\epsilon$ (KDDCup99)]{
		\includegraphics[angle=0, width=0.25\linewidth]{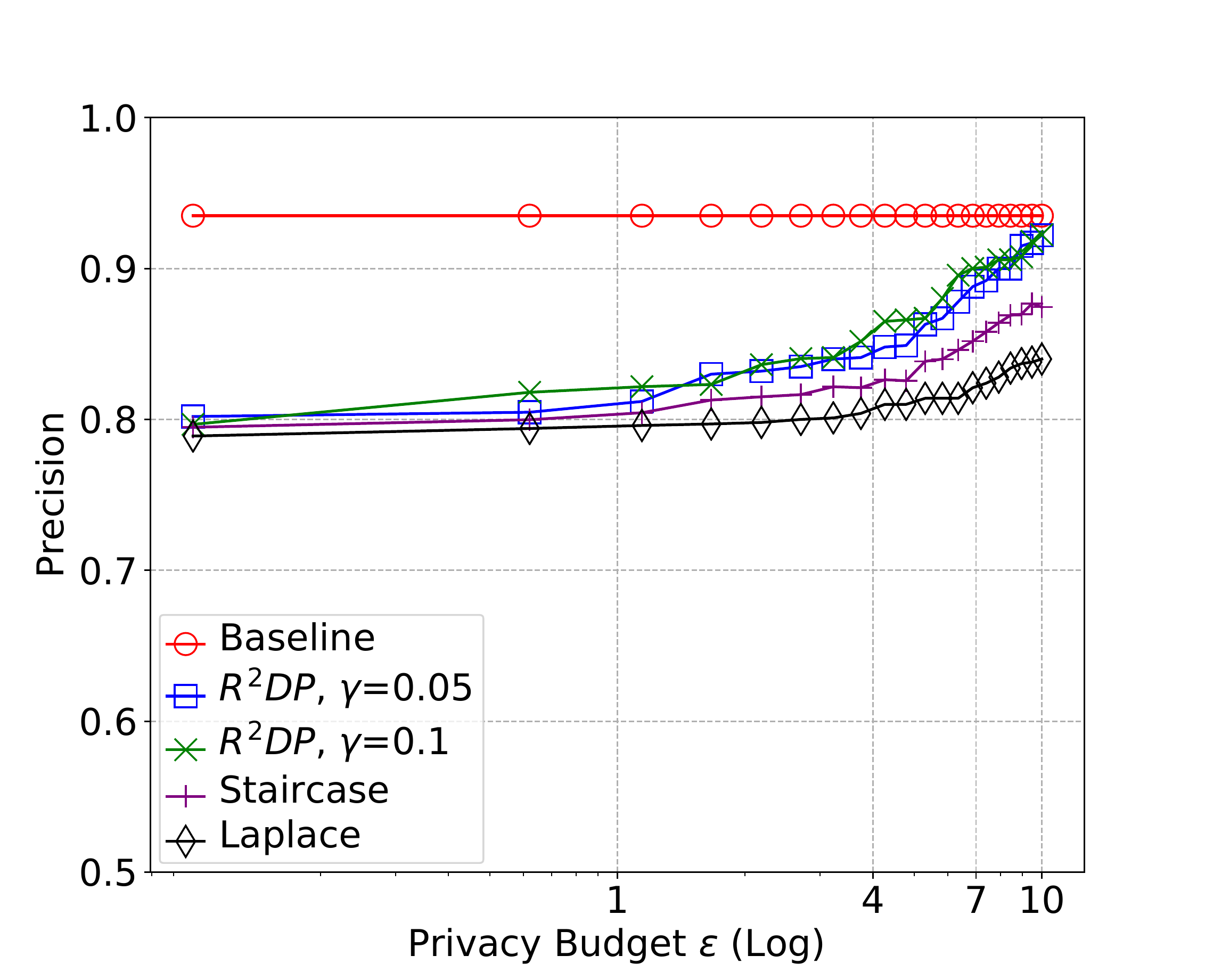}
		\label{kdd1} }\hspace{-0.25in}
	\subfigure[Recall vs. $\epsilon$ (KDDCup99)]{
		\includegraphics[angle=0, width=0.25\linewidth]{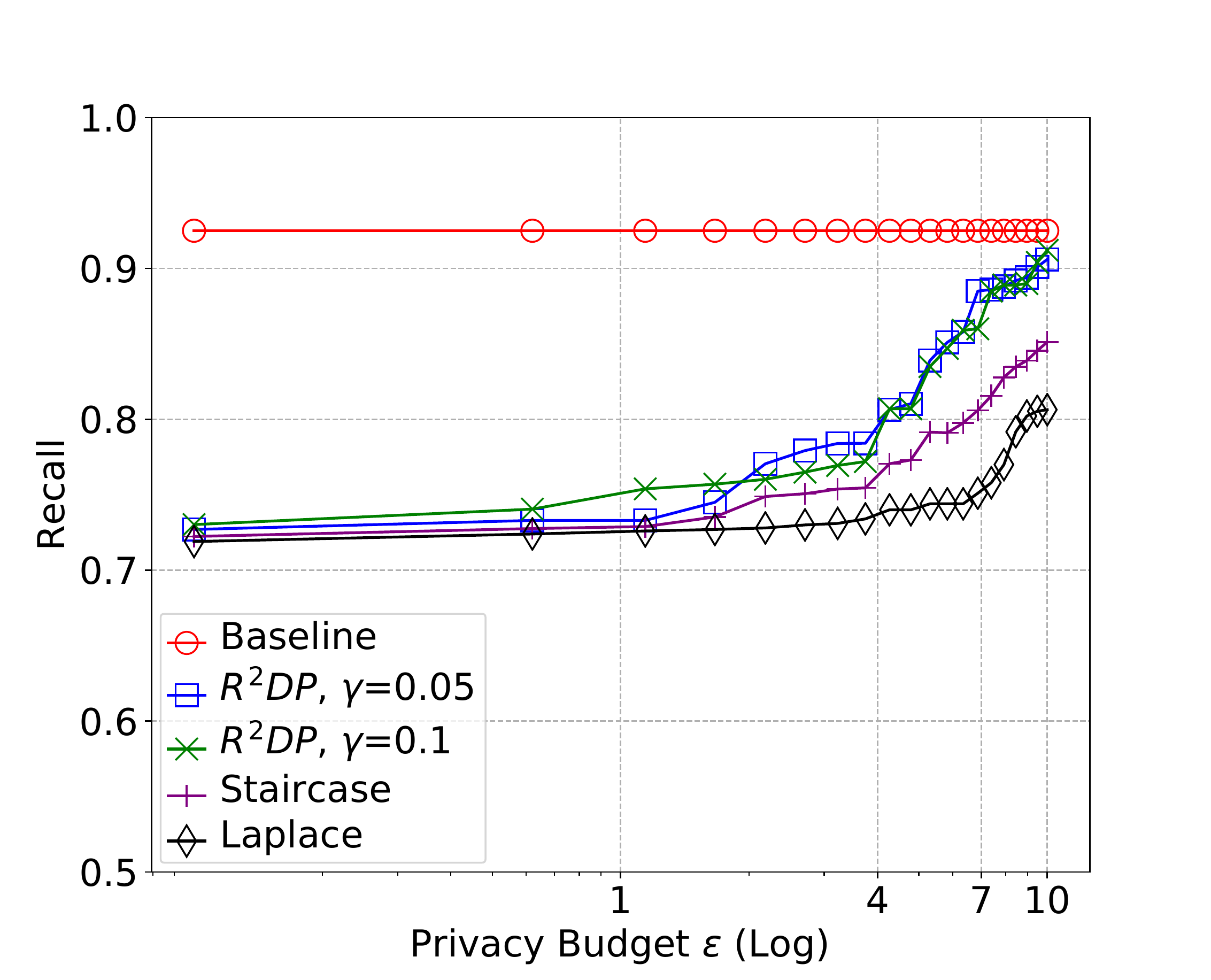}
		\label{kdd2} }
		\caption[Optional caption for list of figures]
	{Accuracy evaluation for classification (UCI Adult dataset and KDDCup99 dataset)}
	\label{fig:acc_size}
\end{figure*}

\begin{table}[!h]
\caption{Summary of R\'enyi DP parameters for four mechanisms based on Theorem~\ref{thmss}}
\centering
\begin{adjustbox}{width=0.5\textwidth,center}

\begin{tabular}{|c|c|c|}
\hline
\bf{Mechanism} & \bf{Differential Privacy} &  \bf{R\'enyi Differential Privacy for $\alpha$}
\\
\hline
\multirow{2}{*}{Laplace} & \multirow{2}{*}{$\frac{1}{b}$}& $\alpha >1:\frac{1}{\alpha-1} \log \left [ \frac{\alpha\cdot exp(\frac{\alpha-1}{b})+(\alpha-1) \cdot exp(\frac{-\alpha}{b})}{2\alpha-1} \right]$\\
& & $\alpha =1:\frac{1}{b}+exp(\frac{-1}{b})-1$\\
 \hline 
\multirow{2}{*}{Random Response} &  \multirow{2}{*}{$|\log \frac{p}{1-p}|$}& $\alpha >1: \frac{1}{\alpha-1} \log \left [ p^\alpha (1-p)^{1-\alpha} + p^{1-\alpha} (1-p)^{\alpha} \right]$\\
& &$\alpha =1:(2p-1)\log \frac{p}{1-p}$\\
\hline
\multirow{2}{*}{R$^2$DP}& \multirow{2}{*}{$M'_{\frac{1}{b}}(0)/M'_{\frac{1}{b}}(-1)$}&$\alpha >1:\frac{1}{\alpha-1} \log \left [ \frac{\alpha M_{\frac{1}{b}}(\alpha-1)+(\alpha-1) M_{\frac{1}{b}}(-\alpha)}{2\alpha-1} \right]$\\
& & $\alpha =1:M_{\frac{1}{b}}'(0)+M_{\frac{1}{b}}(-1)-1$\\
\hline
Gaussian & $\infty$&  $\cfrac{\alpha}{2\sigma^2}$ \\
\hline
\end{tabular}
\end{adjustbox}
	\label{tablerenyi}
\end{table}

In the next set of experiments, we compare the
R$^2$DP mechanism and Gaussian mechanism in terms of privacy guarantee
to understand how exactly the bad outcomes probability ($\delta$)
affects the privacy robustness of a privatized
mechanism. Figure~\ref{fig:RenyiDP} (e-h) gives such a
comparison. Specifically, since R\'enyi \DP at each $\alpha$ can be
seen as higher-order moments as a way of bounding the tails of the
privacy loss variable~\cite{mironov2017renyi}, we observe that each
value of $\alpha$ reveals a snapshot of such a privacy loss. As a
tangible observation, we conclude that the class of optimal
$\epsilon$-\DP mechanisms benefits from a very smaller privacy loss at
smaller moments (which are more decisive in overall protection) and
larger privacy loss at bigger moments.

\subsection{Social Network Analysis}

We conduct experiments to compare the performance of R$^2$DP, Laplace
and two staircase mechanisms based on PINQ queries in social network
analysis. Figure~\ref{fig:mall} compares the degree distribution for a
real Facebook dataset using Mallows metric (the prior, i.e.,$n=47,538$ nodes, and
$p=1$ or $2$ for computing the distribution distance using Mallows
metric). Again, our results confirm that  R$^2$DP  can
effectively generate PDFs to maximize this utility metric
suitable for social networking analysis. Note that, since the
definition of this metric is similar to $\ell_p$ metric (Mallows is
more empirical, depending on the number of nodes in the dataset), the
results for this metric display a similar pattern to those for
$\ell_p$ metric depicted in Figure~\ref{fig:newld}.

\subsection{Machine Learning}

We obtain our baseline results by applying the Naive Bayes classifier on the Adult dataset (45K training records and 5K testing records), the precision and recall results are derived as 0.814 and 0.825, respectively. Then, we evaluate the precision and recall of R$^2$DP and Laplace-based naive classification \cite{VaidyaSBH13} by varying the privacy budget for each PINQ query $\epsilon\in[0.1, 10]$ (sensitivity=1) where two different error bounds $\gamma=0.05, 0.1$ are specified for R$^2$DP. We have the following observations:

\begin{itemize}
    \item As shown in Figure \ref{fig:pre_eps} and \ref{fig:pre_eps1}, the R$^2$DP-based classification is more accurate than the Laplace and staircase mechanisms with the same total privacy budget for all the PINQ queries $\epsilon$. As the privacy budget $\epsilon$ increases, following our statistical query experiments, R$^2$DP offers a far better precision/recall compared to the Laplace-based classification (close to the results without privacy consideration) since it approaches to the optimal PDF.
    
    \item Among the precision/recall results derived with two different $\gamma$ in R$^2$DP-based classification, for each $\epsilon$, one out of the two specified error bounds (e.g., $\gamma=5\%$) may reach the highest accuracy (not necessarily the result with the smaller $\gamma$). 
     
     \item As shown in Figure \ref{kdd1} and \ref{kdd2}, we can draw similar observations from the KDDCup99 dataset. 
     
\end{itemize}

The above experimental results have validated the effectiveness of integrating R$^2$DP to improve the output utility for classification while ensuring $\epsilon$-differential privacy. In summary, all the experiments conducted in both statistical queries and real-world applications have validated the practicality of the R$^2$DP framework.

\section{Related Work}
\label{rel}
\label{relat}

Differential privacy~\cite{10.1007/11681878_14} is a model for preserving privacy while releasing the results of various useful functions, such as contingency tables, histograms and means~\cite{Dwork08differentialprivacy:}. Many existing works focus on improving the utility based on different mechanisms. 

\subhead{Noise Perturbation} Based on the general utility maximization framework from Ghosh et al.~\cite{Ghosh1}, Gupte and Sundararajan~\cite{Gupte1} further study the optimal noise probability distributions for single count queries. Later, Geng el al.~\cite{6875258,DBLP:journals/jstsp/GengKOV15} demonstrate the optimal noise distribution has a Staircase-shaped PDF for Laplace mechanism. Furthermore, Balle and Wang~\cite{DBLP:conf/icml/BalleW18} develop an optimal Gaussian mechanism in high privacy regime to minimize the noise and increase the utility for queries. Geng et al.~\cite{DBLP:journals/corr/abs-1809-10224} further show the optimal noise distribution is a uniform distribution over Gaussian mechanism. Moreover, Hardt et al.~\cite{Hardt:2010:GDP:1806689.1806786} study the privacy-utility trade-off for answering a set of linear queries over a histogram, where the error is defined as the worst expectation of the $\ell_2$-norm (identical to variance) of the noise among all possible outputs. Subsequently, Brenner et al.~\cite{5670945}
show that, for general query functions, no universally optimal DP mechanisms exist.

\subhead{Sampling and Aggregation}
Sampling and aggregation frameworks mostly split the database into
chunks, and aggregate the result using a DP
algorithm after querying each chunk~\cite{NissimRS07}. To
expand the applicability of output perturbation, 
Nissim et
al.~\cite{NissimRS07} propose a framework to formally analyze the
effect of instance-based noise. Observing the highly compressible
nature of many real-life data, researchers propose lossy compression
techniques to add noise calibrated to the compressed data.  
Acs et al.~\cite{DBLP:conf/icdm/AcsCC12} propose an optimization of Fourier
perturbation algorithm that clusters and exploits the redundancy between bins. Instead of directly adding noise to histogram counts, it first lossily compresses the data, then adds noise calibrated to the data. Li et al.~\cite{Li:2014:DWA:2732269.2732271} propose an algorithm to partitions a data domain into uniform regions and
adapts the strategy to fit the specific set of range queries to achieve a lower error rate. Zhang et al.~\cite{ZhangCXMX14} improve the clustering mechanism by sorting histogram bins based on the noisy counts.

\subhead{Data Composition}
Barak et al.~\cite{Barak:2007:PAC:1265530.1265569} propose
transforming the data into the Fouier domain, which could avoid the
violation of consistency for low-order marginals in database
tables. As efficiency is the main bottleneck for this approach when
the number of attributes is large, Hay et
al.~\cite{Hay:2010:BAD:1920841.1920970} ensure that the error rate
does not grow with the size of a database. The proposed hierarchical
histogram method also achieves a lower error for a fixed
domain. Different from one-dimensional datasets solution proposed by
Hay et al.~\cite{Hay:2010:BAD:1920841.1920970}, Xiao et
al.~\cite{XiaoWG10} propose \textit{Privelet} that improves accuracy
on datasets with arbitrary dimensions, which could reduce error to
25\% compared to 70\% as baseline error rate.  Cormode et
al.~\cite{6228069} apply \textit{quadtrees} and \textit{kd-trees} as
new techniques for parameter setting to improve the accuracy on
spatial data. Ding et al.~\cite{1fac2e61af8547b39fac891a12435f45}
introduce a general noise-control framework on data cubes.  Li et
al.~\cite{LiHRMM10} unify the two range queries over histograms into
one framework. Other techniques, such as principal component analysis
(PCA), linear discriminant analysis (LDA)~\cite{JiangJWMCO13},
and random projection~\cite{ChanyaswadLM19,XuRZQR17} are also used to
lower the data dimension for reducing the errors.
Cormode et al.~\cite{6228069} apply quadtrees (\textit{data-independent}) and kd-trees (\textit{data-dependent}) to add noise to a histogram output.

\subhead{Adaptive Queries}
In this technique, the improvement of utilities takes advantage of a
known set of queries, for example, Dwork et al.~\cite{DworkRV10}
propose \textit{Boosting for Queries} algorithm to obtain a better
accuracy of learning algorithms.  Hardt et
al.~\cite{HardtR10,HardtLM12} present multiplicative weights mechanism
to improve the efficiency of interactive queries. Instead of  polynomial
running time~\cite{DworkNRRV09}, this work achieves a
nearly linear running time with a relaxed utility requirement. Yuan et
al.~\cite{YuanZWXYH12,YuanZWXYH15} propose low-rank mechanism (LRM) to
further improve the adaptive queries. Other techniques such as
correlated noise~\cite{NikolovTZ13} and sparse vector technique
(SVT)~\cite{LyuSL17} are also used in adaptive queries.

\subhead{Applications}
Many researchers also work on improving the utility for different types of data, such as, the Fourier Perturbation Algorithm (FPA$_k$)~\cite{RastogiN10} in time-series data (e.g., location traces, web history, and personal health), \textit{kd-trees} on spatial data~\cite{6228069}, and matrix-valued query~\cite{ChanyaswadDPM18}.

\subhead{Summary}
Our R$^2$DP framework provides a complementary approach to  those existing works by providing the opportunity of searching for the maximal utility along an extra dimension. This framework also enables data recipients to specify their utility requirements and the computed parameter could be incorporated into  existing solutions to further improve utility.

\section{Conclusion}
\label{sec:conclusion}
This paper has proposed the R$^2$DP framework as a universal
solution for optimizing a variety of utility metrics requested in different applications. It can automatically identify a distribution that yields near-optimal utility, and hence is more practical for emerging applications. Specifically, we have shown that a differentially private mechanism could be defined based on a
random variable which is itself distributed according to some
parameterized distributions. We have also shown that such a mechanism
could explicitly take into account both the privacy requirements and
the utility requirements specified by the data owner and data
recipient, respectively. We have formally analyzed the privacy guarantee of R$^2$DP based on the well-known Laplace mechanism and formally proved the improvement of
utility over the baseline Laplace mechanism. Furthermore,
we discuss the potential of applying R$^2$DP to advanced algorithms. Finally, our experimental results based on
six different utility metrics for statistical queries, machine learning
and social network, as well as one privacy metric, have demonstrated
that R$^2$DP could significantly improve the utility of differentially
private solutions for a wide range of applications.
\section{Acknowledgements}

We thank the anonymous reviewers for their valuable comments and suggestions. This work is partially supported by the Natural Sciences and Engineering Research Council of Canada and Ericsson Canada under the Industrial Research Chair (IRC) in SDN/NFV Security. It is also partially supported by the National Science Foundation under Grant No. CNS-1745894.


\appendix

\section*{Appendix}

\section{Demonstration of Theorem~\ref{thm: RPLap mech}}
\label{demonst}
A Laplace distribution is of a $(\propto x\cdot e^{x\cdot t})$ order, where $x$ is the inverse of the scale parameter. Second, since $x\cdot e^{x\cdot t}=\diff{e^{x\cdot t}}{t}$, the cumulative distribution function (CDF) resulted from randomizing $x$ can be expressed in terms of the expectation $\mathbb E(e^{x\cdot t})$. We note that from now on, we will simply refer to R$^2$DP with Laplace distribution as the first fold PDF as \textit{the R$^2$DP mechanism}. 
\begin{figure}[!h]
\centering
\includegraphics[width=0.9\linewidth]{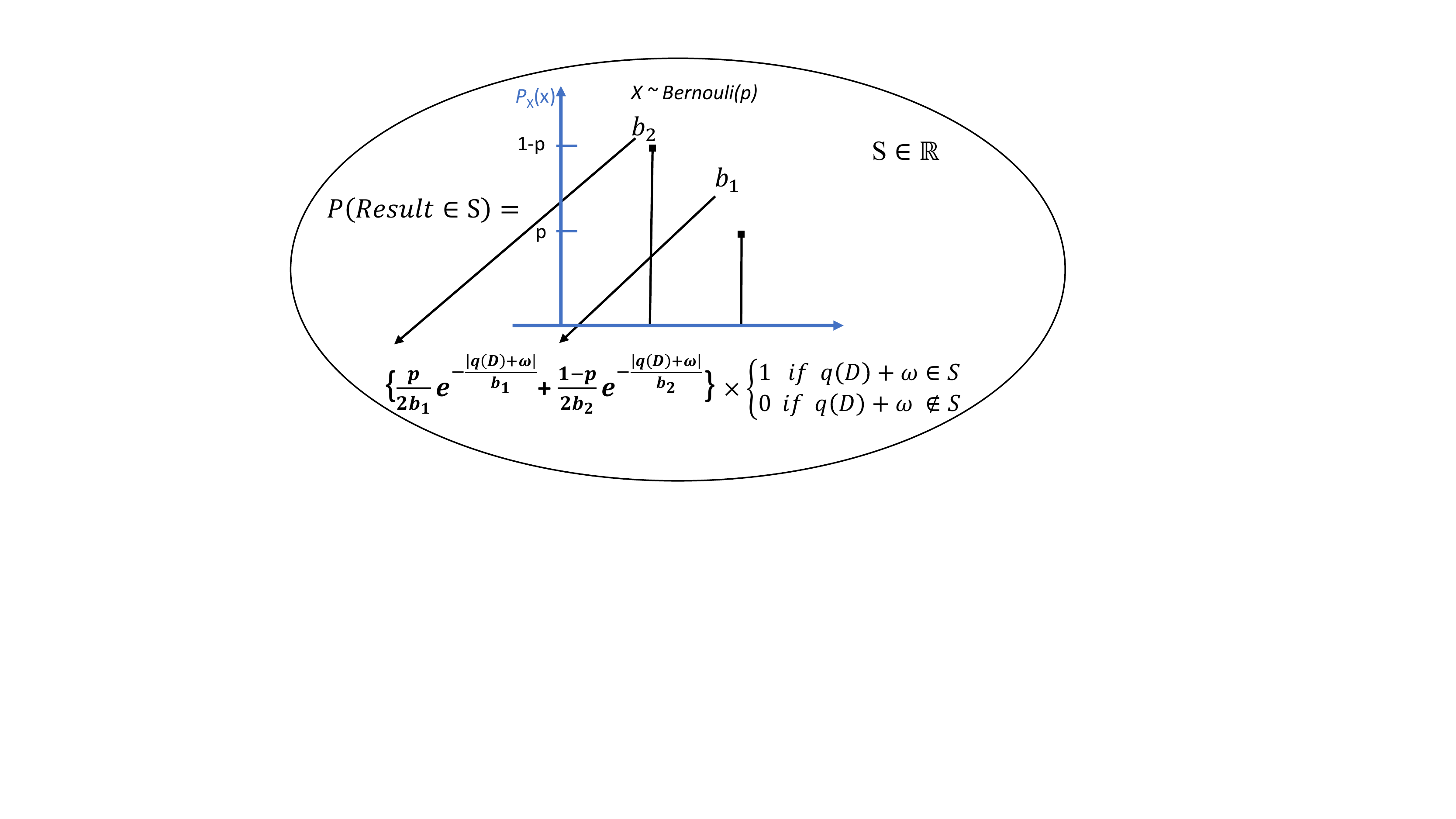}
\caption{The term in the parenthesis is the derivative of $\mathbb E(e^{\frac{1}{b}\cdot -|w|})$ w.r.t. $-|w|$, and hence the
above probability can be expressed in terms of the expectation}
\label{Fig:moment}
\end{figure}
\begin{exmp}
\label{exmp1}
Following Example~\ref{exmprpdp}, for a Bernoulli distributed scale parameter $b$, Figure~\ref{Fig:moment} illustrates the above finding (see Appendix~\ref{proofexamp} for proof). It can be verified that the term inside the braces is the derivative
of $\mathbb {E}(e^{\frac{1}{b}\cdot -|w|})$ w.r.t. $-|w|$, and hence the
above probability can be expressed in terms of the expectation.
\end{exmp}

\section{Case Study PDFs}
\label{cases}

\subsubsection{Discrete Probability Distributions}

First, we consider two different mixture Laplace distributions that
can be applied for constructing R$^2$DP with discrete
probability distribution $f_b$.

\vspace{0.05in}

(1) \textbf{Degenerate distribution.} A degenerate distribution is a
probability distribution in a (discrete or continuous) space with
support only in a space of lower
dimension~\cite{bremermann1965distributions}. If the degenerate
distribution is uni-variate (involving only a single random variable),
it will be a deterministic distribution and takes only a single
value. Therefore, the degenerate distribution is identical to the
baseline Laplace mechanism as it also assigns the mechanism one single
scale parameter $b_0$. Specifically, the probability mass function of
the uni-variate degenerate distribution is:
   \[ f_{\delta,k_0}(x)= \begin{cases} 
      1 & x= k_0  \\
      0 & x\neq k_0 
   \end{cases}
\]
 The MGF for the degenerate distribution $\delta_{k_0}$ is given by
 $M_k(t)=e^{t\cdot k_0}$~\cite{bulmer1979principles}. Using
 Equation~\ref{simple DPeq}, Theorem~\ref{degenerateDP} gives the
 same DP guarantee as the baseline Laplace mechanism.
 
 \begin{thm}
 \label{degenerateDP}
 The R$^2$DP mechanism $M_q(d,\epsilon)$, $\epsilon \sim f_{\delta,\frac{1}{b_0}}(\epsilon)$, is $\frac{\Delta q}{b_0}$-differentially private. 
 \end{thm}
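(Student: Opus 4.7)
The plan is to apply Theorem~\ref{simple DP} directly, since this theorem already gives the $\epsilon$-DP guarantee of any R$^2$DP mechanism as a closed-form expression in the MGF of $1/b$. All that is needed is to specialize that formula to the degenerate distribution concentrated at $1/b_0$.

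First, I would record the three ingredients required by the formula in Equation~\ref{simple DPeq}. For the degenerate distribution $f_{\delta,1/b_0}$, the MGF is $M_{1/b}(t)=e^{t/b_0}$, its derivative is $M'_{1/b}(t)=\tfrac{1}{b_0}e^{t/b_0}$, and the mean is $\mathbb{E}(1/b)=1/b_0$. Substituting $t=-\Delta q$ into the derivative yields $M'_{1/b}(-\Delta q)=\tfrac{1}{b_0}e^{-\Delta q/b_0}$.

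Next, I would plug these three quantities into Theorem~\ref{simple DP}:
\begin{equation*}
\ln\!\left[\frac{\mathbb{E}(1/b)}{M'_{1/b}(-\Delta q)}\right]
=\ln\!\left[\frac{1/b_0}{(1/b_0)\,e^{-\Delta q/b_0}}\right]
=\ln\!\left[e^{\Delta q/b_0}\right]
=\frac{\Delta q}{b_0},
\end{equation*}
which is exactly the claimed privacy level. As a sanity check, one should note that when $1/b$ degenerates to a single value, the compound Laplace distribution reduces to the ordinary Laplace distribution with fixed scale $b_0$, so the result must agree with the classical Laplace guarantee of Theorem~\ref{thm: Lap mech}; our calculation indeed recovers $\Delta q/b_0$.

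There is no genuine obstacle here: the only subtlety is purely notational, namely making sure the random variable being degenerated is $1/b$ (with value $1/b_0$) rather than $b$ itself, so that the substitution into the MGF formula is consistent with how Theorem~\ref{simple DP} was stated. Once that identification is made, the proof is a one-line calculation and doubles as a consistency check confirming that the R$^2$DP framework strictly generalizes the baseline Laplace mechanism.
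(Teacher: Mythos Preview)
Your proof is correct. You apply Theorem~\ref{simple DP} directly, substituting the degenerate MGF $M_{1/b}(t)=e^{t/b_0}$ into Equation~\eqref{simple DPeq} to obtain $\Delta q/b_0$ in one line. The paper's appendix proof takes a slightly different route: it goes back to the more primitive Lemma~\ref{thm: RPLap mechDP} (the pointwise density-ratio formula) and re-derives the bound by applying the triangle inequality to $|x-q(d')|-|x-q(d)|$. Both arguments are valid; yours is shorter because Theorem~\ref{simple DP} has already absorbed the maximization-over-$x$ and triangle-inequality steps, so there is no need to repeat them in this special case. The paper's route is marginally more self-contained in that it does not invoke the log-convexity machinery behind Theorem~\ref{simple DP}, but since that theorem is already in hand, your direct substitution is the cleaner choice---and indeed the text immediately preceding the theorem statement explicitly points to Equation~\eqref{simple DPeq} as the intended tool.
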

 
Obviously, this distribution does not improve the bound in
Theorem~\ref{thm: RPLap mechut} but shows the
soundness of our findings.

\vspace{0.05in}

(2) \textbf{Bernoulli distribution.}
The probability mass function of this distribution, over possible outcomes $k$, is
\[ f_{B}(k;p)=\begin{cases}p&{\text{if }}k=1,\\1-p&{\text{if }}k=0.\end{cases}\]
Note that the binary outcomes $k=0$ and $k=1$ can be mapped to any two outcomes $X_0$ and $X_1$, respectively. Therefore, we consider the following Bernoulli outcomes
\[ f_{B,X_0, X_1}(X;p)=\begin{cases}p&{\text{if }}X=X_1,\\1-p&{\text{if }}X=X_0.\end{cases}\]
The MGF for Bernoulli distribution $f_{B,X_0, X_1}(X;p)$ is  $M_X(t)=p\cdot e^{t\cdot X_0}+(1-p)\cdot e^{t\cdot X_1}$~\cite{bulmer1979principles}. We now derive the precise \DP guarantee of an R$^2$DP mechanism with its scale parameter randomized according to a Bernoulli distribution. 
\begin{thm}
\label{bernoulidp}
 The R$^2$DP mechanism $M_q(d,\epsilon)$, $\epsilon\sim f_{B,\frac{1}{b_0},\frac{1}{b_1}}(\epsilon;p)$, satisfies $\ln [p\cdot e^{\frac{\Delta q}{b_0}}+(1-p)\cdot e^{\frac{\Delta q}{b_1}}]$ differential privacy. 
\end{thm}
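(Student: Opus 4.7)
The plan is to specialize the general R$^2$DP privacy formula of Theorem~\ref{simple DP} to the two-point Bernoulli law on $1/b$, whose MGF was just recorded as $M_{1/b}(t)=p\,e^{t/b_0}+(1-p)\,e^{t/b_1}$. Since that theorem already reduces the computation of the DP parameter of any R$^2$DP Laplace mechanism to a single manipulation of $M_{1/b}$ and its first derivative, all that remains for this Bernoulli corollary is an explicit calculation with that particular MGF.

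Concretely, I would differentiate once to get $M'_{1/b}(t)=(p/b_0)\,e^{t/b_0}+((1-p)/b_1)\,e^{t/b_1}$, read off $\mathbb{E}(1/b)=M'_{1/b}(0)=p/b_0+(1-p)/b_1$, and substitute into $\epsilon=\ln\bigl[\mathbb{E}(1/b)/M'_{1/b}(-\Delta q)\bigr]$. A parallel and more transparent route is to exploit the mixture structure directly: conditional on the Bernoulli coin, the mechanism is a Laplace with scale $b_0$ (which is $(\Delta q/b_0)$-DP) or a Laplace with scale $b_1$ (which is $(\Delta q/b_1)$-DP). Writing $\Prob(\mathcal{M}_q(d)\in S)=p\,\Prob_0(d,S)+(1-p)\,\Prob_1(d,S)$ and applying the single-scale Laplace inequality $\Prob_i(d,S)\leq e^{\Delta q/b_i}\,\Prob_i(d',S)$ componentwise produces, after collecting terms, precisely the weighted exponential sum appearing in the statement.

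The main obstacle I anticipate is the final comparison step that reconciles the two viewpoints with the advertised closed form. In the MGF route, one must show that the quotient $\mathbb{E}(1/b)/M'_{1/b}(-\Delta q)$ is controlled by the clean single-sum form $M_{1/b}(\Delta q)=p\,e^{\Delta q/b_0}+(1-p)\,e^{\Delta q/b_1}$; in the mixture route, one must argue that the ratio of the dominated numerator over the full denominator $\Prob(\mathcal{M}_q(d')\in S)$ is uniformly bounded by this same quantity over all measurable $S$ and all adjacent $d,d'$, rather than collapsing to the looser maximum $\max(e^{\Delta q/b_0},e^{\Delta q/b_1})$ that a naive mixture argument would produce. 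I expect the positivity and log-convexity of MGFs, stressed immediately before Theorem~\ref{simple DP}, together with the observation that the per-branch Laplace constants $\Delta q/b_i$ are themselves logarithms of the exponentials in $M_{1/b}(\Delta q)$, to be the structural ingredients that make the closed form fall out cleanly rather than through a worst-case branch estimate.
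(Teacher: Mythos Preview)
Your plan to specialize Theorem~\ref{simple DP} cannot produce the stated bound. That theorem gives $e^\epsilon=\mathbb{E}(1/b)\big/M'_{1/b}(-\Delta q)$, which for the two-point law equals
\[
\frac{p/b_0+(1-p)/b_1}{(p/b_0)\,e^{-\Delta q/b_0}+((1-p)/b_1)\,e^{-\Delta q/b_1}},
\]
and this is \emph{not} dominated by $M_{1/b}(\Delta q)=p\,e^{\Delta q/b_0}+(1-p)\,e^{\Delta q/b_1}$. For instance with $p=\tfrac12$, $b_0=1$, $b_1=2$, $\Delta q=1$ the Theorem~\ref{simple DP} quantity is $0.75/0.336\approx 2.23$ while $M_{1/b}(1)\approx 2.18$, so the inequality you hope log-convexity will deliver points the wrong way. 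The same example blocks your componentwise mixture route: from $p\,e^{\Delta q/b_0}\Prob_0(d',S)+(1-p)\,e^{\Delta q/b_1}\Prob_1(d',S)$ you cannot pass to $\bigl[p\,e^{\Delta q/b_0}+(1-p)\,e^{\Delta q/b_1}\bigr]\cdot\bigl[p\,\Prob_0(d',S)+(1-p)\,\Prob_1(d',S)\bigr]$ by any general convexity argument, because whether that inequality holds depends on the relative sizes of $\Prob_0$ and $\Prob_1$, i.e.\ on the specific Laplace tail structure you have discarded.

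The paper does not go through Theorem~\ref{simple DP} here. It writes the mixture density ratio explicitly, uses the triangle inequality to replace $-|x-q(d)|$ by $\Delta q-|x-q(d')|$ in the numerator, and then makes the substitution $X=e^{-|x-q(d')|/b_0}$, $a=e^{\Delta q/b_0}$, $k=b_0/b_1>1$, which collapses the ratio to the single-variable rational function $\bigl(p\,aX+(1-p)(aX)^k\bigr)\big/\bigl(pX+(1-p)X^k\bigr)$ on $X\in(0,1]$. A direct computation of the numerator of its derivative shows this function is strictly increasing, so the supremum is attained at $X=1$, where it equals exactly $p\,e^{\Delta q/b_0}+(1-p)\,e^{\Delta q/b_1}$. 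The missing ingredient in your proposal is this reduction to a one-variable monotonicity problem; neither the abstract MGF identity nor the branchwise Laplace inequality supplies it.
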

This bound is exactly the mean of $e^{\epsilon(b)}$ given in Theorem~\ref{thm: RPLap mechut}.

\subsubsection{Continuous Probability Distributions}

We now investigate three compound Laplace distributions.

\vspace{0.05in}
(1) \textbf{Gamma distribution.} The gamma distribution is a
two-parameter family of continuous probability distributions with a
shape parameter $k>0$ and a scale parameter $\theta$. Besides the
generality, the gamma distribution is the maximum entropy probability
distribution (both w.r.t. a uniform base measure and w.r.t. a $1/x$ base measure) for a random variable $X$ for which
$\mathbb E (X) = k \theta = \alpha/\beta$ is fixed and greater than
zero, and $\mathbb E[\ln(X)] = \psi(k) + \ln(\theta) =
\psi(\alpha)-\ln(\beta)$ is fixed ($\psi$ is the digamma
function). Therefore, it may provide a relatively higher
privacy-utility trade-off in comparison to the other
candidates~\cite{zwillinger2002crc,jambunathan1954some}. A random
variable $X$ that is gamma-distributed with shape $\alpha$ and rate
$\beta$ is denoted by $X\sim \Gamma(k,\theta)$ and the corresponding
PDF is
\begin{equation*}
    f_{\Gamma}(X;k,\theta)
    {\displaystyle={\frac {x^{k -1}e^{-\frac{x}{\theta}}}{\Gamma (k)\cdot \theta^k}}\quad {\text{ for }}X>0{\text{ and }}k ,\theta >0,} 
\end{equation*}
where $\Gamma ( \alpha )$ is the gamma function. We now investigate the \DP guarantee provided by assuming that the reciprocal of the scale parameter $b$ in Laplace mechanism is distributed according to the gamma distribution (see Appendix~\ref{proofgama} for the proof).
\begin{thm}
\label{gamadist}
 The R$^2$DP mechanism $M_q(d,\epsilon)$, $\epsilon \sim f_{\Gamma}(\epsilon;k,\theta)$, satisfies $\big((k+1)\cdot \ln (1+\Delta q \cdot \theta)\big)$ differential privacy. 
\end{thm}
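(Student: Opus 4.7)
The plan is to apply the general R$^2$DP privacy formula of Theorem~\ref{simple DP} as a black box, and simply plug in the standard moment generating function of the Gamma distribution. That is, writing $X := 1/b \sim \Gamma(k,\theta)$, the theorem asserts
\[
\epsilon = \ln\!\left[\frac{\mathbb{E}(X)}{M'_{X}(t)\big|_{t=-\Delta q}}\right],
\]
so the only thing to verify is that this quantity equals $(k+1)\ln(1+\Delta q\cdot \theta)$.

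First I would record the three ingredients from standard references on the Gamma law: (i) $\mathbb{E}(X) = k\theta$; (ii) the MGF $M_X(t) = (1-\theta t)^{-k}$ is finite on the half-line $t < 1/\theta$; (iii) by differentiation, $M'_X(t) = k\theta\,(1-\theta t)^{-(k+1)}$. Since $\Delta q \geq 0$ and $\theta > 0$, the point $t = -\Delta q$ lies strictly inside the domain of convergence of the MGF, so differentiation under the integral sign is justified and the formula for $M'_X$ can be evaluated there without any existence subtleties.

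Next I would substitute directly: the denominator becomes $M'_X(-\Delta q) = k\theta\,(1+\theta\Delta q)^{-(k+1)}$, so the ratio inside the logarithm simplifies as
\[
\frac{\mathbb{E}(X)}{M'_X(-\Delta q)} = \frac{k\theta}{k\theta\,(1+\theta\Delta q)^{-(k+1)}} = (1+\theta\Delta q)^{k+1}.
\]
Taking the logarithm delivers $\epsilon = (k+1)\ln(1+\theta\Delta q)$, which is the claim.

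There is no real obstacle here; once Theorem~\ref{simple DP} is in hand the result is a one-line computation from the closed-form MGF. The only thing worth a sentence of care is verifying that $-\Delta q$ lies inside the convergence strip of $M_X$, since otherwise the expression in Theorem~\ref{simple DP} would be undefined; this is immediate because $-\Delta q \leq 0 < 1/\theta$. Everything else is algebraic cancellation.
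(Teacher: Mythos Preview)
Your proposal is correct, but it takes a slightly different route than the paper. You invoke Theorem~\ref{simple DP} as a black box, which already establishes (via a general log-convexity argument for $M'_{1/b}$) that the worst-case likelihood ratio is attained at $t=0$ versus $t=-\Delta q$; from there you just plug in $\mathbb{E}(X)=k\theta$ and $M'_X(-\Delta q)=k\theta(1+\theta\Delta q)^{-(k+1)}$ and cancel. The paper instead goes back to the more primitive Lemma~\ref{thm: RPLap mechDP} (the pointwise ratio $M'_{1/b}(-|x-q(d)|)/M'_{1/b}(-|x-q(d')|)$ over all $x$), writes this out explicitly for the Gamma MGF, applies the triangle inequality $|x-q(d')|\le |x-q(d)|+\Delta q$, and then shows by monotonicity that the maximum of $(X+\theta\Delta q)/X$ over $X\ge 1$ is at $X=1$. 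In effect the paper re-derives, in the special Gamma case, the optimization that Theorem~\ref{simple DP} already handles in general. Your approach is shorter and fully legitimate given that Theorem~\ref{simple DP} is available; the paper's approach is more self-contained and makes the maximizing point ($x=q(d)$) explicit for this particular distribution. Both arrive at exactly the same computation $(1+\theta\Delta q)^{k+1}$.
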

We now apply the necessary condition given in Equation~\ref{necc} (see Appendix~\ref{lemgama} for the proof).
\begin{lem}
\label{necsgama}
   R$^2$DP using Gamma distribution can satisfy the necessary condition in Equation~\ref{necc}.
\end{lem}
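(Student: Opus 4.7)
The plan is to evaluate both sides of the inequality in Equation~\ref{necc} using the Gamma MGF and reduce the problem to a one-line comparison of elementary functions, then exhibit parameter regimes for which it holds. Recall that for $\frac{1}{b} \sim \Gamma(k,\theta)$ we have $\mathbb{E}(\tfrac{1}{b}) = k\theta$, the MGF is $M_{1/b}(t) = (1-\theta t)^{-k}$ (valid for $t < 1/\theta$), and $M'_{1/b}(t) = k\theta (1-\theta t)^{-(k+1)}$. Substituting $t=-\Delta q$ gives $M'_{1/b}(-\Delta q) = k\theta (1+\theta\Delta q)^{-(k+1)}$, and therefore the left-hand side of Equation~\ref{necc} becomes
\begin{equation*}
\frac{\mathbb{E}(\tfrac{1}{b})}{M'_{1/b}(-\Delta q)} \;=\; (1+\theta\Delta q)^{k+1},
\end{equation*}
which is exactly $e^{\epsilon}$ reported in Theorem~\ref{gamadist}, providing a useful sanity check.

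The right-hand side of Equation~\ref{necc} is $M_{1/b}(\Delta q) = (1-\theta\Delta q)^{-k}$, which is finite precisely when $\theta\Delta q < 1$. Imposing that constraint (necessary for the MGF to exist at $+\Delta q$), the necessary condition for utility improvement reduces to
\begin{equation*}
(1+\theta\Delta q)^{k+1}(1-\theta\Delta q)^{k} \;<\; 1,
\end{equation*}
or, writing $u := \theta\Delta q \in (0,1)$ and factoring, to the clean inequality $(1-u^2)^{k}(1+u) < 1$.

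The next step is to exhibit parameter choices that realize this inequality. Since $0 < 1-u^2 < 1$ for any admissible $u$, the factor $(1-u^2)^{k}$ tends to $0$ as $k \to \infty$, while $(1+u)$ remains bounded by $2$. Hence for any fixed $u \in (0,1)$ there exists $k_0(u)$ such that $(1-u^2)^{k}(1+u) < 1$ for all $k \geq k_0(u)$; equivalently, for any $\Delta q$ and any $\theta < 1/\Delta q$, taking $k > \log(1+u)/\log\!\bigl(1/(1-u^2)\bigr)$ suffices. This demonstrates that the Gamma-compounded R$^2$DP mechanism can strictly improve on the baseline Laplace utility bound, proving the lemma.

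The only subtlety, and the main obstacle worth flagging, is the domain constraint $\theta \Delta q < 1$: without it the MGF $M_{1/b}(\Delta q)$ does not exist and the necessary condition is vacuous. This is why the improvement regime should be stated as a joint condition on $(k,\theta)$ relative to $\Delta q$, rather than as an unconditional claim; a short Taylor expansion around $u=0$ further confirms that small $u$ alone is not enough (the leading term is $+u$), so the improvement genuinely requires choosing $k$ sufficiently large for the chosen scale $\theta$.
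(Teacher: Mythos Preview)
Your proof is correct and follows essentially the same route as the paper's: both reduce Equation~\ref{necc} for the Gamma case to the inequality $(k+1)\ln(1+\theta\Delta q) < -k\ln(1-\theta\Delta q)$ under the constraint $\theta\Delta q<1$, and then exhibit parameters satisfying it. The paper simply fixes $\theta\Delta q = 1/2$ and checks numerically that $k\ln 2 > (k+1)\ln 1.5$ for $k>1.4094$, whereas you give the general threshold $k > \log(1+u)/\log\bigl(1/(1-u^2)\bigr)$ for arbitrary $u=\theta\Delta q\in(0,1)$ via the clean factoring $(1-u^2)^k(1+u)<1$; this is a mild but genuine generalization, and your Taylor remark explaining why small $u$ alone does not suffice is a nice addition absent from the paper.
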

Therefore, Gamma distribution may improve over the baseline, and this can be
computed by optimizing the privacy-utility trade-off using the Lagrange multiplier
function in Equation~\ref{lagrange1}. Also, our numerical results show that, this distribution is more effective for large $\epsilon$ (weaker privacy guarantees). 

\vspace{0.05in}
(2) \textbf{Uniform distribution.}
In probability theory and statistics, the continuous uniform distribution or rectangular distribution is a family of symmetric probability distributions such that for each member of the family, all intervals of the same length on the support of the distribution are equally probable. The support is defined by the two parameters, $a$ and $b$, which are the minimum and maximum values. The distribution is often abbreviated as $U(a,b)$, which is the maximum entropy probability distribution for a random variable $X$ under no constraint; other than that, it is contained in the distribution's support~\cite{zwillinger2002crc,jambunathan1954some}. The MGF for $U(a,b)$ is \[ M_X(t)=\begin{cases}\frac{e^{tb}-e^{ta}}{t(b-a)}&{\text{for }}t\neq 0,\\1&{\text{for }}{\text{for }}t=0.\end{cases}\] Using Theorem~\ref{simple DP}, we now drive the precise \DP guarantee of an R$^2$DP mechanism for uniform distribution $U(a,b)$. 

\begin{thm}
\label{uniformdist}
 The R$^2$DP mechanism $M_q(d,\epsilon)$, $\epsilon \sim f_{U(a,b)}(\epsilon)$, is $\ln \big[\frac{\alpha^2-\beta^2}{2((1+\beta)e^{-\beta}-(1+\alpha)e^{-\alpha})} \big]$-differentially private, where $\alpha=a\cdot \Delta q$ and $\beta=b\cdot \Delta q$. 
\end{thm}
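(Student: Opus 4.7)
The plan is to invoke Theorem~\ref{simple DP} directly, which gives the differential privacy bound of the R$^2$DP mechanism as $\ln\big[\mathbb{E}(1/b) \,/\, M'_{1/b}(-\Delta q)\big]$. The task thus reduces to computing two quantities for the case where $1/b \sim U(a,b)$: the mean $\mathbb{E}(1/b)$ and the derivative of the moment generating function evaluated at $-\Delta q$. The desired closed form should then fall out after expressing everything in terms of the dimensionless quantities $\alpha = a\Delta q$ and $\beta = b\Delta q$.

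First I would record the elementary fact that $\mathbb{E}(1/b) = (a+b)/2$ for the uniform distribution $U(a,b)$, which rewrites as $(\alpha+\beta)/(2\Delta q)$. Next I would differentiate the stated MGF $M_X(t) = (e^{tb} - e^{ta})/(t(b-a))$ with respect to $t$ via the quotient rule; this produces
\begin{equation*}
M'_X(t) \;=\; \frac{t\bigl(be^{tb} - ae^{ta}\bigr) - \bigl(e^{tb} - e^{ta}\bigr)}{t^2(b-a)}.
\end{equation*}
Plugging in $t=-\Delta q$ and grouping terms by exponential factors $e^{-\alpha}$ and $e^{-\beta}$, the numerator collapses to $(1+\alpha)e^{-\alpha} - (1+\beta)e^{-\beta}$, while the denominator becomes $\Delta q(\beta-\alpha)$ after substituting $a = \alpha/\Delta q$ and $b = \beta/\Delta q$.

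Finally, I would form the ratio $\mathbb{E}(1/b)/M'_{1/b}(-\Delta q)$. The two factors of $\Delta q$ cancel, leaving $(\alpha+\beta)(\beta-\alpha)/\bigl(2[(1+\alpha)e^{-\alpha} - (1+\beta)e^{-\beta}]\bigr)$. A single sign flip in both numerator and denominator rewrites this as $(\alpha^2-\beta^2)/\bigl(2[(1+\beta)e^{-\beta} - (1+\alpha)e^{-\alpha}]\bigr)$, matching the stated bound after taking the logarithm. The only nontrivial step is the algebraic regrouping in the MGF derivative; there is no real analytic obstacle since Theorem~\ref{simple DP} already discharges the privacy argument, and the MGF of the uniform law is given. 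A small sanity check worth including is that as $a \to b$ (degenerate limit), $\alpha \to \beta$ and the bound should reduce to $\Delta q/b = \Delta q \cdot a$, recovering the baseline Laplace guarantee of Theorem~\ref{degenerateDP}; this can be verified by an L'H\^opital expansion of both numerator and denominator in the ratio.
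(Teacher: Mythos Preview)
Your proposal is correct and follows precisely the route the paper indicates: the paper does not spell out a proof for this theorem but prefaces it with ``Using Theorem~\ref{simple DP}, we now derive the precise \DP guarantee,'' which is exactly the reduction you carry out. Your computation of $\mathbb{E}(1/b)$, the derivative $M'_{1/b}(-\Delta q)$, and the subsequent algebraic simplification are all sound, and the degenerate-limit sanity check is a nice addition beyond what the paper provides.
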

We now apply the necessary condition given in Equation~\ref{necc}. One can easily verify that the inequality holds for an infinite number of settings, e.g., $a=0.5$, $b=9$ and $\Delta q=1.2$.
\begin{lem}
\label{necsuniform}
   R$^2$DP using uniform distribution can satisfy the necessary condition in Equation~\ref{necc}.
\end{lem}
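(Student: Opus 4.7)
The plan is to verify the necessary condition of Equation~\ref{necc} for the uniform case by substituting the closed-form expressions already derived and then exhibiting a concrete parameter triple at which the inequality is strict. Since the lemma only asserts that the uniform distribution \emph{can} satisfy the condition, a single witness suffices; the goal is existence rather than a characterization of the entire admissible region of $(a,b,\Delta q)$.

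First I would re-parameterize using $\alpha = a\,\Delta q$ and $\beta = b\,\Delta q$ so that the two sides of Equation~\ref{necc} take a compact form. Using the MGF of $U(a,b)$ one has
$$M_{\frac{1}{b}}(\Delta q) \;=\; \frac{e^{\beta}-e^{\alpha}}{\beta-\alpha},$$
and from Theorem~\ref{uniformdist} one has
$$e^\epsilon \;=\; \frac{\beta^2-\alpha^2}{2\bigl((1+\alpha)e^{-\alpha}-(1+\beta)e^{-\beta}\bigr)},$$
where I have flipped signs in numerator and denominator (relative to the statement of Theorem~\ref{uniformdist}) to make both manifestly positive for $0\le \alpha < \beta$; the denominator is positive because $t\mapsto (1+t)e^{-t}$ is strictly decreasing on $t\ge 0$. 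The necessary condition therefore reduces to
$$\frac{\beta^2-\alpha^2}{2\bigl((1+\alpha)e^{-\alpha}-(1+\beta)e^{-\beta}\bigr)} \;<\; \frac{e^{\beta}-e^{\alpha}}{\beta-\alpha}.$$

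Next I would instantiate at the triple suggested in the statement, $a=0.5$, $b=9$, $\Delta q=1.2$, so $\alpha=0.6$ and $\beta=10.8$, and compare the two sides numerically. The right-hand side is of order $e^{10.8}/10.2$, whereas the left-hand side is bounded above by roughly $(\beta+\alpha)/\bigl(2(1+\alpha)e^{-\alpha}-o(1)\bigr)$ (the second term in the denominator is negligible once $\beta$ is moderately large), hence grows only linearly in $\beta$. A direct evaluation confirms the inequality by several orders of magnitude at this point, which establishes the lemma.

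The main obstacle is essentially bookkeeping: one must check that the $e^\epsilon$ expression from Theorem~\ref{uniformdist} is genuinely positive and well-defined in the regime being used, so that taking logarithms in Equation~\ref{necc} is valid. As noted above, the strict monotonicity of $(1+t)e^{-t}$ on $[0,\infty)$ handles this uniformly for all $0\le \alpha<\beta$. Beyond that, the proof is a one-point numerical verification and requires no further analytic work; a fuller treatment could, if desired, identify an asymptotic region (e.g.\ $\beta-\alpha$ large) in which the inequality holds by an exponential margin.
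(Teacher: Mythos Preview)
Your proposal is correct and follows exactly the paper's approach: the paper's entire argument is the one-line remark preceding the lemma, namely that ``one can easily verify that the inequality holds for an infinite number of settings, e.g., $a=0.5$, $b=9$ and $\Delta q=1.2$.'' You supply the explicit closed forms and the positivity check that the paper omits, but the verification at that same triple is all that is being claimed.
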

Therefore, R$^2$DP using uniform distribution may improve over the baseline, and this can be computed by optimizing the privacy-utility trade-off using the Lagrange multiplier function in Equation~\ref{lagrange1}. Also, our numerical results show that, this distribution can also be effective for both small and large $\epsilon$.

\vspace{0.05in}

(3) \textbf{Truncated Gaussian distribution.} The last distribution we
consider is the Truncated Gaussian distribution. This
distribution is derived from that of a normally distributed random
variable by bounding the random variable from either below or above
(or both). Therefore, we can benefit from the numerous useful
properties of Gaussian distribution, by truncating the
negative region of the Gaussian distribution. Suppose $X\sim \mathcal
N(\mu ,\sigma ^{2})$ has a Gaussian distribution and lies within the
interval $X\in (a,b),\;-\infty \leq a<b\leq \infty$. Then, $X$
conditional on $a<X<b$ has a truncated Gaussian distribution with the
following probability density function

\vspace{-0.15in}

\begin{equation*}
    f_{\mathcal N^T}(X;\mu ,\sigma,a,b)
    {\displaystyle={\frac {\phi(\frac{X-\mu}{\sigma})}{\sigma\cdot \big(\Phi(\frac{b-\mu}{\sigma})-\Phi(\frac{a-\mu}{\sigma})\big)}}\quad {\text{for }}a\leq x\leq b}
\end{equation*}
and by $f_{\mathcal N^T}=0$ otherwise. Here, $\phi(x)=\frac{1}{\sqrt{2\pi}\cdot} e^{-\frac{x^2}{2}}$ and $\Phi(x)=1-Q(x)$ are PDF and CDF of the standard Gaussian distribution, respectively.
Next, using Theorem~\ref{simple DP}, we give the \DP guarantee provided by the mechanism assuming that the reciprocal of $b$ is distributed according to the truncated Gaussian distribution.
\begin{thm}
\label{truncateddist}
 The R$^2$DP mechanism $\mathcal M_q(d,\epsilon)$, $\epsilon \sim f_{\mathcal N^T}(\epsilon;\mu ,\sigma,a,b)$, satisfies $\epsilon_{N^T}$- differential privacy, where 
 \begin{align}
    \epsilon_{N^T}=\ln\left[\cfrac{\mu+\cfrac{\sigma\cdot(\phi(\alpha)-\phi(\beta))}{(\Phi(\beta)-\Phi(\alpha))}}{\diff{M_{N^T}(t)}{t}|_t=-\Delta q}\right]
 \end{align}
 in which $\phi(\cdot)$ is the probability density function of the standard normal distribution, $\phi(\cdot)$ is its cumulative distribution function and $\alpha=\frac{a-\mu}{\sigma}$ and $\beta=\frac{b-\mu}{\sigma}$.
\end{thm}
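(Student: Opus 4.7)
The plan is to apply Theorem~\ref{simple DP} directly, using the fact that when $1/b \sim f_{\mathcal N^T}(\cdot;\mu,\sigma,a,b)$, the entire formula $\ln[\mathbb E(1/b) \,/\, M'_{1/b}(-\Delta q)]$ reduces to the claimed expression once the mean and the MGF of the truncated Gaussian are substituted in closed form. Concretely, by Theorem~\ref{simple DP} the R$^2$DP Laplace mechanism with $1/b$ drawn from $f_{\mathcal N^T}$ is $\epsilon$-DP with
\[
\epsilon = \ln\!\left[\frac{\mathbb E(1/b)}{\frac{d}{dt} M_{1/b}(t)\big|_{t=-\Delta q}}\right],
\]
so the entire content of the theorem is an evaluation of the numerator and denominator.

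First I would compute $\mathbb E(1/b)$. This is the classical mean of a truncated normal: integrating $x \cdot \phi((x-\mu)/\sigma)/(\sigma(\Phi(\beta)-\Phi(\alpha)))$ over $[a,b]$, one uses the identity $\frac{d}{dx}\phi(x) = -x\phi(x)$ after the change of variables $z=(x-\mu)/\sigma$, which yields the standard result
\[
\mathbb E(1/b) \;=\; \mu + \sigma\cdot\frac{\phi(\alpha)-\phi(\beta)}{\Phi(\beta)-\Phi(\alpha)},
\]
with $\alpha=(a-\mu)/\sigma$, $\beta=(b-\mu)/\sigma$. This gives the numerator of $\epsilon_{N^T}$.

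Next I would turn to $M_{\mathcal N^T}(t) = \mathbb E(e^{t/b})$. The standard technique is to complete the square inside the exponent: $-\frac{(x-\mu)^2}{2\sigma^2}+tx = -\frac{(x-(\mu+\sigma^2 t))^2}{2\sigma^2}+\mu t+\tfrac12\sigma^2 t^2$. After shifting the integration variable this produces a Gaussian integral over a translated interval, so
\[
M_{\mathcal N^T}(t) \;=\; \exp\!\bigl(\mu t+\tfrac12\sigma^2 t^2\bigr)\cdot\frac{\Phi(\beta-\sigma t)-\Phi(\alpha-\sigma t)}{\Phi(\beta)-\Phi(\alpha)}.
\]
Differentiating (product rule plus $\Phi'=\phi$ and the chain rule) and evaluating at $t=-\Delta q$ gives the denominator exactly as written in the statement, so substituting both pieces into Theorem~\ref{simple DP} yields $\epsilon_{N^T}$.

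The main obstacle is bookkeeping rather than conceptual: carrying the truncation bounds correctly through the completion of the square, and verifying that the derivative of the MGF matches the unexpanded $\diff{M_{N^T}(t)}{t}\big|_{t=-\Delta q}$ kept in the statement (the authors deliberately leave it symbolic). A secondary subtlety is the support condition: since the scale parameter must be positive, one must ensure $a\geq 0$ so that $1/b>0$ on the support; otherwise the truncated Gaussian would produce an ill-defined Laplace scale, and the application of Theorem~\ref{simple DP} (which implicitly assumes a valid Laplace first fold) would fail. Under that mild condition, the computation is a direct substitution and the theorem follows.
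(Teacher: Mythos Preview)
Your proposal is correct and matches the paper's approach: the paper does not give a separate detailed proof of this theorem but simply prefaces it with ``using Theorem~\ref{simple DP}'', so the entire argument is exactly the substitution you describe---plug the truncated-Gaussian mean $\mu+\sigma(\phi(\alpha)-\phi(\beta))/(\Phi(\beta)-\Phi(\alpha))$ into the numerator of Theorem~\ref{simple DP} and leave the denominator as the symbolic $M'_{\mathcal N^T}(-\Delta q)$. Your derivation of the MGF via completing the square and your remark on the support condition $a\geq 0$ go beyond what the paper spells out, but the underlying strategy is identical.
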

\begin{lem}[see Appendix~\ref{label:trunclem} for the proof]
\label{necstrunc}
  R$^2$DP using truncated Gaussian distribution can satisfy the necessary condition in Equation~\ref{necc}.
\end{lem}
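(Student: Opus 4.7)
My plan is to exhibit a family of truncated Gaussian parameters for which the strict inequality of Equation~\ref{necc} holds, in the same spirit as Lemma~\ref{necsuniform}. The first step is to rewrite that inequality in a transparent covariance form. Setting $X=1/b$ (so $X$ is distributed as the truncated Gaussian from Theorem~\ref{truncateddist}) and $t=\Delta q$, and using $M'_{1/b}(-\Delta q)=\mathbb{E}(X\,e^{-tX})$ together with $M_{1/b}(\Delta q)=\mathbb{E}(e^{tX})$, Equation~\ref{necc} becomes
\begin{equation*}
\mathbb{E}(X)<\mathbb{E}\bigl(X\,e^{-tX}\bigr)\cdot\mathbb{E}\bigl(e^{tX}\bigr).
\end{equation*}
Since $\mathbb{E}(X)=\mathbb{E}\bigl(g(X)\,h(X)\bigr)$ with $g(X)=Xe^{-tX}$ and $h(X)=e^{tX}$, the condition is precisely $\mathrm{Cov}\bigl(g(X),h(X)\bigr)<0$.

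The second step is to observe that $h$ is strictly increasing in $X$, while $g'(X)=e^{-tX}(1-tX)$ is strictly negative for $X>1/t$. Therefore, if the truncation interval $[a_0,b_0]$ of the Gaussian for $X=1/b$ is chosen to lie inside $(1/\Delta q,\infty)$, the functions $g$ and $h$ are strictly anti-monotone on the support, and the classical Chebyshev monotone-functions inequality gives $\mathrm{Cov}(g(X),h(X))\le 0$ with strict inequality whenever $\sigma>0$ (i.e., whenever $X$ is not a.s. constant). This already establishes existence qualitatively: any truncated Gaussian whose support is pushed above $1/\Delta q$ satisfies the necessary condition.

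To match the style of the earlier lemmas and make the claim explicit, I would finish by substituting a concrete choice---for instance $\Delta q=1$ together with $(\mu,\sigma,a_0,b_0)$ with $a_0$ slightly above $1/\Delta q$ and $b_0$ a few standard deviations above $\mu$---into the closed-form MGF recalled in Theorem~\ref{truncateddist} and verifying the strict gap numerically. The main obstacle is that the truncated Gaussian's MGF mixes exponential and $\Phi$ terms, so a fully symbolic manipulation of Equation~\ref{necc} is unwieldy; the covariance reformulation sidesteps this difficulty by reducing the proof to a monotonicity observation plus a single numerical witness.
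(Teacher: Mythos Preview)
Your proposal is correct and takes a genuinely different route from the paper. The paper's proof is a bare numerical witness: it reports a single parameter choice $\mu=0.5223$, $\sigma=1.5454$, $a=0.5223$, $\Delta q=0.6$ found by exhaustive search, and simply verifies that $\epsilon=1.1703<\ln M_{\mathcal N^T}(\Delta q)=1.2417$. Your argument is structural: by rewriting Equation~\ref{necc} as $\mathrm{Cov}\bigl(Xe^{-tX},\,e^{tX}\bigr)<0$ and invoking the Chebyshev inequality for anti-monotone functions, you show that the condition holds for \emph{every} non-degenerate truncated Gaussian whose support lies in $(1/\Delta q,\infty)$, not just for one numerically discovered point. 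This buys generality and an actual explanation of \emph{why} the inequality can be satisfied, at the price of imposing the support restriction $a_0>1/\Delta q$ (note that the paper's own witness has $a=0.5223<1/0.6\approx 1.667$, so it lies outside the region your argument covers---though both are perfectly valid existence proofs). The numerical substitution you mention at the end would merely corroborate the analytic conclusion and is not strictly needed once the Chebyshev step is in place.
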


Therefore, truncated Gaussian distribution may improve over the baseline, and this can be computed by optimizing the privacy-utility trade-off using the Lagrange multiplier
function in Equation~\ref{lagrange1}. In particular, our numerical results show that, this distribution can also be effective for smaller $\epsilon$ (stronger privacy guarantees). 

\section{Proofs}
\label{proofsec}
\begin{proof}[Example~\ref{exmp1}]
\label{proofexamp}
Following Example~\ref{exmprpdp}, for a Bernoulli distributed scale parameter $b$, we have
\begin{eqnarray*}
\label{eqn10}
&\hspace{-2cm}\Prob(\mathcal M_q(d,b)\in S)\nonumber\\
&\hspace{-0.5cm}= \displaystyle\int_{\mathbb R} \frac{p}{2b_1} \cdot \mathds{1}_S\{q(d)+w\}  e^{\frac{-|w|}{b_1}} +\frac{1-p}{2b_2} \cdot \mathds {1}_S\{q(d)+w\}  e^{\frac{-|w|}{b_2}}dw  \nonumber  
\\
&\hspace{-0.5cm}=  \displaystyle\int_{\mathbb R} \big(\frac{p}{2b_1} \cdot e^{\frac{-|w|}{b_1}} +\frac{1-p}{2b_2} \cdot  e^{\frac{-|w|}{b_2}} \big) \mathds{1}_S\{q(d)+w\}dw   \nonumber \\
\end{eqnarray*}
where $\mathds{1}_{\{\cdot\}}$ denotes the indicator function. It can be verified that the term in the braces is the derivative
of $\mathbb E(e^{\frac{1}{b}\cdot -|w|})$ w.r.t. $-|w|$, and hence the
above probability can be expressed in terms of the expectation.
\end{proof}
\begin{proof}[Theorem~\ref{thm: RPLap mech}]
\label{thm3.1}
For an R$^2$DP Laplace mechanism and $\forall S\subset \mathbb R$ measurable and dataset $d$ in $\D$, we have 
\begin{eqnarray}
\label{eqn11}
&\Prob(\mathcal M_q(d,b)\in S)\nonumber\\
&=\displaystyle \int_{\mathbb R_{\geq 0}} f(b) \frac{1}{2b} \displaystyle \int_{\mathbb R} \mathds{1}_S\{q(d)+w\}  e^{\frac{-|w|}{b}} \ dw \ db \nonumber \\
&=\displaystyle \int_{\mathbb R_{\geq 0}} g(u)  \frac{u}{2} \int_{\mathbb R} \mathds{1}_S\{q(d)+w\}  e^{-|w|\cdot u} \ dw \ du  \nonumber \\
&= \displaystyle \int_{\mathbb R} \mathds{1}_S\{q(d)+w\} \int_{\mathbb R_{\geq 0}} g(u)  \frac{u}{2} e^{-|w|\cdot u} \ du \ dw \nonumber \\
&= \displaystyle \int_{\mathbb R} \mathds{1}_S\{q(d)+w\} \frac{1}{2} \frac{d M_u(t)}{dt}|_{t=-|w|} \ dw \nonumber \\
&= \frac{1}{2} \displaystyle \int_{S} \frac{d M_u(t)}{dt}|_{t=-|x-q(d)|} dx 
\end{eqnarray}
\begin{eqnarray}
\label{lR$^2$DPlap1}
&\hspace{-0.4cm}=\frac{1}{2} \cdot \Big[-M_u(-|x-q(d)|)|_{S_{\geq q(d)}}+M_u(-|x-q(d)|)|_{S_{< q(d)}}\Big] 
\end{eqnarray}
where $u=b^{-1}$, is reciprocal of random variable $b$ and $g(u)=\frac{1}{u^2}\cdot f(\frac{1}{u})$. Note that $M_u(t)$ is the MGF of random variable $u$ which is identical with $M_{\frac{1}{b}}(t)$.
\end{proof}
\begin{proof}[Theorem~\ref{simple DP}]
To prove this theorem, we first need to give two lemmas on the properties of R$^2$DP Laplace mechanism and MGFs. 
\begin{lem}\label{thm: RPLap mechDP}
 The R$^2$DP mechanism $\mathcal M_q(d,b)$, is 
\begin{equation}
\label{eqeps0}
   \ln \left[\max \limits_{\forall x\in\mathbb R} \left\{ \cfrac{\diff{M_{\frac{1}{b}}(t)}{t}|_{t=-|x-q(d)|}}{\diff{M_{\frac{1}{b}}(t)}{t}|_{t=-|x-q(d')|}} \right\}\right]\text{-differentially private.} 
\end{equation}
\end{lem}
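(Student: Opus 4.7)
The plan is to reduce the set-wise privacy inequality to a pointwise comparison of densities, using the integral representation of the output probability already derived in the proof of Theorem~\ref{thm: RPLap mech}.

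First, I would reuse Equation~\ref{eqn11} from that proof, which shows
\[
\Prob(\mathcal{M}_q(d,b)\in S)=\tfrac{1}{2}\int_{S}\diff{M_{\frac{1}{b}}(t)}{t}\bigg|_{t=-|x-q(d)|}\,dx.
\]
This identifies the density of the R$^2$DP Laplace output at point $x$ as $p_d(x)=\tfrac{1}{2}\,M'_{\frac{1}{b}}(-|x-q(d)|)$. Since $b>0$ almost surely, the RV $1/b$ is non-negative, so $M_{\frac{1}{b}}(t)=\mathbb{E}[e^{t/b}]$ is strictly increasing in $t$ wherever it is finite, which guarantees $p_d(x)>0$ and lets us form the ratio $p_d(x)/p_{d'}(x)$ without trouble.

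Next, I would apply the standard reduction from a set-wise ratio of probabilities to a pointwise ratio of densities: for any measurable $S\subset\mathbb{R}$ and any pair of adjacent datasets $d,d'$,
\[
\frac{\Prob(\mathcal{M}_q(d,b)\in S)}{\Prob(\mathcal{M}_q(d',b)\in S)}=\frac{\int_{S}p_d(x)\,dx}{\int_{S}p_{d'}(x)\,dx}\le \sup_{x\in S}\frac{p_d(x)}{p_{d'}(x)}\le \sup_{x\in\mathbb{R}}\frac{p_d(x)}{p_{d'}(x)}.
\]
The factor $\tfrac{1}{2}$ cancels in the ratio, yielding exactly the expression inside the logarithm in~\eqref{eqeps0}. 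Taking logs on both sides and matching with Definition~\ref{def: differential privacy original} (Equation~\ref{eq: standard def approximate DP original}) then gives the claimed $\epsilon$-DP bound.

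The main obstacle is essentially a justification issue rather than a calculation: I must ensure the supremum is finite and that the pointwise-to-integral step is tight enough to be useful. Finiteness follows from the log-convexity and positivity of MGFs on their domain of definition, together with the fact that $|x-q(d)|$ and $|x-q(d')|$ differ by at most $\Delta q$ (which will ultimately be needed when the supremum is evaluated in Theorem~\ref{simple DP}). A small subtlety worth flagging is that $-|x-q(d)|$ is non-positive, so $M_{\frac{1}{b}}(-|\cdot|)$ is evaluated on the ``left'' side of zero where the MGF is bounded by $1$ and its derivative is bounded above by $\mathbb{E}(1/b)$; this keeps the ratio well-defined everywhere and ensures the proposed $\epsilon$ is finite whenever $\mathbb{E}(1/b)<\infty$.
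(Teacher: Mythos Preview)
Your proposal is correct and follows essentially the same route as the paper: both start from Equation~\ref{eqn11}, identify the output density as $\tfrac{1}{2}M'_{1/b}(-|x-q(d)|)$, and bound the set-wise probability ratio by the pointwise supremum of the density ratio (the paper phrases this as multiplying and dividing inside the integral, you phrase it as a ratio of integrals, but it is the same step). Your added remarks on positivity of the density and finiteness of the supremum are sound extra care that the paper omits.
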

\begin{proof}
\label{proofthm: RPLap mechDP}
According to Equation~\ref{eqn11}, 
\begin{eqnarray*}
\label{eq13}
&\Prob(\mathcal M_q(d,b) \in S)= \frac{1}{2} \displaystyle \int_{S} \diff{M_{\frac{1}{b}}(t)}{t}|_{t=-|x-q(d)|} dx \\
&= \frac{1}{2} \displaystyle \int_{S} \cfrac{\diff{M_{\frac{1}{b}}(t)}{t}|_{t=-|x-q(d)|}}{\diff{M_{\frac{1}{b}}(t)}{t}|_{t=-|x-q(d')|}} \cdot \diff{M_{\frac{1}{b}}(t)}{t}|_{t=-|x-q(d')|} dx
\nonumber
\end{eqnarray*}
Denote by 
\begin{eqnarray*}
  & e^\epsilon =\sup{\left\{ \cfrac{\diff{M_{\frac{1}{b}}(t)}{t}|_{t=-|x-q(d)|}}{\diff{M_{\frac{1}{b}}(t)}{t}|_{t=-|x-q(d')|}} ,\forall x\in S \right\}}, \\
   &\Rightarrow \Prob(\mathcal M_q(d,b) \in S) \leq e^\epsilon  \cdot \Prob(\mathcal M_q(d',b) \in S)
\end{eqnarray*}
and the choice of $S=\mathbb R$ concludes the proof.
\end{proof}
Next, we show the log-convexity property of the first derivative of moment generating functions.
\begin{lem}
First derivative of a moment generating function defined by $\diff{M(t)}{t}=\mathbb E(z\cdot e^{zt})$ is log-convex. 
\end{lem}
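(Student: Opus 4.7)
The plan is to prove $\log M'(t)$ is convex in $t$ by verifying the defining midpoint inequality directly rather than grinding through two derivatives. The paper's context forces $z=1/b$ to be non-negative (since $b>0$ is a Laplace scale parameter), and this non-negativity is essential; I would make the hypothesis $z\ge 0$ explicit at the start of the proof.

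\textbf{Main approach (H\"older).} Fix $t_1,t_2\in\mathbb{R}$ and $\lambda\in[0,1]$, and let $t_\lambda=\lambda t_1+(1-\lambda)t_2$. Because $z\ge 0$, the integrand at $t_\lambda$ factors multiplicatively as
$$z\,e^{z t_\lambda}=\bigl(z\,e^{z t_1}\bigr)^{\lambda}\bigl(z\,e^{z t_2}\bigr)^{1-\lambda}.$$
Taking expectations and applying H\"older's inequality with conjugate exponents $1/\lambda$ and $1/(1-\lambda)$ to the two non-negative factors yields
$$M'(t_\lambda)=\mathbb{E}\!\left[(z\,e^{z t_1})^{\lambda}(z\,e^{z t_2})^{1-\lambda}\right]\le M'(t_1)^{\lambda}M'(t_2)^{1-\lambda},$$
and taking $\log$ of both sides is exactly the convexity of $\log M'$.

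\textbf{Alternative (Cauchy--Schwarz).} On the open interval where the MGF is finite, differentiation under the expectation is standard, so $\log M'$ is twice differentiable with
$$\bigl(\log M'(t)\bigr)''=\frac{M'''(t)\,M'(t)-M''(t)^2}{M'(t)^2}.$$
Non-negativity of the numerator follows from Cauchy--Schwarz applied to the two non-negative random variables $z^{1/2}e^{zt/2}$ and $z^{3/2}e^{zt/2}$, giving $M''(t)^2=\mathbb{E}(z^2 e^{zt})^2\le \mathbb{E}(z\,e^{zt})\cdot\mathbb{E}(z^3 e^{zt})=M'(t)\,M'''(t)$. I would present the H\"older version as the main proof since it is cleaner and avoids any regularity hypothesis for differentiating under the expectation.

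\textbf{Main obstacle.} The one real subtlety is the sign of $z$: if $z$ could take negative values, then $(z\,e^{zt})^{\lambda}$ is not well-defined for non-integer $\lambda$, and the factoring step fails; in the Cauchy--Schwarz variant one would also need $M'(t)>0$ to divide safely. Both difficulties dissolve because the lemma is invoked in the paper exclusively with $z=1/b>0$. Aside from stating this positivity assumption, there is no additional technical hurdle: the result is essentially a one-line consequence of H\"older once the integrand is factored.
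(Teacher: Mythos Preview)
Your main H\"older argument is correct and is essentially identical to the paper's own proof: the paper also factors $z\,e^{(\theta x_1+(1-\theta)x_2)z}$ as $(z^\theta e^{\theta x_1 z})(z^{1-\theta}e^{(1-\theta)x_2 z})$ and applies H\"older with exponents $1/\theta$ and $1/(1-\theta)$. Your version is slightly cleaner in that you make the hypothesis $z\ge 0$ explicit (the paper uses it silently when forming $z^\theta$) and you state the inequality for arbitrary $t_1,t_2\in\mathbb{R}$ rather than restricting to the non-negative half-line; the Cauchy--Schwarz alternative you offer does not appear in the paper.
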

\begin{proof}
 For real- or complex-valued random variables $X$ and $Y$, H\"{o}lder's inequality~\cite{trove.nla.gov.au/people/850680} reads; $\mathbb E(|XY|)\leq (\mathbb E(|X|)^p)^{1/p} \cdot (\mathbb E(|Y|)^q)^{1/q}$
for any $1<p,q<\infty$ with $1/p+1/q=1$. Next, for all $\theta \in (0,1)$ and $0\leq x_1,x_2 <\infty$, define $X=z ^\theta \cdot e^{\theta x_1 z}$, $Y=z ^{1-\theta} \cdot e^{(1-\theta) x_2 z}$ and $p=1/\theta$, $q=1/(1-\theta)$. Therefore, we have
\begin{equation*}
  \mathbb E(z\cdot e^{(\theta x_1 +(1-\theta) x_2 z})\leq \mathbb E(z\cdot e^{x_1 z})^{\theta} \cdot \mathbb E(z\cdot e^{x_2 z})^{1-\theta}  
\end{equation*}
which shows the definition of log-convexity holds for $M'(t)$.
\end{proof}
Back to the original proof, following the DP guarantee in Lemma~\ref{thm: RPLap mechDP}, and using triangle inequality, we have 

\begin{eqnarray*}
&e^{\epsilon}= \max \limits_{\forall x\in\mathbb R} \left\{ \frac{\mathbb E(\epsilon\cdot e^{(-|x-q(d)|\cdot \epsilon)})}{\mathbb E(\epsilon\cdot e^{(-|x-q(d')|\cdot \epsilon)})} \right\}\leq \max \limits_{\forall t\in\mathbb R_{\leq 0}} \left\{\frac{\mathbb E(\epsilon \cdot e^{( t\cdot \epsilon)})}{\mathbb E(\epsilon\cdot e^{((t-\Delta q)\cdot \epsilon)})}\right\}\\
\end{eqnarray*}
Next, we show that $f(t)=\frac{\mathbb E(\epsilon \cdot e^{( t\cdot \epsilon)})}{\mathbb E(\epsilon\cdot e^{((t-\Delta q)\cdot \epsilon)})}$ is non-decreasing w.r.t. $t$. For this purpose, we must show that
\begin{equation*}
   f'(t)=\frac{M''(t)\cdot M'(t-\Delta q) -M'(t)\cdot M''(t-\Delta q)}{M'^2(t-\Delta q)} 
\end{equation*} 
is non-negative. However, this is equivalent to show that $\frac{M''(t)}{M'(t)} \geq \frac{M''(t-\Delta q)}{M'(t-\Delta q)}$ or more generally $\frac{M''(t)}{M'(t)}$ is not-decreasing. However, following the log-convexity of first $M'(t)$, the logarithmic derivative of $M'(t)$ denoted by $\frac{M''(t)}{M'(t)}$ is non-decreasing. Thus, for all $t<0$, $f(t)\leq f(0)$, and evaluating $e^{\epsilon(t)}$ at $ t=0$, concludes our proof.
\end{proof}

\begin{proof}[Theorem~\ref{thm: RPLap mechut}]
\label{necescond}
Following Theorem~\ref{thmuselap}, an $\epsilon$-DP Laplace mechanism is $(\gamma,e^{\frac{-\gamma}{b(\epsilon)}})$-useful for all $\gamma \geq 0$, where $b(\epsilon)=\frac{\Delta q}{\epsilon}$. Therefore, for the usefulness of the baseline Laplace mechanism at $\epsilon=\ln [ \mathbb E_{\frac{1}{b}}(  e^{\epsilon(b)} ) ]$, we have 
\begin{eqnarray*}
 e^{\frac{-\gamma\cdot\ln [ \mathbb E_{\frac{1}{b}} (  e^{\epsilon(b)} ) ]}{\Delta q}}=\big(\mathbb E_{\frac{1}{b}} (  e^{\epsilon(b)} ) \big)^{\frac{-\gamma}{\Delta q}}= \big(\mathbb E_{\frac{1}{b}} (  e^\frac{\Delta q}{b} ) \big)^{\frac{-\gamma}{\Delta q}} \leq  \mathbb E_{\frac{1}{b}} \big(e^{\frac{-\gamma}{b}} \big)
\end{eqnarray*}
where the last inequality relation is verified by Jensen inequality~\cite{jensen1906fonctions} as $g(x)=x^{\frac{-\gamma}{b}}$ is a convex function. Recall the following Jensen inequality: Let $(\Omega ,\mathfrak {F},\operatorname {P})$ be a probability space, $X$ an integrable real-valued random variable and $g$ a convex function. Then
\begin{equation*}
    g(\mathbb E(X))\leq \mathbb E(g(X))
\end{equation*}
Therefore, 
\begin{eqnarray*}
 1-e^{\frac{-\gamma\cdot\ln [ \mathbb E_{\frac{1}{b}} (  e^{\epsilon(b)} ) ]}{\Delta q}}\geq  1- \mathbb E_{\frac{1}{b}} \big(e^{\frac{-\gamma}{b}} \big) =U(\ln [ \mathbb E_{\frac{1}{b}} (  e^{\epsilon(b)} ) ],\Delta q,\gamma)
\end{eqnarray*}
This completes the proof.
\end{proof}

\begin{proof}[Theorem~\ref{degenerateDP}]
For $\frac{1}{b}\sim f_{\delta,\frac{1}{b_0}}(\frac{1}{b})$, the MGF is given by $M_{\frac{1}{b}}(t)=e^{\frac{t}{b_0}}$. Following Theorem~\ref{thm: RPLap mechDP}, one can write
\begin{eqnarray*}
&e^{\epsilon}= \max\limits_{\forall x \in \mathbb R}\left\{\frac{\frac{1}{b_0} \cdot e^{\frac{-|x-q(d)|}{ b_0}}}{\frac{1}{b_0} \cdot e^{\frac{-|x-q(d')|}{b_0}}}\right\}=\max\limits_{\forall x \in \mathbb R} \left\{e^{\frac{|x-q(d')|-|x-q(d)|}{ b_0}} \right\}\\
& \leq \max\limits_{\forall x \in \mathbb R} \left\{e^{\frac{|q(d)-q(d')|}{ b_0}} \right\}=e^{\frac{\Delta q}{ b_0}}
\end{eqnarray*}
where the last inequality is from triangle inequality.

\end{proof}
\begin{proof}[Theorem~\ref{bernoulidp}]
 The R$^2$DP Laplace mechanism $\mathcal M_q(d,b)$, $\frac{1}{b} \sim f_{B,\frac{1}{b_0},\frac{1}{b_1}}(\frac{1}{b};p)$ returns with probability $p$, a Laplace mechanism with scale parameter $b_1$, and with probability $1-p$ another Laplace mechanism with scale parameter $b_2$. To this end, we are looking for 
\begin{eqnarray*}
&e^{\epsilon}= \max\limits_{\forall x \in \mathbb R}\left\{\frac{\frac{p}{b_0}\cdot e^{\frac{-|x-q(d)|}{b_0}}+\frac{1-p}{b_1}\cdot e^{\frac{-|x-q(d)|}{b_1}}}{\frac{p}{b_0}\cdot e^{\frac{-|x-q(d')|}{b_0}}+\frac{1-p}{b_1}\cdot e^{\frac{-|x-q(d')|}{b_1}}}\right\}
\end{eqnarray*}
Therefore, using triangle inequality, we have 
\begin{eqnarray*}
&e^{\epsilon_1}=\max\limits_{\forall S \in \mathbb R}\left\{\frac{p\cdot e^{\frac{-|x-q(d)|}{b_0}}+(1-p)\cdot e^{\frac{-|x-q(d)|}{b_1}}}{p\cdot e^{\frac{-|x-q(d')|}{b_0}}+(1-p)\cdot e^{\frac{-|x-q(d')|}{b_1}}} \right\}\\
& \leq \max\limits_{\forall x \geq q(d)}\left\{\frac{p\cdot e^{\frac{\Delta q-|x-q(d')|}{b_0}}+(1-p)\cdot e^{\frac{\Delta q+-|x-q(d')|}{b_1}}}{p\cdot e^{\frac{-|x-q(d')|}{b_0}}+(1-p)\cdot e^{\frac{-|x-q(d')|}{b_1}}} \right \}
\end{eqnarray*}
Let us make the substitutions $X=e^{\frac{-|x-q(d')|}{b_0}}$, $a=e^{\frac{\Delta q}{b_0}}$ and k=$\frac{b_0}{b_1} >1$. Hence, we have
 \begin{eqnarray*}
& e^{\epsilon}\leq \max\limits_{\forall X \in (0,1)}\left\{\frac{p\cdot a \cdot X+(1-p)\cdot (a\cdot X)^k}{p\cdot X+(1-p)\cdot X^k} \right\}
\end{eqnarray*}
To obtain $e^{\epsilon}$, we need to find all the critical points of $e^{\epsilon_1}(X)=\frac{p\cdot a \cdot X+(1-p)\cdot (a\cdot X)^k}{p\cdot X+(1-p)\cdot X^k}$. However, the critical points of a fractional function are the roots of the numerator of its derivative. Hence, suppose
\begin{equation*}
    \diff{e^{\epsilon}(X)}{X}=\frac{N(X)}{D(X)}
\end{equation*}
then
 \begin{eqnarray*}
& \Rightarrow N(X)= \big(p\cdot a +(1-p)\cdot k \cdot a \cdot (a\cdot X)^{k-1} \big)\\
& \cdot \big( p\cdot X+(1-p)\cdot X^k\big) - \big( p+(1-p)\cdot k \cdot X^{k-1}\big) \\&
\cdot \big(p\cdot a \cdot X+(1-p)\cdot (a\cdot X)^k \big)\\&
=p \cdot (1-p) \cdot (k-1) \cdot (a^{k-1} -1) \cdot X^k
\end{eqnarray*}
 However, all the terms in the last expression are strictly positive. Therefore, the only critical points are $X=0$ and $X=1$ and as the function is strictly increasing,
 \begin{eqnarray*}
& e^{\epsilon}\leq e^{\epsilon}(1)=p\cdot a +(1-p)\cdot (a)^k\\
&=p\cdot e^{\frac{\Delta q}{b_0}}+(1-p)\cdot e^{\frac{\Delta q}{b_1}}
\end{eqnarray*}
which is the bound in the Theorem. 
\end{proof}
\begin{proof}[Theorem~\ref{gamadist}]
\label{proofgama}
For a Gamma distribution with shape parameters $k$ and scale parameters $\theta$, the MGF at point $t$ is given as $(1-\theta\cdot t)^{-k}$. Since $\frac{1}{b} \sim f_{\Gamma}(\frac{1}{b};k,\theta)$, following Theorem~\ref{thm: RPLap mechDP}, one can write
\begin{eqnarray*}
&e^{\epsilon}= \max\limits_{\forall x \in \mathbb R}\left\{\frac{k \cdot \theta  \cdot (1+\theta\cdot |x-q(d)|)^{-k-1}}{k \cdot \theta  \cdot (1+\theta\cdot |x-q(d')|)^{-k-1}}\right\}\\
&\Rightarrow \epsilon= \max\limits_{\forall x \in \mathbb R} \  \left\{ (k+1) \cdot \ln \left[ \frac{(1+\theta\cdot |x-q(d')|)}{(1+\theta\cdot |x-q(d)|)} \right] \right \} \\
\end{eqnarray*}
to find the maximum of the $\ln$ term, denote by $X=1+\theta\cdot |x-q(d)|)$. Moreover, since  $|x-q(d')|\leq |x-q(d)|+ \Delta q$, we have 
\begin{eqnarray*}
\Rightarrow \epsilon \leq \max\limits_{\forall X \geq 1} \left\{\frac{X+\Delta q \cdot \theta}{X}\right\}
\end{eqnarray*}
However, since \begin{eqnarray*}
\forall X \geq 1, \ \frac{X+\Delta q \cdot \theta}{X}
\end{eqnarray*} is strictly decreasing, we have 
\begin{eqnarray*}
&\Rightarrow \epsilon=(k+1) \cdot \ln \big[ 1+\theta\cdot \Delta q \big]
\end{eqnarray*}
This completes the proof.
\end{proof}
\begin{proof}[Lemma~\ref{necsgama}]
\label{lemgama}
    We need to show that there exist $k$ and $\theta$ such that $(k+1)\cdot \ln (1+\Delta q \cdot \theta) < -k \cdot \ln (1-\Delta q \cdot \theta)$ , $\theta< \frac{1}{\Delta q}$. Given $\theta = \frac{1}{2 \Delta q}$, we need to show that $\exists k, k \cdot \ln (2)> (k+1) \cdot \ln (1.5)$, 
    which always holds for all $k>1.4094$.
\end{proof}

\begin{proof}[Lemma~\ref{necstrunc}]
\label{label:trunclem}
Using exhaustive search, suppose $\mu = 0.5223$,$ \sigma=1.5454$, $a= 0.5223$ and for $\epsilon=1.1703$ and $\Delta q=0.6$, we will get $\ln(M_{\mathcal{N}^T}(\Delta q))=1.2417$.
\end{proof}

\section{Lagrange Multiplier Function}
\label{sec:lag}

The Lagrange Multiplier Function (all possible linear combinations of the Gamma, uniform and truncated Gaussian distributions) is:

\begin{eqnarray}
\label{objboth}
 &\hspace{-2cm}\mathcal{L} (a_1,a_2,a_3,k, \theta,a_u,b_u,\mu,\sigma,a_{\mathcal N^T}, b_{\mathcal N^T}, \Lambda) \\    &\hspace{-2cm}= M_{\Gamma(k,\theta)}(-a_1\gamma) \cdot M_{U(a_u,b_u)}(-a_2\gamma) \nonumber\\
 &\cdot M_{\mathcal{N}^T(\mu,\sigma,a_{\mathcal N^T}, b_{\mathcal N^T})}(-a_3\gamma) +\Lambda \cdot (\ln \Bigg[\cfrac{\mathsf{N}}{\mathsf{D}}\Bigg]-\epsilon)   \nonumber
\end{eqnarray}
where the numerator and the denominator $\mathsf{N, \ D}$ are
\begin{eqnarray*}
&\hspace{-7.5cm}\mathsf{N}=\\&(a_1 \cdot k \cdot \theta)+ (a_2 \cdot \frac{a+b}{2})+(a_3 \cdot (\mu+(\cfrac{\sigma\cdot\phi(\alpha)-\phi(\beta))}{(\Phi(\beta)-\Phi(\alpha))}))
\end{eqnarray*}
\begin{eqnarray*}
&\hspace{-0.7cm}\mathsf{D}=a_1 \cdot M'_{\Gamma(k,\theta)}(-a_1\cdot \Delta q) \cdot M_{U(a_u,b_u)}(-a_2 \cdot\Delta q) \\ &\cdot M_{\mathcal{N}^T(\mu,\sigma,a_{\mathcal N^T}, b_{\mathcal N^T})}(-a_3\cdot \Delta q)\\& +a_2 \cdot M_{\Gamma(k,\theta)}(-a_1\cdot \Delta q) \cdot M'_{U(a_u,b_u)}(-a_2 \cdot\Delta q) \\ &\cdot M_{\mathcal{N}^T(\mu,\sigma,a_{\mathcal N^T}, b_{\mathcal N^T})}(-a_3\cdot \Delta q)\\& +a_3\cdot M_{\Gamma(k,\theta)}(-a_1\cdot \Delta q) \cdot M_{U(a_u,b_u)}(-a_2 \cdot\Delta q) \\ &\cdot M'_{\mathcal{N}^T(\mu,\sigma,a_{\mathcal N^T}, b_{\mathcal N^T})}(-a_3\cdot \Delta q)
\end{eqnarray*}

 \begin{figure*}[!t]
\includegraphics[width=0.9\linewidth]{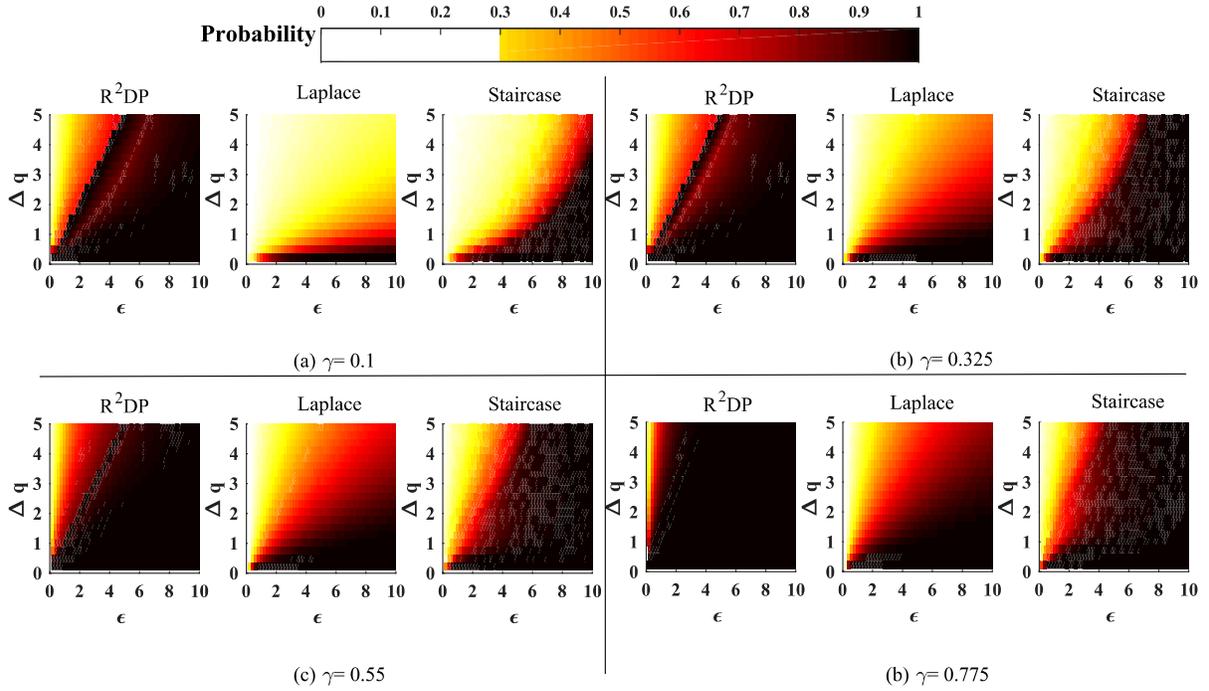}
\centering
\caption{The R$^2$DP mechanism significantly outperforms the competing Laplace and the staircase mechanisms in maximizing the usefulness metric (an example of a utility metric with no known optimal PDF).}
\label{fig:both}
\end{figure*}

\section{Numerical Analysis}
\label{numericsec}
 We also demonstrate the effectiveness of R$^2$DP through numerical results based on Algorithm~\ref{alg:analyst-actions1} (the ensemble R$^2$DP algorithm). In particular, Figure~\ref{fig:both} depicts the corresponding usefulness (the probability of the results to be within a pre-specified error bound) of the R$^2$DP, the Laplace and the Staircase mechanisms. Figure~\ref{fig:both} clearly demonstrates the fact that the R$^2$DP mechanism can significantly improve both already considered to be competing mechanisms. In particular, we observe the power of the R$^2$DP mechanism in generating very high utility results, e.g., results with more than $0.8$ probability fallen inside only $\gamma=0.1$ error-bound, owing to automatically searching a large search space of PDFs.

\section{R\texorpdfstring{$^2$}{2}DP and Other DP Mechanisms}
\label{sec:Gauss}
In this section we briefly discuss the application of the R$^{2}$DP framework in two other well-known baseline DP mechanisms. 
\subsection{R\texorpdfstring{$^2$}{2}DP Exponential Mechanism}
\label{sec:exponential}
The exponential mechanism was designed for situations in which
we wish to choose the “best” response but adding noise directly to the
computed quantity can completely destroy its value, such as setting a
price in an auction, where the goal is to maximize revenue, and adding a
small amount of positive noise to the optimal price (in order to protect
the privacy of a bid) could dramatically reduce the resulting revenue~\cite{10.1561/0400000042}. The exponential mechanism is the natural building block for answering queries with arbitrary utilities (and arbitrary non-numeric
range), while preserving differential privacy. Given some arbitrary
range $\mathcal R$, the exponential mechanism is defined with respect to some
utility function $u : \mathbb{N}^{|\mathcal X|} \times \mathcal R \rightarrow \mathbb R$, which maps database/output pairs
to utility scores. Intuitively, for a fixed database $x$, the user prefers that
the mechanism outputs some element of $\mathcal R$ with the maximum possible
utility score. Note that when we talk about the sensitivity of the utility
score $u : \mathbb{N}^{|\mathcal X|} \times \mathcal R \rightarrow \mathbb R$, we care only about the sensitivity of $u$ with
respect to its database argument; it can be arbitrarily sensitive in its range argument:
\begin{equation*}
   \Delta u \equiv \max\limits_{r\in \mathcal R} \max\limits_{x,y: \norm{x-y} \leq 1} |u(x,r)-u(y,r)|.
\end{equation*}
The intuition behind the exponential mechanism is to output each possible $r\in \mathcal R$ with probability proportional to exp($\epsilon u(x, r)/\Delta u$) and so
the privacy loss is approximately:

\begin{equation} \ln \Big(\cfrac{exp(\epsilon u(x, r)/ \Delta u)}{exp(\epsilon u(y, r)/\Delta u)}\Big)= \epsilon [u(x,r)-u(y,r)/ \Delta u]  \leq \epsilon 
\end{equation}

 The exponential mechanism is a canonical $\epsilon$-DP mechanism, meaning that it describes a class of mechanisms that includes all possible differentially private mechanisms. However, the exponential mechanism can define a complex distribution over a large arbitrary domain, and so it may not be possible to implement the exponential mechanism efficiently when the range of $u$ is super-polynomially large in the natural parameters of the problem~\cite{10.1561/0400000042}. This is the main restrictive aspect of the exponential mechanism against leveraging different accuracy metrics. However, the exponential mechanism can benefit from the additional randomization of privacy budget, to handle the complexity (excessive sharpness) of the defined probability distribution. In particular, as we mentioned earlier, compound (or mixture) distributions arise naturally where a statistical
population contains two or more sub-population which is the case for the exponential mechanism. Thus, we motivate the application of the R$^2$DP framework in designing exponential mechanisms with rather smooth but accurate distributions around each element in the range of $u$. However, further discussion on R$^2$DP exponential mechanism requires formal analysis, e.g., deriving the DP guarantee of such a mechanism.  

\subsection{R\texorpdfstring{$^2$}{2}DP and Differential Privacy Relaxations}
\label{sec:relaxed}

R$^2$DP can also be studied under various relaxations of differential privacy, e.g., $(\epsilon,\delta)$-\DP or R\'enyi Differential Privacy~\cite{mironov2017renyi} which is a privacy notion based on the R\'enyi divergence~\cite{van2014renyi}. These relaxations allow suppressing the long tails of the mechanism's distribution where pure $\epsilon$-differential privacy guarantees may not hold. Instead, they offer asymptotically smaller cumulative loss under
composition and allow greater flexibility in the selection of privacy preserving mechanisms~\cite{mironov2017renyi}. In the following, we briefly discuss the application of R$^2$DP in two of such relaxed notions of the \DP.

\subsubsection{R\texorpdfstring{$^2$}{2}DP Gaussian Mechanism}

A relaxation of $\epsilon$-differential privacy allows an additional bound $\delta$ in its defining
inequality:

\begin{defn}[($\epsilon, \delta$)-differential privacy~\cite{Dwork06_DPgaussian}]\label{def: differential}
A randomized mechanism $M: \D \times \Omega \to \R$ is 
$(\epsilon, \delta)$-differentially private if for all adjacent $d,d' \in \D$, we have
\begin{align}	\label{eq: stand}
\Prob(M(d) \in S) \leq e^{\epsilon} \Prob(M(d') \in S) + \delta, \;\; \forall S \subset \R. 
\end{align}
\end{defn}

This definition quantifies the allowed deviation ($\delta$) for the output distribution of a $\epsilon$-differentially private mechanism, when a single individual is added or removed from a dataset. A differentially private mechanism proposed in~\cite{Dwork06_DPgaussian} modifies an answer to a numerical query by adding the independent and identically distributed zero-mean Gaussian noise.

Given the definition of the $\mathcal Q$-function 
$
\mathcal Q(x) := \frac{1}{\sqrt{2 \pi}} \int_x^{\infty} e^{-\frac{u^2}{2}} du
$, we have the following theorem~\cite{Dwork06_DPgaussian, LeNyDP2012_journalVersion}.

\begin{thm}	\label{thm: Gaussian mech}
Let $q: \D \to \mathbb R$ be a query and $\epsilon>0$.
Then the Laplace mechanism $ \mathcal M_q: \D \times \Omega \to \mathbb R$ 
defined by $\mathcal M_q(d) = q(d) + w$, with $w \sim \mathcal N\left(0,\sigma^2 \right)$, 
where $\sigma \geq \frac{\Delta q}{2 \epsilon}(K + \sqrt{K^2+2\epsilon})$ and $K = \mathcal Q^{-1}(\delta)$,
satisfies $(\epsilon,\delta)$-DP.
\end{thm}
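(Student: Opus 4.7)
The plan is to reduce the $(\epsilon,\delta)$-indistinguishability requirement to a tail bound on the privacy loss random variable (PLRV) associated with the Gaussian noise, compute that tail probability exactly in terms of $\mathcal Q$, and then translate it into a quadratic inequality in $\sigma$ whose positive root is precisely the stated threshold. Throughout, fix adjacent $d,d'\in\D$ and set $\Delta := q(d')-q(d)$, so $|\Delta|\leq\Delta q$; by swapping $d$ and $d'$ I may assume WLOG $\Delta>0$, and the opposite direction will follow by symmetry.

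First, I write the Gaussian densities $p_d(x)=\tfrac{1}{\sqrt{2\pi}\sigma}\exp\!\bigl(-(x-q(d))^2/(2\sigma^2)\bigr)$ and $p_{d'}$ analogously, and form the PLRV $L(x):=\ln\bigl(p_d(x)/p_{d'}(x)\bigr)$. Expanding the difference of squares yields $L(x)=\tfrac{\Delta^2+2\Delta(q(d)-x)}{2\sigma^2}$, which under $x\sim p_d$ (i.e.\ $x=q(d)+w$ with $w\sim\mathcal N(0,\sigma^2)$) simplifies to the affine expression $L=(\Delta^2-2\Delta w)/(2\sigma^2)$. The event $\{L>\epsilon\}$ therefore becomes the half-line $\{w<\Delta/2-\epsilon\sigma^2/\Delta\}$, whose $\mathcal N(0,\sigma^2)$-probability, after standardising and using $1-\Phi=\mathcal Q$, evaluates to $\mathcal Q\!\bigl(\tfrac{\epsilon\sigma}{\Delta}-\tfrac{\Delta}{2\sigma}\bigr)$.

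Next, I maximise this tail probability over admissible $\Delta$. Since $\tfrac{\epsilon\sigma}{\Delta}-\tfrac{\Delta}{2\sigma}$ is strictly decreasing in $\Delta>0$ and $\mathcal Q$ is strictly decreasing, the worst case is $\Delta=\Delta q$. Requiring $\mathcal Q\!\bigl(\tfrac{\epsilon\sigma}{\Delta q}-\tfrac{\Delta q}{2\sigma}\bigr)\leq\delta$ and applying $\mathcal Q^{-1}$ with $K=\mathcal Q^{-1}(\delta)$ yields $\tfrac{\epsilon\sigma}{\Delta q}-\tfrac{\Delta q}{2\sigma}\geq K$. Clearing denominators produces the quadratic $2\epsilon\sigma^2-2K\Delta q\,\sigma-(\Delta q)^2\geq 0$, and taking the positive root recovers exactly $\sigma\geq\tfrac{\Delta q}{2\epsilon}\!\bigl(K+\sqrt{K^2+2\epsilon}\bigr)$.

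Finally, to lift the tail bound on $L$ back to the DP definition~(\ref{eq: stand}), for any measurable $S$ I split $S=S_1\sqcup S_2$ with $S_2=\{x\in S:L(x)>\epsilon\}$; on $S_1$ the pointwise inequality $p_d(x)\leq e^\epsilon p_{d'}(x)$ gives $\Prob(\mathcal M_q(d)\in S_1)\leq e^\epsilon\,\Prob(\mathcal M_q(d')\in S_1)$, while on $S_2$ the tail bound gives $\Prob(\mathcal M_q(d)\in S_2)\leq\delta$; summing yields the required inequality. The main subtlety I anticipate is confirming that a one-sided tail analysis really suffices: the DP inequality is asymmetric in $d,d'$, so having fixed the orientation $\Delta>0$ I only need an upper tail bound on $L$, and the $\Delta<0$ case reproduces the identical $\sigma$ threshold after swapping $d\leftrightarrow d'$ (which flips the sign of $L$ and reruns the same argument verbatim).
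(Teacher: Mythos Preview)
Your proof is correct and is precisely the standard privacy-loss tail-bound argument for the Gaussian mechanism (as given, e.g., in Le~Ny--Pappas); the paper itself does not supply a proof of this theorem but simply cites \cite{Dwork06_DPgaussian, LeNyDP2012_journalVersion}, so there is nothing to compare against beyond noting that your derivation matches the argument in those references.
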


We define $\kappa_{\delta,\epsilon} = \frac{1}{2 \epsilon} (K+\sqrt{K^2+2\epsilon})$, then the standard deviation $\sigma$ in Theorem \ref{thm: Gaussian mech}
can be written as $\sigma(\delta,\epsilon) = \kappa_{\delta,\epsilon} \Delta q$.
It can be shown that $\kappa_{\delta,\epsilon}$ behaves roughly as $O(\ln(1/\delta))^{1/2}/\epsilon$.
For example, to ensure $(\epsilon,\delta)$-differential privacy with $\epsilon = \ln(2)$ and $\delta = 0.05$, 
the standard deviation of the injected Gaussian noise should be about $2.65$ times the $\ell_1$-sensitivity of $q$.

\begin{thm}
\label{gausut}
The Gaussian Mechanism in Theorem~\ref{thm: Gaussian mech} 
is $(\gamma,2\cdot \mathcal Q \big(\frac{ \gamma}{\sigma(\delta,\epsilon)}\big))$-useful.
\end{thm}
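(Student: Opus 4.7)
The plan is to reduce the usefulness claim to a standard Gaussian tail-probability calculation. By Definition~\ref{defn:useful}, the mechanism is $(\gamma,\zeta)$-useful with $\zeta = \Prob(|\mathcal M_q(d) - q(d)| > \gamma)$; by construction in Theorem~\ref{thm: Gaussian mech}, $\mathcal M_q(d) - q(d) = w \sim \mathcal N(0, \sigma^2)$ with $\sigma = \sigma(\delta,\epsilon) = \kappa_{\delta,\epsilon}\,\Delta q$. So it suffices to compute the two-sided tail of a zero-mean Gaussian at the level $\gamma$.

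Concretely, I would first use symmetry of the Gaussian density about zero to write
\[
\Prob(|w| > \gamma) \;=\; \Prob(w > \gamma) + \Prob(w < -\gamma) \;=\; 2\,\Prob(w > \gamma).
\]
Then I would standardize the variable by setting $z = w/\sigma \sim \mathcal N(0,1)$, so that
\[
\Prob(w > \gamma) \;=\; \Prob\!\left(z > \tfrac{\gamma}{\sigma(\delta,\epsilon)}\right) \;=\; \mathcal Q\!\left(\tfrac{\gamma}{\sigma(\delta,\epsilon)}\right),
\]
using the definition $\mathcal Q(x) = \tfrac{1}{\sqrt{2\pi}}\int_x^{\infty} e^{-u^2/2}\,du$ given immediately before Theorem~\ref{thm: Gaussian mech}. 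Combining the two displays yields $\zeta = 2\,\mathcal Q(\gamma/\sigma(\delta,\epsilon))$, which is exactly the claimed error probability, and finishes the proof.

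There is no genuine obstacle here: the result is essentially bookkeeping on top of the Gaussian CDF, with the symmetry step and the scaling step being the only two substantive moves. The mildest care needed is to note that $\gamma \geq 0$ (otherwise the usefulness statement is vacuous), so that $\Prob(w > \gamma) = \mathcal Q(\gamma/\sigma)$ rather than $1 - \mathcal Q(|\gamma|/\sigma)$; once $\gamma \geq 0$ is assumed (as is standard for an error bound), the identity is unambiguous. The proof does not use the specific form of $\sigma(\delta,\epsilon)$ coming from Theorem~\ref{thm: Gaussian mech}; it only uses that the noise is Gaussian with that standard deviation, so the two parts of the theorem (privacy and usefulness) are cleanly decoupled.
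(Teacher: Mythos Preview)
Your proof is correct and is precisely the standard two-line argument: symmetry of the Gaussian gives the factor of $2$, and standardization converts the tail to $\mathcal Q(\gamma/\sigma)$. The paper in fact states Theorem~\ref{gausut} without proof, treating it as an immediate consequence of the Gaussian CDF, so your write-up supplies exactly the omitted calculation.
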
 
Similar to our R$^2$DP Laplace mechanism, we can formulate an optimization problem for the R$^2$DP model using Gaussian mechanism. Therefore, using Theorems~\ref{thm: Gaussian mech} and \ref{gausut}, we have the following. 
\begin{coro}
\label{co2}
Denote by $u$, the set of parameters for a probability distribution $f_{\sigma}$. Then, the optimal usefulness of an R$^2$DP Gaussian mechanism utilizing $f_{\sigma}$, at each quadruplet $(\epsilon, \delta, \Delta q, \gamma)$ is
\begin{eqnarray}
\small
    \label{gen:gaus}
       &\hspace{-0.5cm}  U_f(\epsilon, \delta, \Delta q, \gamma)=\max\limits_{u\in \mathbb{R}^{|u|}} \big (1- 2\cdot \mathbb{E}_{\sigma}(Q \big(\frac{\gamma}{\sigma(\delta,\epsilon)}\big) \big)) \nonumber\\
& \text{subject to     } \\
& \max \limits_{\forall S \in \R} \left\{\frac{\Prob(\mathcal M_q(d,\sigma)\in S)}{\Prob(\mathcal M_q(d',\sigma)\in S)}\right\}=\epsilon,  \nonumber\\
& \mathbb{E}_{\sigma}(Q(\epsilon \sigma -\frac{1}{2\sigma}))=\delta  \nonumber
    \end{eqnarray}
\end{coro}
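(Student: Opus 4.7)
The proof strategy mirrors that of Corollary~\ref{co1} (the R$^2$DP Laplace case), adapted to the $(\epsilon,\delta)$-DP Gaussian setting. The R$^2$DP Gaussian mechanism is by construction a compound mechanism whose output is a mixture of single-fold Gaussians with scale $\sigma \sim f_\sigma$, so both its usefulness and its privacy parameters can be obtained by averaging the corresponding single-fold quantities over $f_\sigma$ and then optimizing over the parameters $u$ of $f_\sigma$.

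For the objective, fix $\sigma$: Theorem~\ref{gausut} yields $\Prob(|\mathcal{M}_q(d) - q(d)| \leq \gamma \mid \sigma) = 1 - 2Q(\gamma/\sigma(\delta,\epsilon))$. Applying the compound-distribution law (Equation~\ref{probcomp}) to integrate over $\sigma \sim f_\sigma$ gives the compound usefulness $1 - 2\,\mathbb{E}_\sigma[Q(\gamma/\sigma(\delta,\epsilon))]$; maximizing over $u \in \mathbb{R}^{|u|}$ produces the stated formula.

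For the two constraints, the first encodes the $\epsilon$ portion of $(\epsilon,\delta)$-DP: the worst-case ratio between the compound output probabilities on adjacent datasets is bounded by $e^\epsilon$, derived in the same manner as Lemma~\ref{thm: RPLap mechDP} but with the Gaussian density replacing the Laplace density. The second encodes $\delta$: for a single-fold Gaussian of scale $\sigma$ and sensitivity $\Delta q = 1$, the privacy loss random variable $\ln(p_d(x)/p_{d'}(x))$ is itself Gaussian with mean $1/(2\sigma^2)$ and variance $1/\sigma^2$, so the per-$\sigma$ tail is $\Prob(\text{privacy loss} > \epsilon \mid \sigma) = Q(\epsilon\sigma - 1/(2\sigma))$; averaging over $f_\sigma$ gives the compound mechanism's $\delta$.

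The main obstacle is justifying that the compound $\delta$ exactly equals the expectation of the per-$\sigma$ $\delta$-values. The $(\epsilon,\delta)$-DP definition quantifies uniformly over all test sets $S$, so one must argue that a single ``bad event'' suffices across all $\sigma$ in the mixture. Fortunately, the event that the privacy loss exceeds $\epsilon$ has the same monotone (halfspace) structure for every component Gaussian, so the compound tail mass can be written as an expected component tail by Fubini, making the reduction rigorous; this is the main technical step beyond what is needed for Corollary~\ref{co1}.
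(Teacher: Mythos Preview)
Your proposal is correct and aligns with the paper's approach: the paper provides essentially no proof for this corollary, merely stating that it follows ``using Theorems~\ref{thm: Gaussian mech} and~\ref{gausut}'' by analogy with the Laplace case (Corollary~\ref{co1}). Your derivation of the objective via Theorem~\ref{gausut} and averaging, and of the $\delta$ constraint via the Gaussian privacy-loss variable, supplies the details the paper omits; your Fubini/halfspace argument for tightness of the compound $\delta$ is in fact more careful than anything the paper offers.
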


\subsubsection{R\texorpdfstring{$^2$}{2}DP and R\'enyi Differential Privacy}
\label{renyiDPsec}

Despite its notable advantages in numerous applications,
the definition of ($\epsilon,\delta$)-differential privacy has the following two limitations. 

First, $(\epsilon, \delta)$-differential privacy was applied to the analysis of the Gaussian mechanism~\cite{Dwork06_DPgaussian}. In contrast
to the Laplace mechanism (whose privacy guarantee is characterized tightly and accurately by $\epsilon$-differential privacy),
a single Gaussian mechanism satisfies a curve of $(\epsilon(\delta), \delta)$-differential privacy definitions~\cite{Dwork06_DPgaussian}. Picking any one point on this
curve may leave out important information about the mechanism's actual behavior \cite{mironov2017renyi}.

Second, $(\epsilon, \delta)$-differential privacy 
also has limitations on the composition of differential privacy \cite{McSherry09}. 
By relaxing the guarantee to $(\epsilon, \delta)$-differential privacy, advanced composition allows tighter analyses for
compositions of (pure) differentially private mechanisms. Iterating
this process, however, quickly leads to a combinatorial explosion of parameters, as each application of an advanced
composition theorem leads to a wide selection of possibilities
for $(\epsilon(\delta), \delta)$-differentially private guarantees. 

To address these shortcomings, R\'enyi differential privacy was proposed as a natural relaxation of differential privacy in~\cite{mironov2017renyi}. 

\begin{defn}[($\alpha, \epsilon$)-RDP]
A randomized mechanism $M: \D \times \Omega \to \R$ is said to have $\epsilon$-R\'enyi differential privacy of order $\alpha$, or ($\alpha$, $\epsilon$)-RDP for short, if for if for all adjacent $d,d' \in \D$, we have 
$D_{\alpha} (M(d)||M(d')) \leq \epsilon$, where $D_{\alpha}(\cdot)$ is the (parameterized) R\'enyi divergence~\cite{van2014renyi}.
\end{defn} 

Compared to $(\epsilon, \delta)$-differential
privacy, R\'enyi differential privacy is a strictly stronger privacy definition. It offers an operationally convenient and quantitatively
accurate way of tracking cumulative privacy loss
throughout execution of a standalone differentially private mechanism and across many such mechanisms~\cite{mironov2017renyi}. Next, we give the R\'enyi differential privacy guarantee of our R$^2$DP mechanism and show that the privacy loss of R$^2$DP under R\'enyi DP can significantly (asymptotically for small $\alpha$) outperform Laplace, Gaussian and Random Response mechanisms. 

\begin{thm}
\label{thmss}
If real-valued query $q$ has sensitivity
$1$, then the R$^2$DP mechanism $\mathcal{M}_q$, leveraging MGF $M$, satisfies 
\[   
     \begin{cases}
       (\alpha, \frac{1}{\alpha-1} \log \left [ \frac{\alpha M(\alpha-1)+(\alpha-1) M(-\alpha)}{2\alpha-1} \right])\text{-RDP}. &\quad if \  \alpha >1\\
      (1, M'(0)+M(-1)-1)\text{-RDP}. &if \  \alpha =1 \\ 
     \end{cases}
\]
\end{thm}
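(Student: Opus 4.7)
My plan is to reduce the R$^2$DP R\'enyi divergence to the known R\'enyi divergence of the standard Laplace mechanism with a fixed scale, by means of a one-step post-processing argument. Observe that the R$^2$DP mechanism admits an equivalent two-stage implementation: first sample the scale $B$ from $f_b$, then output $q(d) + Lap(B)$. The released output is the marginal of the joint pair $(B, X)$ after discarding $B$. Let $\tilde{\mathcal M}_q(d)$ denote the augmented mechanism that releases $(B, X)$. Since marginalization over $B$ is a deterministic post-processing, the data-processing inequality for R\'enyi divergence gives
\[
D_\alpha(\mathcal M_q(d) \,\|\, \mathcal M_q(d')) \;\le\; D_\alpha(\tilde{\mathcal M}_q(d) \,\|\, \tilde{\mathcal M}_q(d')).
\]

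Next I would compute the augmented bound. Because the marginal of $B$ is the same under $d$ and $d'$, the joint integrand factors as $\tilde P(b,x)^\alpha \tilde Q(b,x)^{1-\alpha} = f_b(b)\, P_b(x)^\alpha Q_b(x)^{1-\alpha}$, where $P_b$ and $Q_b$ are Laplace$(b)$ densities centered at $q(d)$ and $q(d')$. Fubini then yields
\[
\iint \tilde P^\alpha \tilde Q^{1-\alpha}\, dx\, db \;=\; \mathbb E_B\!\left[\int P_B(x)^\alpha Q_B(x)^{1-\alpha}\, dx\right].
\]
For each fixed $b$, the inner integral is the classical Laplace R\'enyi cross-moment at $\Delta q = 1$; a three-region case split on the signs of $x$ and $x-1$ computes it to $[\alpha e^{(\alpha-1)/b} + (\alpha-1) e^{-\alpha/b}]/(2\alpha-1)$ (exactly the quantity behind Mironov's Laplace RDP entry in Table~\ref{tablerenyi}). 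Taking expectation over $B$ and invoking the MGF identity $M(t) = \mathbb E[e^{t/B}]$ turns the average into $[\alpha M(\alpha-1) + (\alpha-1) M(-\alpha)]/(2\alpha-1)$; applying $\tfrac{1}{\alpha-1}\log(\cdot)$ delivers the claimed bound for $\alpha>1$. For the boundary case $\alpha = 1$, the same augmentation yields $D_1(\mathcal M_q(d) \,\|\, \mathcal M_q(d')) \le \mathbb E_B[D_1(P_B \| Q_B)]$, and the standard Laplace KL value $1/b + e^{-1/b} - 1$ averages to $M'(0) + M(-1) - 1$ via $\mathbb E[1/B] = M'(0)$ and $\mathbb E[e^{-1/B}] = M(-1)$.

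The substantive obstacle I expect is conceptual rather than computational: identifying that the ``correct'' object to upper-bound is the joint R\'enyi divergence of the augmented mechanism, after which the key cancellation $f_b(b)^\alpha f_b(b)^{1-\alpha} = f_b(b)$ (coming from the shared $B$-marginal) collapses everything to a conditional, fixed-scale Laplace calculation. Once that viewpoint is adopted, the remaining work is routine: the three-region piecewise-exponential integral and a one-line MGF rewrite. The inequality direction is exactly what is desired, since RDP guarantees are always upper bounds on $D_\alpha$. A minor sanity check I would include is that specializing $f_b$ to a point mass at $b_0$ recovers Mironov's Laplace RDP, which it does because $M(t) = e^{t/b_0}$ in that case, validating that no slack is introduced when the second-fold randomization is degenerate.
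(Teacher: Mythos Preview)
Your proposal is correct and follows the same route as the paper: both reduce to Mironov's Laplace RDP result (Corollary~2 in \cite{mironov2017renyi}) and pass from $e^{t/b}$ to its expectation $M(t)=\mathbb E[e^{t/B}]$ via the second-fold randomization. The paper's own proof is a two-line sketch that simply asserts the substitutions $e^{t/b}\to M(t)$ and $1/b\to M'(0)$; your data-processing argument on the augmented mechanism $(B,X)$, together with the cancellation $f_b(b)^\alpha f_b(b)^{1-\alpha}=f_b(b)$, supplies precisely the justification the paper omits for why those substitutions yield a valid upper bound on $D_\alpha$.
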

\begin{proof}
The above RDP guarantee follows Corollary 2 in~\cite{mironov2017renyi} on the RDP guarantee of the classic Laplace mechanism. In particular, the above equations are derived using the following substitutions 
$exp(t/b) \rightarrow M(t)$ and $1/b \rightarrow M'(0)$ due to the second-fold randomization of $b$. 
\end{proof}

\section{Other Applications of R\texorpdfstring{$^2$}{2}DP}
\label{sec:RPDP as a Patch to Existing Work1}

R$^2$DP represents a very general concept which could potentially be applied in a broader range
of contexts. In general, applying R$^2$DP to design more application-aware mechanisms may further improve the utility of many existing solutions \cite{NissimRS07}.
We now briefly discuss some of the potential applications as follows.

\vspace{0.05in}

\noindent\textbf{R$^2$DP and Query-Workload Answering} \cite{LiMHMR15}. Given a workload (aka. a batch of queries), the matrix mechanism
generates a different set of queries, called \textit{strategy
queries}, which are answered using a standard Laplace or Gaussian
mechanism. The noisy answers to the workload queries can then be
derived from the noisy answers to the strategy
queries~\cite{li2010optimizing}. This two-stage process can result in a correlated noise distribution that preserves differential privacy
and also increases utility.

Given a triplet ($\epsilon$, query, metric), R$^2$DP can be applied
to replace the Laplace or Gaussian mechanism for answering the
strategy queries of the matrix mechanism. As a result, R$^2$DP will
provide additional improvement in utility (in terms of the TotalError
as defined in \cite{li2010optimizing}) over the improvement already
provided by the matrix mechanism.  More specifically, we compare the
total errors of Laplace and R$^2$DP mechanisms in
Table~\ref{tablematrix} for specific workloads of interest (similar to
those considered in~\cite{li2010optimizing}). These two workloads were
analyzed in \cite{li2010optimizing} using two $n$-sized query
strategies, each of which can be envisioned as a recursive
partitioning of the domain based on the Haar
wavelet~\cite{xiao2010differential}. We denote by $f(\epsilon,\Delta
q)$ the improvement in the TotalError for applying an R$^2$DP noise
instead of a Laplace noise in the matrix mechanism. For instance,
leveraging the results of R$^2$DP (w.r.t. $\ell_1$ or $\ell_2$) shown
in Section~\ref{secl1l2}, for a workload of size $n=6$, at
$\epsilon=2.3$, the improvement for range queries ($\Delta q= 36$) and
predicate queries ($\Delta q= 64$) are $\sim$20\% and $\sim$10\%,
respectively.

\begin{table}[ht]
\caption{Total error of matrix mechanisms comparison (with R$^2$DP vs. Laplace) -- two workloads and two query strategies}
\centering
\begin{adjustbox}{width=0.48\textwidth,center}

\begin{tabular}{|c|c|c|c|}
\hline
\multicolumn{2}{|c|}{TotalError}& \multicolumn{2}{|c|}{Matrix Strategies}
\\
\hline
Mechanisms & Workload Queries &Binary Hierarchy of Sums &  Matrix of the Haar Wavelet
\\
\hline
\multirow{2}{*}{Laplace}& Range Queries &  $\Theta (n^2\log^3(n)/ \epsilon^2) $&  $\Theta (n^2 \log^3(n)/ \epsilon^2 )$
\\
 \cline{2-4} 
& Predicate Queries  &  $\Theta (n 2^n \log^2(n)/ \epsilon^2)$&  $\Theta (n 2^n\log^2(n)/ \epsilon^2)$ \\
\hline

\multirow{2}{*}{R$^2$DP}& Range Queries &  $\Theta (f(\epsilon,n^2) n^2\log^3(n)/ \epsilon^2) $&  $\Theta (f(\epsilon,n^2) n^2 \epsilon^2 \log^3(n))$
\\
 \cline{2-4} 
& Predicate Queries &  $\Theta ( f(\epsilon, 2^n) n 2^n \log^2(n)/ \epsilon^2)$&  $\Theta (f(\epsilon, 2^n) n 2^n\log^2(n)/ \epsilon^2)$ \\
\hline
\end{tabular}
\end{adjustbox}
	\label{tablematrix}
\end{table}

\noindent\textbf{R$^2$DP and Composition}. R$^2$DP may be applied for reducing the privacy leakage due to sequential or parallel querying over a dataset, of which the objective will be to maximize the number of compositions under a specified $\epsilon$-\DP constraint.

\vspace{0.05in}

\noindent\textbf{R$^2$DP and Local Differential Privacy}. 
In this context, R$^2$DP can be regarded as a new randomized response model. In
particular, the randomized response scheme presented in~\cite{Wang2016UsingRR} can be produced using R$^2$DP for the Bernoulli distribution when $b_0 \rightarrow 0$ and $b_1 \rightarrow
\infty$. Therefore, designing more efficient local differential privacy schemes using R$^2$DP is an interesting future direction.

\vspace{0.05in}

\noindent\textbf{R$^2$DP for Continual Observation Applications}. 
Providing differential privacy guarantees on data streams represents another important future
direction for R$^2$DP. As an example, the multi-input multi-output (MIMO) systems process streams of signals originated from many sensors capturing privacy-sensitive events about individuals, and statistics of interest need to be continuously published in real time~\cite{Dwork:2010:DPU:1806689.1806787, LeNyDP2012_journalVersion}, e.g., privacy-preserving traffic monitoring over multi-lane roads~\cite{Brown:2013:HPR:2525314.2525323}. In this context, R$^2$DP can leverage the constraint related to the number of inputs and the number of outputs (e.g., the sensitivity of the output of MIMO filter $G$ with $m$ inputs and $p$ outputs is proportional to the $\mathcal{H}_2$ norm of $G$ which itself is an increasing function of $m$ and $p$
~\cite{7944526}) into its model to build more efficient differentially private mechanisms for the MIMO scenarios.

\end{document}